\newcommand{\bfe}[1]{\emph{\bfseries#1}}
\newcommand{\myra}{\mbox{$\:\rightarrow\:$}}
\newcommand{\af}{\ensuremath{\textit{AF}}}
\renewcommand{\vec}[1]{\mathbf{#1}}
\newcommand{\discr}{\ensuremath{\delta}}
\newtheorem{theorem}{Theorem}
\newtheorem{proposition}{Proposition}
\newtheorem{lemma}{Lemma}
\newtheorem{claim}{Claim}
\theoremstyle{definition}
\newtheorem{example}{Example}
\newtheorem{remark}{Remark}
\newcommand{\ownnodeL}[6]{
  \pscircle[linewidth=1pt,fillstyle=solid,fillcolor=#2](#3){#1}
  \uput{.3}[#6](#3){#5}
  \pnode(#3){#4}
}
\newcommand{\gs}[1]{#1}
\begin{document}
\title{Selfishness Level of Strategic Games
\thanks{A preliminary version of this paper appeared in the \emph{Proceedings of the 5th International Symposium on Algorithmic Game Theory} \cite{AS12a}.}
}

\author{%
Krzysztof R. Apt 
\thanks{%
 Centre for Mathematics and Computer Science (CWI)
 and University of Amsterdam, The Netherlands}
\and
Guido Sch\"{a}fer
\thanks{%
 Centre for Mathematics and Computer Science (CWI)
 and VU University Amsterdam, The Netherlands}
}

\date{}
\maketitle

\begin{abstract}
  We introduce a new measure of the discrepancy in strategic games
  between the social welfare in a Nash equilibrium and in a social
  optimum, that we call \bfe{selfishness level}. It is the smallest
  fraction of the social welfare that needs to be offered to each
  player to achieve that a social optimum is realized in a pure Nash
  equilibrium. The selfishness level is unrelated to the price of
  stability and the price of anarchy and is invariant under positive linear transformations of the payoff
  functions. Also, it naturally applies to other solution concepts and
  other forms of games.
  
  We study the selfishness level of several well-known strategic
  games.  This allows us to quantify the implicit tension within a
  game between players' individual interests and the impact of their
  decisions on the society as a whole. Our analyses reveal that the
  selfishness level often provides a deeper understanding of the 
  characteristics of the underlying game that influence the players'
  willingness to cooperate.
  
  In particular, the selfishness level of
  finite ordinal potential games is finite, while that of weakly
  acyclic games can be infinite.  We derive explicit bounds on the
  selfishness level of fair cost sharing games and linear congestion
  games, which depend on specific parameters of the underlying game
  but are independent of the number of players.  Further, we show that
  the selfishness level of the $n$-players Prisoner's Dilemma is
  \gs{$c/(b(n-1)-c)$, where $b$ and $c$ are the benefit and cost for cooperation, respectively,}  
  that of the $n$-players public goods game is
  $(1-\frac{c}{n})/(c-1)$, where $c$ is the public good multiplier, and
  that of the Traveler's Dilemma game is 
  \gs{$\frac{1}{2}(b-1)$, where $b$ is the bonus.}
  Finally, the
  selfishness level of Cournot competition (an example of an infinite
  ordinal potential game), Tragedy of the Commons, and Bertrand
  competition is infinite.
\end{abstract}



\begin{flushright}
{\emph{The intelligent way to be selfish is \\ to work for the
welfare of others} \\ Dalai-Lama\footnote{\cite[p.~109]{Bow04}.}}
\end{flushright}

\section{Introduction}


The discrepancy in strategic games between the social welfare in a
Nash equilibrium and in a social optimum has been long recognized by
the economists. One of the flagship examples is Cournot competition, a
strategic game involving firms that simultaneously choose the
production levels of a homogeneous product.  The payoff functions in
this game describe the firms' profit in the presence of some
production costs, under the assumption that the price of the product
depends negatively on the total output.  It is well-known (see,
e.g.,~\citeR[pp.~174--175]{JR11}) that the price in the social optimum
is strictly higher than in the Nash equilibrium, which shows that the
competition between the producers of a product drives its price down.

In computer science the above discrepancy led to the introduction of the
notions of the \emph{price of anarchy} \cite{papa-kouts} and the
\emph{price of stability} \cite{SS03,ADK+04} that measure the ratio
between the social welfare in a worst and, respectively, a best Nash
equilibrium and a social optimum. This originated a huge research
effort aiming at determining both ratios for specific strategic games
that possess (pure) Nash equilibria.

These two notions are \emph{descriptive} in the sense that they assess the existing situation. 
Said differently, these notions quantify the discrepancy between the social welfare in a Nash equilibrium and a social optimum given the initial payoff functions. In contrast, we propose a notion that is \emph{normative} in the sense that it explains how to change these payoff functions to resolve such a discrepancy. Intuitively, we are asking the question how much of the social welfare needs to be added to the players' payoff functions so that their individual preferences can bring them to an optimal outcome for the society. 
On an abstract level, the approach that we propose here is related to one proposed by \citeA[p.~134]{Axelrod84:Evolution}, in chapter ``How to Promote Cooperation'', from where we cite: ``An excellent way to promote cooperation in a society is to teach people to care about the welfare of others.''

Our approach draws on the concept of \emph{altruistic games} (see, e.g., \citeR{Ledyard95}, and more recently \citeR{MM07}). In these games each player's payoff is modified by adding a positive fraction $\alpha$ of the social welfare in the considered joint strategy to the original payoff. 
The \bfe{selfishness level} of a game is defined as the infimum over all $\alpha \ge 0$ for which such a modification yields that a social optimum is realized in a pure Nash equilibrium.
The underlying property is monotonic in the sense that if for some $\alpha \ge 0$ 
a social optimum is a pure Nash equilibrium, then it is also the case for every $\beta \ge \alpha$.



Intuitively, the selfishness level of a game can be viewed as a
measure of the players' willingness to cooperate. A low selfishness
level indicates that the players are open to align their interests in
the sense that a small share of the social welfare is sufficient to
motivate them to choose a social optimum.  In contrast, a high
selfishness level suggests that the players are reluctant to cooperate
and a large share of the social welfare is needed to stimulate
cooperation among them. An infinite selfishness level means that
cooperation cannot be achieved through such means.

Notions like the price of stability and the price of anarchy were developed to measure the \emph{quality} of equilibria. In contrast, our notion of the selfishness level can be regarded as a measure of \emph{willingness to cooperate}. In general, these notions are incomparable (as we will also argue formally) and provide different insights into the underlying game.

Our main motivation for analyzing the selfishness level of strategic games is to gain a deeper understanding of the characteristics that influence the players' willingness to cooperate. 
As it turns out, for several games studied in this paper the selfishness level provides such insights.
To illustrate this point, we briefly elaborate on our findings for the public goods game and the fair cost sharing game.

In the public goods game there are $n$ players who want to contribute to a public good. Every player $i$ chooses an amount $s_i \in [0,b]$ that he contributes. A central authority collects all individual contributions,
multiplies their sum by $c > 1$ (for simplicity we assume here that $n
\ge c$) and distributes the resulting amount evenly among all players.
The payoff of player $i$ is thus $p_i(s) := b - s_i + \frac{c}{n}
\sum_{j} s_j$.
In the (unique) Nash equilibrium, every player attempts to ``free
ride'' by contributing $0$ to the public good (which is a dominant
strategy), while in the social optimum every player contributes the
full amount of $b$. As we will show, the selfishness level of this
game is $(1-\frac{c}{n})/(c-1)$. This bound suggests that the
temptation to free ride (i) increases as the number of players grows
and (ii) decreases as the parameter $c$ increases. Both phenomena were
observed by experimental economists, (see, e.g., \gs{the discussion in} \citeR[Section III.C.2]{Ledyard95}).
In comparison, the price of stability (which coincides with the price of anarchy) for this game is $c$. 

In a fair cost sharing game every player $i$ chooses a facility from a set of facilities $S_i \subseteq E$ available to him (for simplicity we discuss here only the case where players choose a single facility). The cost $c_e$ of every used facility $e \in E$ is shared evenly among the players using it.  
As we will prove, the selfishness level of this game is $\max\{0, \frac12 c_{\max}/c_{\min} - 1\}$, where $c_{\max}$ and $c_{\min}$ refer to the largest and smallest cost of a facility, respectively. Our analysis therefore reveals a threshold phenomenon which also makes sense intuitively: 
In order to motivate cooperation among the players it is crucial to convince the players having access to a facility with cost $c_{\min}$ to adhere to a social optimum. If $c_{\max} \le 2 c_{\min}$ this is easy because in a social optimum each such player either shares the cost of a facility $e$ with $c_e \ge c_{\min}$ with at least one other player or uses a facility of cost $c_{\min}$ exclusively by himself. Thus, it is in the self-interest of each player to cooperate and choose a social optimum in this case.
If $c_{\max} > 2 c_{\min}$ these players are reluctant to cooperate and the fraction of the social welfare that needs to be offered to them to incite cooperation grows proportionally to $c_{\max}/c_{\min}$. \citeA{ADK+04} showed that the price of stability and the price of anarchy of this game are $H_n$ and $n$, respectively, where $n$ denotes the number of players.\footnote{$H_n$ denotes the $n$th Harmonic number.} So these measures depend on the number of players. In contrast, our notion reveals a dependency on the discrepancy between the costs of the facilities.

\gs{
A large body of literature in experimental economics indicates that players have a tendency to cooperate in social dilemmas like the Prisoner's dilemma, the Traveler's dilemma or the public goods game, even though such behavior is ruled out by standard game-theoretic analysis. Several studies suggest that the willingness to cooperate depends on certain parameters of the underlying game (like group-size, magnitude of payoffs, etc.); see, e.g., \citeA{IW88,C+96,GH01,BCN05,D+08}. For example, \citeauthor{D+08}~observe that in the Prisoner's dilemma the willingness to cooperate increases with the ratio of cost over benefit for cooperation.
We therefore study the selfishness level of parameterized versions of these games. Our findings show that the selfishness level also exhibits a dependency on certain parameters of the game.}

In this paper, we define the selfishness level by taking pure Nash
equilibrium as the solution concept. This is in line with how the
price of anarchy and price of stability {were defined originally}
\cite{papa-kouts,SS03,ADK+04}. However, the definition applies equally well
  to other solution concepts and other forms of games.
We discuss these matters in the final section. 

\subsection{Our Contributions}

The main contributions presented in this paper are as follows: 

\begin{enumerate}

\item We introduce (in Section~\ref{sec:selfish}) the notion of selfishness level of a game, derive some basic properties and elaborate on some connections to other efficiency measures and models of altruism.

In particular, we show that the selfishness level of a game is unrelated to the price of stability and the price of anarchy. Moreover, the selfishness level is invariant under positive linear transformations of the payoff functions. We also show that our model is equivalent to other models of altruism that have been studied before. As a consequence, our bounds on the selfishness level directly transfer to these alternative models. 

\item We derive (in Section~\ref{sec:charac}) a characterization result that allows us to determine the selfishness level of a strategic game.

Our characterization shows that the selfishness level is determined by the maximum \emph{appeal factor} of unilateral profitable deviations from specific social optima, which we call \emph{stable}. As a result, we can focus on deviations from these stable social optima only. Intuitively, the appeal factor of a single player deviation refers to the ratio of the gain in his payoff over the resulting loss in social welfare. 

\item We use (in Section \ref{sec:examples}) our characterization result to analyze the selfishness level of several classical strategic games. 

The games that we study are fundamental and often used to illustrate the consequences of selfish behavior and the effects of competition.  A summary of our results is given in Table~\ref{tab:summary}.
In particular, we derive explicit bounds on the selfishness level of fair cost sharing games and congestion games with linear delay functions. The obtained bounds depend on specific parameters of the underlying game, which we find informative. We also show that these bounds are tight for certain instances.

\item We also show (in Section~\ref{sec:conclusions}) that our selfishness level notion naturally extends to other solution concepts and other types of games, for instance mixed Nash equilibria and extensive games.

\end{enumerate}

\begin{table}
\begin{center}
\renewcommand{\arraystretch}{1.2}
\noindent
\begin{tabular}{|p{8cm}|p{5cm}|}
\hline
\textbf{Game} & \textbf{Selfishness level} \\
\hline\hline
Ordinal potential games & finite \\
Weakly acyclic games & $\infty$ \\
Fair cost sharing games (singleton) & $\max\{0, \frac12 \frac{c_{\max}}{c_{\min}} - 1\}^{\dagger}$ \\
Fair cost sharing games (integer costs)  &   $\max\{0, \frac12 L c_{\max} - 1\}^{\dagger}$ \\
Linear congestion games (singleton) & 
$\max\{0, \frac12 \frac{\Delta_{\max} - \Delta_{\min}}{(1 - \discr_{\max}) a_{\min}} - \frac12\}^{\dagger}$ \\
Linear congestion games (integer coefficients) & $\max\{0, \frac12 (L\Delta_{\max} - \Delta_{\min} - 1)\}^{\dagger}$ \\
Prisoner's Dilemma for $n$ players & \gs{$\frac{c}{b(n-1)-c}$$^{\dagger}$} \\
Public goods game & $\max\{0, \frac{1-\frac{c}{n}}{c-1}\}^{\dagger}$ \\
Traveler's dilemma & \gs{$\frac12 (b-1)$$^{\dagger}$} \\
Cournout competition & $\infty$ \\
Tragedy of the commons & $\infty$ \\
Bertrand competition & $\infty$ \\
\hline
\end{tabular}
\caption{\label{tab:summary}
Selfishness level of the games studied in this paper.  \hfil \break
$^\dagger$ see Section~\ref{sec:examples} for the definitions of the respective parameters of the games.}
\end{center}
\end{table}

\subsection{Related Work}

There are only few articles in the algorithmic game theory literature
that study the influence of altruism in strategic games
\cite{CKKKP10,CKKS11,CK08,EMAN10,HS09}. In these works, altruistic
player behavior is modeled by altering each player's perceived payoff
in order to account also for the welfare of others. The models differ
in the way they combine the player's individual payoff with the
payoffs of the other players. All these studies are descriptive in the
sense that they aim at understanding the impact of altruistic behavior
on specific strategic games.

Closest to our work are the articles by \citeA{EMAN10} and by \citeA{CKKS11}.
\citeauthor{EMAN10} study the inefficiency of equilibria in
network design games (which constitute a special case of the cost 
sharing games considered here) with altruistic (or, as they call it, 
socially-aware) players. As we do here, they define each player's cost
function as his individual cost plus $\alpha$ times the social cost.
They derive lower and upper bounds on the price of anarchy and the
price of stability, respectively, of the modified game. In
particular, they show that the price of stability is at most $(H_n +
\alpha)/(1+\alpha)$, where $n$ is the number of players.

\citeA{CKKS11} introduce a framework to study the \emph{robust price of anarchy}, 
which refers to the worst-case inefficiency of other solution concepts such as coarse 
correlated equilibria (see \citeR{Rou09}) of altruistic extensions of strategic games. In their model, 
player $i$'s perceived cost is a convex combination of $(1-\gamma_i)$ times
his individual cost plus $\gamma_i$ times the social cost, where
$\gamma_i \in [0,1]$ is the altruism level of player $i$. If all players have
a uniform altruism level $\gamma_i = \gamma$, this model relates to the
one we consider here by setting $\alpha = \gamma/(1-\gamma)$
{(see Section~\ref{sec:model-relations} for details).}
Although not being the main focus of the paper, the authors
also provide upper bounds of $2/(1+\gamma)$ and $(1-\gamma) H_n + \gamma$
on the price of stability for linear congestion games and fair cost
sharing games, respectively.%

Note that in all three cases {mentioned above} the price of stability approaches 1 as
$\alpha$ goes to $\infty$. This seems to suggest that the
selfishness level of these games is $\infty$. However, this is not the case as our analyses reveal.

Two other models of altruism were proposed in the literature. \citeA{CK08} define the perceived cost of a player as $(1-\beta)$ times his individual cost plus $\beta/n$ times the social cost, where $\beta \in [0,1]$. \citeA{CKKKP10} define the perceived cost of player $i$ as $(1-\delta)$ times his individual cost plus $\delta$ times the sum of the costs of all other players (i.e., excluding player $i$), where $\delta \in [0,1]$. Also these two models can be shown to be equivalent to our model using simple transformations {(see Section~\ref{sec:model-relations} for details).}

Subsequently, we mention a few related approaches that are normative.
Conceptually, our selfishness level notion is related to the \emph{Stackelberg threshold} introduced by \citeA{SW09} (see also \citeR{KS09}). The authors consider network routing games in which a fraction of $\beta \in [0,1]$ of the flow is first routed centrally and the remaining flow is then routed selfishly. The Stackelberg threshold refers to the smallest value of $\beta$ that is needed to improve upon the social cost of a Nash equilibrium flow.

In a related paper, \citeA{HS09} study the minimum number, termed the \emph{optimal stability threshold}, of (pure) altruists that are needed in a congestion game to induce a Nash equilibrium as a social optimum. They show that this number can be computed in polynomial time for singleton congestion games.

In network congestion games, researchers studied the effect of imposing tolls on the edges of the network in order to reduce the inefficiency of Nash equilibria (see, e.g.,~\citeR{Beckmann:1956}).
From a high-level perspective, these approaches can also be regarded as normative. 

\gs{Recently, \citeA{C13} proposed a new normative approach to measure incentive for cooperation in symmetric games in which there is a tension between selfish and altruistic behavior. The solution concept is a pure Nash equilibrium of a transformed game in which the strategies are certain mixed strategic of the original game. These strategies depend on the incentive and risk of deviating from cooperation in the original game. Strikingly, Capraro's conclusions about the influence of the parameters in the Prisoner's Dilemma, Traveler's Dilemma and the public goods game are consistent with ours.}

There are several other papers that propose notions allowing to assess the \emph{stability} of Nash equilibria. We mention a few of them below.
\citeA{CKS11} study the inefficiency of approximate Nash equilibria in congestion games. In a \emph{$(1+\varepsilon)$-approximate} Nash equilibrium the cost of each player is at most $(1+\varepsilon)$ times the cost he experiences in every unilateral deviation. The authors derive (almost) tight bounds on the price of stability and the price of anarchy for linear (non-atomic and atomic) congestion games as a function of $\varepsilon$. In particular, they obtain a bound of $\min\{1, (1+\sqrt{3})/(\varepsilon + \sqrt{3})\}$ on the price of stability for atomic linear congestion games.
In this context, an alternative notion to assess the stability of Nash equilibria that comes to one's mind is to consider the smallest $\varepsilon \ge 0$ for which a social optimum is realized as a $(1+\varepsilon)$-Nash equilibrium. Note that the above bound implies that such an $\varepsilon$ is at most $1$ for linear congestion games.
We comment on this idea in more detail in Section~\ref{subsec:future}.

\citeA{AEY09} consider the problem of incentivizing players to participate in socially desirable matchings by adding \emph{switching costs} to player deviations. In their model, the additional cost that a player incurs by changing his strategy accounts for an $\varepsilon$ fraction of his individual cost. Adopting this viewpoint, the authors study the inefficiency of $(1+\varepsilon)$-approximate stable matchings. They derive bounds on the price of stability and the price of anarchy of $(1+\varepsilon)$-approximate stable matchings as a function of $\varepsilon \ge 0$. Related to this work is the article of \citeA{BMM10} who study the problem of computing an optimal matching having a minimum number of blocking pairs.

Furthermore, \citeA{BBM09} study the impact of advertising strategies to players in order to induce them to select more efficient equilibria. More precisely, in their model an authority first proposes a strategy to each player which is then accepted by each player with probability $\alpha$. Each accepting player adheres to the proposed strategy and all remaining players play a best response (assuming that the strategies of the accepting players are fixed). In a final step all players follow a best response dynamics until a Nash equilibrium is reached. The authors analyze the inefficiency of the resulting equilibria for fair cost sharing games, machine scheduling games and party affiliation games. In particular, for fair cost sharing games they show that the expected cost of the resulting equilibrium is at most a factor $O(\log n/\alpha)$ away from a social optimum.

\section{Selfishness Level}
\label{sec:selfish}

In this section, we formally introduce our notion of selfishness level, establish some properties and relate it to other notions of altruism.

\subsection{Definition}

A \bfe{strategic game} (in short, a game) $G = (N, \{S_i\}_{i \in N}, \{ p_i \}_{i \in N})$ is given by a set $N = \{1, \dots, n\}$ of $n > 1$ players, a non-empty set of \bfe{strategies} $S_i$ for every player $i \in N$, and a \bfe{payoff function} $p_i$ for every player $i \in N$ with $p_i : S_1 \times \dots \times S_n \myra \mathbb{R}$.
The players choose their strategies simultaneously and every player 
$i \in N$ aims at choosing a strategy $s_i \in S_i$ so as to maximize his individual payoff $p_i(s)$, where $s = (s_1, \dots, s_n)$.

We call $s \in S_1 \times \dots\times S_n$ a \bfe{joint strategy} and
denote its $i$th element by $s_i$. We
denote $(s_1, \dots, s_{i-1}, s_{i+1}, \dots, s_n)$ by $s_{-i}$ and 
similarly with $S_{-i}$. 
Further, we write $(s'_i, s_{-i})$ for $(s_1, \dots, s_{i-1}, s'_i, s_{i+1}, \dots, s_n)$, 
where we assume that $s'_i \in S_i$.  
Sometimes, when focusing on player $i$ we write $(s_i, s_{-i})$ instead of $s$.


A joint strategy $s$ is a \bfe{Nash equilibrium} if for all $i \in \{1,
\ldots, n\}$ and $s'_i \in S_i$, $p_i(s_i, s_{-i}) \geq p_i(s'_i, s_{-i})$.
Further, given a joint strategy $s$ we call the sum 
$SW(s) := \sum_{i = 1}^{n} p_i(s)$ the \bfe{social welfare} of $s$.  When the
social welfare of $s$ is maximal we call $s$ a \bfe{social optimum}.

We shall also consider a `cost' variant of the games in which we use
the cost functions, written as $c_i$, instead of the payoff functions
$p_i$. In such a setup the objective of each player is
to minimize his costs, so a joint strategy $s$ is a Nash equilibrium
if for all $i \in \{1, \ldots, n\}$ and $s'_i \in S_i$, $c_i(s_i,
s_{-i}) \leq c_i(s'_i, s_{-i})$. Further, instead of the social welfare
one considers the \bfe{social cost} of $s$, defined as $SC(s) :=
\sum_{i = 1}^{n} c_i(s)$.

Given a strategic game $G := (N, \{S_i\}_{i \in N}, \{p_i\}_{i \in N})$
and $\alpha \geq 0$ we define the game $G(\alpha) := (N, \{S_i\}_{i \in N}, \{r_i\}_{i \in N})$
by putting $r_i(s) := p_i(s) + \alpha SW(s)$. 
So when $\alpha > 0$ the payoff of each player in the $G(\alpha)$ game
depends on the social welfare of the players.  $G(\alpha)$ is then an
\bfe{altruistic version} of the game $G$.

Suppose now that for some $\alpha \geq 0$ a pure Nash equilibrium of
$G(\alpha)$ is a social optimum of $G(\alpha)$. Then we say that $G$
is \bfe{$\alpha$-selfish}.  
We define the \bfe{selfishness level} of $G$ as
\begin{equation}\label{eq:sel-lev}
\inf \{ \alpha \in \mathbb{R}_+ \; \mid \; \mbox{$G$ is $\alpha$-selfish} \}.
\end{equation}
Here we adopt the convention that the infimum of an empty set is $\infty$. 
Further, we stipulate that the selfishness level of $G$ is denoted by $\alpha^+$ iff the selfishness level of $G$ is 
$\alpha \in \mathbb{R}_+$ but $G$ is \emph{not} $\alpha$-selfish (equivalently, the infimum does not belong to the set).
We show below (Theorem~\ref{not:inf}) that pathological infinite games exist
for which the selfishness level is of this kind; none of the other studied games is of this type. 

We give some remarks before we proceed.
\begin{enumerate}

\item The above definitions refer to strategic games in which each player
$i$ maximizes his payoff function $p_i$ and the social welfare of a
joint strategy $s$ is given by $SW(s)$. These definitions obviously
apply to the case when we use cost functions and the social cost.

\item Other definitions of an altruistic version of a game are conceivable and, depending on the underlying application, might seem more natural than the one we use here. However, we show in Section~\ref{sec:model-relations} that our definition is equivalent to several other models of altruism that have been proposed in the literature.

\item The selfishness level refers to the smallest $\alpha$ such that
    \emph{some} Nash equilibrium in $G(\alpha)$ is also a social
    optimum. Alternatively, one might be interested in the smallest
    $\alpha$ such that \emph{every} Nash equilibrium in $G(\alpha)$
    corresponds to a social optimum. However, as explained in
    Section~\ref{subsec:future}, this alternative notion is not always
    very meaningful.

\item The definition extends in the obvious way to other solution concepts (e.g., mixed or correlated equilibria) and other forms of games (e.g., subgame perfect equilibria in extensive games). We briefly comment on these extensions in Section~\ref{sec:conclusions}.

\end{enumerate}

Note that the social welfare of a joint strategy $s$ in $G(\alpha)$
equals $(1 + \alpha n) SW(s)$, so the social optima of $G$ and
$G(\alpha)$ coincide.  Hence we can replace in the definition
of an $\alpha$-selfish game
the reference to a social optimum of $G(\alpha)$ by one to a social
optimum of $G$. This is what we shall do in the proofs below.

Intuitively, a low selfishness level means that the share of the
social welfare needed to induce the players to choose a social optimum
is small. This share can be viewed as an `incentive' needed to realize
a social optimum.
Let us illustrate this definition on various simple examples.

\begin{example}\label{exa:prisoners-dilemma}
\textbf{Prisoner's Dilemma}
\begin{center}
\begin{game}{2}{2}
       & $C$    & $D$\\
$C$   & \gs{$1,\phantom{-}1$}   & \gs{$-1,\phantom{-}2$} \\
$D$   & \gs{$2,-1$}             & \gs{$\phantom{-}0,\phantom{-}0$}
\end{game}
\hspace*{1cm}
\begin{game}{2}{2}
       & $C$    & $D$\\
$C$   & \gs{$3,3$}   & \gs{$0,3$}\\
$D$   & \gs{$3,0$}   & \gs{$0,0$}
\end{game}
\end{center}

Consider the Prisoner's Dilemma game $G$ (on the left) and the resulting game
$G(\alpha)$ for $\alpha = 1$ (on the right).
In the latter game the social optimum, $(C,C)$, is also a Nash equilibrium.
One can easily check that for $\alpha < 1$, $(C,C)$ is also a social
optimum of $G(\alpha)$ but not a Nash equilibrium.  So the selfishness
level of this game is 1.
\end{example}

\begin{example}
\textbf{Battle of the Sexes}
\begin{center}
\begin{game}{2}{2}
       & $F$    & $B$\\
$F$   &$2,1$   &$0,0$\\
$B$   &$0,0$   &$1,2$
\end{game}
\end{center}

Here each Nash equilibrium is also a social optimum, so 
the selfishness level of this game is 0.
\end{example}

\begin{example}\label{exa:matching}
\textbf{Matching Pennies} 

\begin{center}
\begin{game}{2}{2}
      & $H$    & $T$\\
$H$   &$\phantom{-}1,-1$   &$-1,\phantom{-}1$\\
$T$   &$-1,\phantom{-}1$   &$\phantom{-}1,-1$
\end{game}
\end{center}

Since the social welfare of each joint strategy is 0, for each $\alpha$
the game $G(\alpha)$ is identical to the original game in which no
Nash equilibrium exists. So the selfishness level of this game is $\infty$.
More generally, the selfishness level of a constant sum game is 0 
if it has a Nash equilibrium and otherwise it is $\infty$.
\end{example}

\begin{example}\label{exa:bad}
\textbf{Game with a bad Nash equilibrium} \\
The following game results from equipping each
player in the Matching Pennies game with a third strategy $E$ (for
edge):

\begin{center}
\begin{game}{3}{3}
      & $H$     & $T$       & $E$ \\
$H$   & $\phantom{-}1,-1$  & $-1,\phantom{-}1$    & $-1,-1$ \\
$T$   & $-1,\phantom{-}1$  & $\phantom{-}1,-1$    & $-1,-1$ \\
$E$   & $-1,-1$            & $-1,-1$              & $-1,-1$ 
\end{game}
\end{center}

Its unique Nash equilibrium is $(E,E)$. It is easy to check that the selfishness level of this game is $\infty$.
(This is also an immediate consequence of Theorem~\ref{thm:characterization}~(iii) below.)
\end{example}

\begin{example}\label{exa:no}
\textbf{Game with no Nash equilibrium} \\
Consider a game $G$ on the left and the resulting game
$G(\alpha)$ for $\alpha = 1$ on the right.

\begin{center}
\begin{game}{2}{2}
       & $C$    & $D$\\
$C$   &$2,2$   &$2,0$\\
$D$   &$3,0$   &$1,1$
\end{game}
\hspace*{1cm}
\begin{game}{2}{2}
       & $C$    & $D$\\
$C$   &$6,6$   &$4,2$\\
$D$   &$6,3$   &$3,3$
\end{game}\end{center}

The game $G$ has no Nash equilibrium, while in the game  $G(1)$ the social optimum, $(C,C)$, is also a Nash equilibrium.
As in the Prisoner's Dilemma game one can easily check that for $\alpha < 1$, $(C,C)$ is also a social
optimum of $G(\alpha)$ but not a Nash equilibrium.  So the selfishness level of the game $G$ is 1.
\end{example}

\subsection{Properties}\label{subsec:properties}

Recall that, given a finite game $G$ that has a Nash equilibrium, its \bfe{price of stability} is the ratio $SW(s)/SW(s')$ where $s$ is a social optimum and $s'$ is a Nash equilibrium with the highest social welfare in $G$. The  \bfe{price of anarchy} is defined as the ratio $SW(s)/SW(s')$ where $s$ is a social optimum and $s'$ is a Nash equilibrium with the lowest social welfare in $G$. 

So the price of stability of $G$ is 1 iff its selfishness level  is 0. However, in general there is no relation between these two notions.  The following observation also shows that the selfishness level of a finite game can be an arbitrary real number.

\begin{theorem}\label{not:alpha}
For every finite $\alpha > 0$ and $\beta > 1$ there is a finite game whose selfishness level is $\alpha$
and whose price of stability is $\beta$.
\end{theorem}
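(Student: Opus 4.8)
The plan is to construct a single parametrized family of games and show that by tuning the parameters we can independently hit any target selfishness level $\alpha > 0$ and any target price of stability $\beta > 1$. The natural starting point is to generalize the Prisoner's Dilemma pattern from Example~\ref{exa:prisoners-dilemma}, where a $2 \times 2$ game with a dominant-strategy Nash equilibrium that differs from the social optimum produced selfishness level $1$. First I would write down a $2$-player, $2$-strategy symmetric game with a payoff matrix depending on a few real parameters, arranged so that (i) there is a unique social optimum $(C,C)$, (ii) the unique Nash equilibrium of $G$ is $(D,D)$, and (iii) the social welfare values at these two joint strategies are in the ratio $\beta$.

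**The key computation** is to determine, as a function of the entries, exactly when $(C,C)$ becomes a Nash equilibrium in $G(\alpha)$. Since the social optima of $G$ and $G(\alpha)$ coincide, it suffices (as noted in the excerpt) to check the Nash condition for $(C,C)$ against the deviation to $D$. In $G(\alpha)$ player $i$'s payoff is $r_i(s) = p_i(s) + \alpha\, SW(s)$, so the single inequality $r_1(C,C) \geq r_1(D,C)$ unfolds to $p_1(C,C) - p_1(D,C) \geq \alpha\,(SW(D,C) - SW(C,C))$. The left side is the loss in individual payoff from cooperating rather than deviating, and the right side scales the loss in social welfare caused by the deviation. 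Solving for $\alpha$ gives a threshold $\alpha_0$ at which $(C,C)$ first becomes a Nash equilibrium; I would choose the entries so this threshold equals the prescribed $\alpha$. Because the deviating player strictly prefers $D$ at $\alpha$ slightly below $\alpha_0$ and the inequality is non-strict, the selfishness level will be exactly $\alpha_0$ and moreover $G$ will genuinely be $\alpha_0$-selfish (the infimum is attained), so no $\alpha^+$ subtlety arises.

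**Concretely**, I would try a matrix of the form where $(C,C)$ pays $(x,x)$, $(D,D)$ pays $(y,y)$, and the off-diagonal entries are set to make $D$ a dominant strategy in $G$. Two free parameters remain after imposing the Nash and social-optimum structure: one controls the payoff gap driving the deviation (hence $\alpha$) and the other controls the ratio $SW(C,C)/SW(D,D)$ (hence $\beta$). The main design constraint is decoupling: the ratio governing $\beta$ must be adjustable without disturbing the threshold governing $\alpha$, and vice versa. I expect this to be achievable by keeping the off-diagonal social welfares fixed relative to the diagonal ones while separately scaling the individual deviation incentive, but verifying that a valid assignment exists for every pair $(\alpha, \beta)$ in the stated range is the crux.

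**The main obstacle** I anticipate is the simultaneous satisfaction of all constraints for the full parameter range $\alpha > 0$, $\beta > 1$: I must ensure $D$ really is the unique Nash equilibrium of $G$ (a dominance check), that $(C,C)$ is the unique social optimum (forcing $SW(C,C)$ to strictly exceed the welfare at all three other profiles, not just at $(D,D)$), and that the computed threshold is exactly $\alpha$ while the welfare ratio is exactly $\beta$ — all with real-valued payoffs, which is harmless since payoffs need not be integers. The delicate point is that making $\beta$ large (a very inefficient equilibrium) while keeping $\alpha$ small pushes the off-diagonal entries in competing directions, so I would verify the feasible region is nonempty by exhibiting an explicit closed-form assignment of the four distinct payoff pairs in terms of $\alpha$ and $\beta$ and then checking each of the handful of inequalities directly.
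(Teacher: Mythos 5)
Your proposal is correct and follows essentially the same route as the paper: a generalized two-player Prisoner's Dilemma whose parameters independently control the deviation threshold at the unique social optimum $(C,C)$ (giving selfishness level $\alpha$, attained, so no $\alpha^+$ issue) and the welfare ratio between $(C,C)$ and the unique Nash equilibrium $(D,D)$ (giving price of stability $\beta$). The feasibility crux you flag resolves exactly as you hoped: the paper takes payoffs $(1,1)$ at $(C,C)$, $(0,x+1)$ and $(x+1,0)$ off the diagonal with $x=\alpha/(\alpha+1)$, and $(1/\beta,1/\beta)$ at $(D,D)$, so that $\beta$ enters only the $(D,D)$ entries and never interferes with the threshold inequality $1+2\gamma \geq (\gamma+1)(x+1)$, which yields $\gamma \geq \alpha$ exactly as in your key computation.
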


\begin{proof}
Consider the following generalized form, which we denote by $PD(\alpha, \beta)$, of the Prisoner's Dilemma game $G$ with $x = \frac{\alpha}{\alpha+1}$:
\begin{center}
\begin{game}{2}{2}
       & $C$    & $D$\\
$C$   & $\phantom{x+1}1,1$   & $\phantom{\frac{1}{\beta}}0,x+1\phantom{\frac{1}{\beta}}$\\
$D$   & $\phantom{1}x+1,0$   & $\phantom{0}\frac{1}{\beta},\frac{1}{\beta}\phantom{x+1}$
\end{game}
\end{center}

In this game and in each game $G(\gamma)$ with $\gamma \ge 0$,
$(C,C)$ is the unique social optimum. To compute the
selfishness level we need to consider a game $G(\gamma)$ and
stipulate that $(C,C)$ is its Nash equilibrium.
This leads to the inequality
$1 + 2 \gamma \geq (\gamma +1) (x + 1)$,
from which it follows that $\gamma \geq \frac{x}{1-x}$, i.e., $\gamma \geq \alpha$.
So the selfishness level of $G$ is $\alpha$. Moreover, its price of stability is $\beta$,
since $(D,D)$ is the only Nash equilibrium.
\end{proof}

The notion of the selfishness level is invariant under simple payoff
transformations. It is a direct consequence of the following
observation, where given a game $G$ and a value $a$ we denote by $G +
a$ (respectively, $a G$) the game obtained from $G$ by adding to each
payoff function the value $a$ (respectively, by multiplying each
payoff function by $a$).

\begin{proposition}\label{prop:transformations}
Consider a game $G$ and $\alpha \geq 0$.
\begin{enumerate}[(i)]
\item For every $a$, $G$ is $\alpha$-selfish iff $G + a$ is $\alpha$-selfish.
\item For every $a > 0$, $G$ is $\alpha$-selfish iff $a G$ is $\alpha$-selfish.
\end{enumerate}
\end{proposition}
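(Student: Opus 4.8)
The plan is to unwind the definitions of $\alpha$-selfishness and reduce both claims to a single algebraic observation about how the modified payoff functions $r_i(s) = p_i(s) + \alpha\, SW(s)$ transform under the operations $G \mapsto G+a$ and $G \mapsto aG$. Recall that $G$ is $\alpha$-selfish iff some joint strategy $s^*$ that is a social optimum of $G$ is a Nash equilibrium of $G(\alpha)$. As the excerpt already notes, the social optima of $G$ and $G(\alpha)$ coincide, so the whole statement hinges on tracking (a) which joint strategies are social optima, and (b) which are Nash equilibria of the altruistic version, under each transformation.

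For part (i), I would first observe that adding a constant $a$ to every payoff function shifts the social welfare by a fixed amount: if $G' = G+a$ has payoffs $p_i' = p_i + a$, then $SW'(s) = SW(s) + na$. Consequently the set of social optima is unchanged, since $SW'$ and $SW$ differ by the additive constant $na$. The key step is then to compute the altruistic payoffs of $G'(\alpha)$: writing $r_i'(s) = p_i'(s) + \alpha\, SW'(s) = p_i(s) + a + \alpha(SW(s) + na) = r_i(s) + a(1 + \alpha n)$, we see that each $r_i'$ differs from $r_i$ by the \emph{same} additive constant $a(1+\alpha n)$, independent of $s$. Since the Nash equilibrium condition $r_i'(s_i, s_{-i}) \ge r_i'(s_i', s_{-i})$ compares values of the same player's payoff function at strategies that differ only in coordinate $i$, and the two sides are shifted by an identical constant, the inequality holds for $G'(\alpha)$ exactly when it holds for $G(\alpha)$. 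Hence $s^*$ is a social optimum and Nash equilibrium of $G'(\alpha)$ iff it is one of $G(\alpha)$, giving the equivalence.

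For part (ii), the argument runs in parallel with a multiplicative factor $a > 0$. With $p_i' = a\, p_i$ we get $SW'(s) = a\, SW(s)$, so the social optima again coincide because $a > 0$ preserves the ordering of social welfare values (this is exactly where the hypothesis $a > 0$ is used). The altruistic payoffs satisfy $r_i'(s) = a\, p_i(s) + \alpha\, a\, SW(s) = a\, r_i(s)$, so each $r_i'$ is a positive scalar multiple of $r_i$. Multiplying both sides of the Nash inequality by $a > 0$ preserves its direction, so $s^*$ is a Nash equilibrium of $G'(\alpha)$ iff it is one of $G(\alpha)$, completing the equivalence.

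There is no real obstacle here; the only point requiring a little care is the role of the hypotheses on $a$. In part (i) no sign restriction is needed because an additive shift of all payoffs never reverses any comparison, whereas in part (ii) positivity of $a$ is essential both for preserving the set of social optima (a negative multiplier would turn maxima into minima) and for preserving the direction of the Nash inequalities. I would make sure to flag that the uniform shift $a(1+\alpha n)$ in part (i) depends on $s$ only through its constancy, since it is precisely this $s$-independence that lets the shift cancel in the pairwise Nash comparisons.
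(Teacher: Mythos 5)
Your proof is correct and follows essentially the same route as the paper: the core identity you derive, $r_i'(s) = r_i(s) + a(1+\alpha n)$ for part (i) and $r_i'(s) = a\,r_i(s)$ for part (ii), is exactly the paper's observation, from which both the Nash equilibrium and social optimum sets of the altruistic versions are seen to coincide. You merely spell out the resulting invariance of the Nash inequalities and of the social optima in more detail than the paper does.
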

\begin{proof}
(i) It suffices to note that $r[a]_i(s) = r_i(s) + \alpha a n + a$,
where $r_i$ and $r[a]_i$ are the payoff functions 
of player $i$ in the games $G(\alpha)$ and  $(G + a)(\alpha)$. 
So for every joint strategy $s$
\begin{itemize}\itemsep0pt
\item $s$ is a Nash equilibrium of $G(\alpha)$ iff it is a Nash equilibrium of $(G + a)(\alpha)$,
\item $s$ is social optimum of $G(\alpha)$ iff it is a social optimum of $(G + a)(\alpha)$.
\end{itemize}

(ii) It suffices to note that for every $a > 0$, $r[a]_i(s) =  a r_i(s)$,
where this time  $r[a]_i$ is the payoff function
of player $i$ in the game $(a G)(\alpha)$, and argue as above.
\end{proof}

Proposition~\ref{prop:transformations} implies that the selfishness
level is invariant under the game transformations of the form $t(G):=
a G + b$, where $a > 0$.  This is in contrast to the notions of the
price of anarchy and the price of stability that are invariant only
under the game transformations of the form $t(G):= a G$, where $a > 0$.

Note that the selfishness level is not invariant under a
multiplication of the payoff functions by a value $a \leq 0$.  Indeed,
for $a = 0$ each game $a G$ has the selfishness level 0. For $a < 0$
take the game $G$ from Example~\ref{exa:bad} whose selfishness level is $\infty$.
In the game $a G$ the joint strategy $(E,E)$ is both a Nash equilibrium and a social optimum,
so the selfishness level of  $a G$ is 0.

{The above proposition also allows us to frame the notion of selfishness level in the following way. 
Suppose the original $n$-player game $G$ is given to a game designer who has a fixed
budget of $SW(s)$ for each joint strategy $s$ and that the selfishness level of $G$ is 
$\alpha < \infty$. How should the game designer then distribute the budget of $SW(s)$ for each joint strategy $s$ among the players such that the resulting game has a Nash equilibrium that coincides with a social optimum?
By scaling $G(\alpha)$ by the factor $a:=1/(1 + \alpha n)$ we ensure that for each joint strategy $s$ its social welfare in the original game $G$ and in $aG(\alpha)$ is the same.
Using Proposition~\ref{prop:transformations}, we conclude that $\alpha$ is the smallest non-negative real such that $a G(\alpha)$ has a Nash equilibrium that is a social optimum.
The game $a G(\alpha)$ can then be viewed as the
intended transformation of $G$. That is, each payoff function $p_i$ of
the game $G$
is transformed into the payoff function
\[
r_i(s) := \frac{1}{1 + \alpha n} p_i(s) + \frac{\alpha}{1 + \alpha n} SW(s).
\]
}

Let us return now to the `borderline case' of the selfishness level that we denoted by $\alpha^+$.
We have the following result.

\begin{theorem}\label{not:inf}
For every $\alpha \geq 0$ there exists a game whose selfishness level is $\alpha^+$.
\end{theorem}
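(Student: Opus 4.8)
The plan is to construct, for each $\alpha \ge 0$, a two-player game with infinitely many strategies whose social optima form a countable family, each becoming a Nash equilibrium of $G(\beta)$ only above its own threshold $\alpha_k$, where $\alpha_k \downarrow \alpha$ strictly from above. Since no threshold equals $\alpha$, the set of $\beta$ for which $G$ is $\beta$-selfish will be exactly the open ray $(\alpha,\infty)$; its infimum is $\alpha$ but $G$ is not $\alpha$-selfish, which is precisely the meaning of selfishness level $\alpha^+$. This forces an infinite strategy space, as a finite game always attains the infimum.

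Concretely, fix the strictly decreasing sequence $\alpha_k = \alpha + 1/k \downarrow \alpha$ and set $x_k = \alpha_k/(1+\alpha_k) \in (0,1)$. Let each player's strategy set be $\mathbb{N} \times \{C,D\}$. For a joint strategy $((k,X),(l,Y))$ define the payoffs to be $(0,0)$ whenever $k \neq l$, and otherwise (when $k=l$) given by the $2\times 2$ Prisoner's-Dilemma-type bimatrix with $(C,C)\mapsto(1,1)$, $(C,D)\mapsto(0,1+x_k)$, $(D,C)\mapsto(1+x_k,0)$, $(D,D)\mapsto(0,0)$. This mirrors the family $PD(\alpha,\beta)$ used in the proof of Theorem~\ref{not:alpha}, tuned so that the deviation $C\to D$ from the coordinated profile $((k,C),(k,C))$ carries appeal factor exactly $\alpha_k$.

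First I would identify the social optima: each coordinated profile $s^{(k)} := ((k,C),(k,C))$ has social welfare $2$, while every other profile has social welfare strictly less than $2$ (it is $1+x_k < 2$ on the off-diagonal of sub-game $k$, and $0$ otherwise). Hence the social optima are exactly the $s^{(k)}$, $k \ge 1$, and, as noted in the excerpt, they are shared by $G$ and every $G(\beta)$. Next I would check, for a fixed $k$, exactly when $s^{(k)}$ is a Nash equilibrium of $G(\beta)$, where $r_i = p_i + \beta\,SW$. A deviation of player $1$ to $(l,Y)$ with $l \neq k$ yields $r_1 = 0 < 1+2\beta$, so it is never profitable for any $\beta \ge 0$; the deviation to $(k,D)$ yields payoff gain $x_k$ against social-welfare loss $1-x_k$, hence is unprofitable in $G(\beta)$ iff $\beta(1-x_k) \ge x_k$, i.e. iff $\beta \ge \alpha_k$. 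By symmetry the same holds for player $2$, so $s^{(k)}$ is a Nash equilibrium of $G(\beta)$ precisely when $\beta \ge \alpha_k$.

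Combining these facts, $G$ is $\beta$-selfish iff some social optimum $s^{(k)}$ is a Nash equilibrium of $G(\beta)$, i.e. iff $\beta \ge \alpha_k$ for some $k$. Since $\alpha_k \downarrow \alpha$ with every $\alpha_k > \alpha$, this holds for every $\beta > \alpha$ (choose $k$ large enough that $\alpha_k \le \beta$) but fails at $\beta = \alpha$. Thus the set $\{\beta \ge 0 : G \text{ is } \beta\text{-selfish}\}$ appearing in \eqref{eq:sel-lev} equals $(\alpha,\infty)$, its infimum is $\alpha$, and $G$ is not $\alpha$-selfish, so the selfishness level of $G$ is $\alpha^+$ by definition; the construction covers $\alpha = 0$ verbatim via $\alpha_k = 1/k$. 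The delicate points, and where I would be most careful, are verifying that the wrong-coordinate deviations are never profitable for any $\beta \ge 0$ (so they impose no extra threshold) and that no profile besides the $s^{(k)}$ attains maximal social welfare; both are immediate from the payoff choices above.
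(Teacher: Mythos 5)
Your proposal is correct. The core gluing idea (countably many rescaled Prisoner's Dilemma blocks, with payoff $0$ whenever the two players' blocks mismatch, so that cross-block deviations are never profitable and the social optima are exactly the coordinated $(C,C)$ profiles) is the same device the paper uses, but only for its base case $\alpha=0$. Where you genuinely diverge is in handling general $\alpha$: you tune the block thresholds directly, choosing $\alpha_k=\alpha+1/k$ and $x_k=\alpha_k/(1+\alpha_k)$ so that the profile $s^{(k)}$ becomes a Nash equilibrium of $G(\beta)$ exactly when $\beta\ge\alpha_k$, whence the set of $\beta$ for which $G$ is $\beta$-selfish is the open ray $(\alpha,\infty)$ for \emph{every} $\alpha\ge 0$ in one uniform construction. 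The paper instead proves the statement only for $0^+$ and then lifts it to $\alpha^+$ via two auxiliary claims of independent interest: that every game $G$ arises as $G'(\alpha)$ for some $G'$, and the composition law $G(\alpha+\beta)=G(\alpha)\bigl(\beta/(1+n\alpha)\bigr)$, which together let one translate a $0^+$ game to an $\alpha^+$ game. Your route is more elementary and self-contained (no algebra of the operator $G\mapsto G(\alpha)$ is needed, and the threshold computation $x_k/(1-x_k)=\alpha_k$ is a one-line check), while the paper's route isolates reusable structural facts about altruistic extensions that it explicitly wants on record; both are complete proofs of the theorem.
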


\begin{proof} 
We first prove the result for $\alpha = 0$. That is, we show that there exists 
a game that is $\alpha$-selfish for every $\alpha > 0$, but is not 0-selfish.
To this end we use the games $PD(\alpha, \beta)$ defined in the proof of Theorem~\ref{not:alpha}.

We construct a strategic game $G = (N, \{S_i\}_{i \in N}, \{p_i\}_{i
  \in N})$ with two players $N = \{1, 2\}$ by combining, for an
arbitrary but fixed $\beta > 1$, infinitely many $PD(\alpha, \beta)$
games with $\alpha > 0$ as follows: For each $\alpha > 0$ we rename
the strategies of the $PD(\alpha, \beta)$ game to, respectively,
$C(\alpha)$ and $D(\alpha)$ and denote the corresponding payoff
functions by $p_i^\alpha$.
The set of strategies of each player $i \in N$ is $S_i = \{C(\alpha) \mid \alpha > 0\} \cup \{D(\alpha) \mid \alpha > 0\}$
and the payoff of $i$ is defined as 
\[
p_i(s_i,s_{-i}) := 
\begin{cases}
       p^{\alpha}_i(s_i,s_{-i})    & \text{if $\{s_{i}, s_{-i}\} \subseteq \{C(\alpha), \ D(\alpha)\}$ for some $\alpha > 0$} \\
        0       & \mathrm{otherwise.}
\end{cases}
\]


Every social optimum of $G$ is of the form $(C(\alpha), C(\alpha))$,
where $\alpha > 0$. (Note that we exploit that $\beta > 1$ here.) By
the argument given in the proof of Theorem~\ref{not:alpha},
$(C(\alpha), C(\alpha))$ with $\alpha > 0$ is a Nash equilibrium in
the game $G(\alpha)$ because the deviations from $C(\alpha)$ to a
strategy $C(\gamma)$ or $D(\gamma)$ with $\gamma \neq \alpha$ yield a
payoff of 0. Thus, $G$ is $\alpha$-selfish for every $\alpha > 0$.
Finally, observe that $G$ is not $0$-selfish because every Nash
equilibrium of $G$ is of the form $(D(\alpha), D(\alpha))$, where
$\alpha > 0$.

To deal with the general case we prove two claims that
are of independent interest.

\begin{claim}\label{cla:1}
  For every game $G$ and
  $\alpha \geq 0$ there is a game $G'$ such that $G'(\alpha) = G$.
\end{claim}
\begin{proof}
We define the payoff of player $i$ in the game $G'$ by
\[
p'_i(s) := p_i(s) - \frac{\alpha}{1+ n \alpha} SW(s),
\]
where $p_i$ is his payoff in the game $G$.
Denote by $SW'(s)$ the social welfare of a joint strategy $s$ in the game $G'$
and by $r'_i$ the payoff function of player $i$ in the game $G'(\alpha)$.
Then 
\begin{align*}
r'_i(s)  & = p'_i(s) + \alpha SW'(s) \\
& =  p_i(s) - \frac{\alpha}{1+ n \alpha} SW(s) + \alpha \left(SW(s) - \frac{n \alpha}{1+ n \alpha} SW(s)\right) \\
& =  p_i(s) + \left(\alpha -\frac{\alpha}{1+ n \alpha} - \frac{n \alpha^2}{1+ n \alpha}\right) SW(s) \\
& =  p_i(s).
\end{align*}
\end{proof}

\begin{claim}\label{cla:2}
For every game $G$ and $\alpha, \beta \geq 0$
\[
G(\alpha + \beta) = G(\alpha)\left(\frac{\beta}{1 + n \alpha}\right).
\]
\end{claim}
\begin{proof}
Denote by $SW'(s)$ the social welfare of a joint strategy $s$ in the game $G(\alpha)$,
by $p_i, r_i$ and $r'$ the payoff functions of player $i$ in the games $G$, $G(\alpha)$,
and $G(\alpha)(\frac{\beta}{1 + n \alpha})$.
Then
\[
r_i(s) := p_i(s) + \alpha SW(s),
\]
so
\begin{align*}
r'_i(s)  & =  r_i(s) + \frac{\beta}{1 + n \alpha} SW'(s) \\
& =  p_i(s) + \alpha SW(s) + \frac{\beta}{1 + n \alpha} (SW(s) + n \alpha SW(s)) \\
& =  p_i(s) + \left(\alpha + \frac{\beta}{1 + n \alpha} + \frac{\beta n \alpha}{1 + n \alpha}\right) SW(s) \\
& =  p_i(s) + (\alpha + \beta) SW(s),
\end{align*}
which proves the claim.
\end{proof}

To prove the general case fix $\alpha \geq 0$ and $\beta > 0$ and take
a game $G$ whose selfishness level is $0^+$.  By Claim~\ref{cla:1}
there is a game $G'$ such that $G'(\alpha) = G$.  Then $G'$ is not
$\alpha$-selfish, since $G$ is not $0$-selfish.

Further, by Claim~\ref{cla:2}
\[
G'(\alpha + \beta) = G'(\alpha)\left(\frac{\beta}{1 + n \alpha}\right) = G\left(\frac{\beta}{1 + n \alpha}\right).
\] 
But by its choice the game $G$ is $\frac{\beta}{1 + n \alpha}$-selfish, so
$G'$ is $(\alpha+\beta)$-selfish, which concludes the proof.
\end{proof}

\subsection{Alternative Definitions}
\label{sec:model-relations}

Our definition of the selfishness level depends on the way the altruistic versions of the original game are defined. Three other models of altruism were proposed in the literature. 
As before, let $G := (N, \{S_i\}_{i \in N}, \{p_i\}_{i \in N})$ be a strategic game. Consider the following four definitions of altruistic versions of $G$:

\begin{description}
\item[Model A \cite{EMAN10}:] 
For every $\alpha \ge 0$, $G(\alpha) := (N, \{S_i\}_{i \in N}, \{r^{\alpha}_i\}_{i \in N})$ with 
\begin{equation}\label{eq:modelA}
r^\alpha_i(s) = p_i(s) + \alpha SW(s) \quad \forall i \in N.
\end{equation}

\item[Model B \cite{CK08}:] 
For every $\beta \in [0,1]$, $G(\beta) :=  (N, \{S_i\}_{i \in N}, \{r^{\beta}_i\}_{i \in N})$ with  
\begin{equation}\label{eq:modelB}
r^{\beta}_i(s) = (1- \beta) p_i(s) + \frac{\beta}{n} SW(s) \quad \forall i \in N.
\end{equation}

\item[Model C \cite{CKKS11}:] For every $\gamma \in [0,1]$, $G(\gamma) :=  (N, \{S_i\}_{i \in N}, \{r^{\gamma}_i\}_{i \in N})$ with
\begin{equation}\label{eq:modelC}
r^{\gamma}_i(s) = (1- \gamma) p_i(s) + \gamma SW(s) \quad \forall i \in N.
\end{equation}

\item[Model D \cite{CKKKP10}:] For every $\delta \in [0,1]$, $G(\delta) :=  (N, \{S_i\}_{i \in N}, \{r^{\delta}_i\}_{i \in N})$ with
\begin{equation}\label{eq:modelD}
r^{\gamma}_i(s) = (1- \delta) p_i(s) + \delta(SW(s) - p_i(s)) \quad \forall i \in N.
\end{equation}
\end{description}

Our selfishness level notion for Model A extends to Models B, C and D in the obvious way: 
We say that $G$ is \bfe{$\beta$-selfish} for some $\beta \in [0,1]$ iff a pure Nash equilibrium of the altruistic version $G(\beta)$ is also a social optimum. The \bfe{selfishness level of $G$ with respect to Model B} is then defined as the infimum over all $\beta \in [0,1]$ such that $G$ is $\beta$-selfish. The respective notions for Models C and D are defined analogously. 

The following theorem shows that the selfishness level of a game with respect to Models A, B, C and D relate to each other via simple transformations. {(Note that for Model D this transformation only applies for $\delta \in [0, \frac12]$.)}

\begin{theorem}
Consider a strategic game $G := (N, \{S_i\}_{i \in N}, \{p_i\}_{i \in N})$ and its altruistic versions defined according to Models A, B, C and D above.
\begin{enumerate}[(i)]
\item $G$ is $\alpha$-selfish with $\alpha \in \mathbb{R}_+$ iff $G$ is $\beta$-selfish with $\beta = \frac{\alpha n}{1+ \alpha n} \in [0,1]$.
\item $G$ is $\alpha$-selfish with $\alpha \in \mathbb{R}_+$ iff $G$ is $\gamma$-selfish with $\gamma = \frac{\alpha}{1+\alpha} \in [0,1]$.
\item $G$ is $\alpha$-selfish with $\alpha \in \mathbb{R}_+$ iff $G$ is $\delta$-selfish with $\delta = \frac{\alpha}{1+ 2 \alpha} \in [0,\frac12]$.
\end{enumerate}
\end{theorem}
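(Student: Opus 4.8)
The plan is to reduce all three equivalences to the single observation, already exploited in Proposition~\ref{prop:transformations}(ii), that multiplying every payoff function of a game by a fixed positive constant changes neither its pure Nash equilibria nor its social optima. The strategy is therefore to show that, for each of Models~B, C and~D, the altruistic version of $G$ with the matched parameter is literally the Model~A game $G(\alpha)$ with all payoffs rescaled by a positive factor, and then to invoke scaling-invariance.

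First I would record the algebraic factorizations. Writing $r^\alpha_i(s) = p_i(s) + \alpha SW(s)$ for the Model~A payoffs, a direct computation gives, for the matched parameters,
\[
r^\beta_i(s) = (1-\beta)\, r^\alpha_i(s), \quad r^\gamma_i(s) = (1-\gamma)\, r^\alpha_i(s), \quad r^\delta_i(s) = (1-2\delta)\, r^\alpha_i(s),
\]
where in each case one checks that the coefficient of $SW(s)$ inside the bracket equals the required $\alpha$: for Model~B the identity $\frac{\beta}{n(1-\beta)} = \alpha$ rearranges to $\beta = \frac{\alpha n}{1+\alpha n}$; for Model~C the identity $\frac{\gamma}{1-\gamma} = \alpha$ rearranges to $\gamma = \frac{\alpha}{1+\alpha}$; and for Model~D, after first rewriting $r^\delta_i(s) = (1-2\delta)p_i(s) + \delta\, SW(s)$, the identity $\frac{\delta}{1-2\delta} = \alpha$ rearranges to $\delta = \frac{\alpha}{1+2\alpha}$. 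These are routine and I would relegate them to one short display each.

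The key point is that the scalars $1-\beta$, $1-\gamma$ and $1-2\delta$ are all strictly positive on the relevant parameter ranges: for $\alpha \in \mathbb{R}_+$ we have $\beta = \frac{\alpha n}{1+\alpha n} \in [0,1)$ and $\gamma = \frac{\alpha}{1+\alpha} \in [0,1)$, while $\delta = \frac{\alpha}{1+2\alpha} \in [0,\tfrac12)$ (which is exactly why the correspondence for Model~D is restricted to $\delta \in [0,\tfrac12]$). Consequently each matched altruistic game is a positive multiple of $G(\alpha)$, so its pure Nash equilibria coincide with those of $G(\alpha)$, and because the social welfare in each model is itself a positive multiple of $SW$ (for instance $\sum_i r^\gamma_i(s) = (1+\gamma(n-1))\,SW(s)$) its social optima coincide with the social optima of $G$. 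Hence a pure Nash equilibrium of the matched altruistic game is a social optimum if and only if the same holds in $G(\alpha)$; that is, $G$ is $\alpha$-selfish under Model~A iff it is $\beta$-, $\gamma$-, respectively $\delta$-selfish, which proves (i), (ii) and (iii).

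I do not expect a genuine obstacle: once the factorizations are in place, everything follows from positivity of the rescaling factors together with scaling-invariance of Nash equilibria and social optima. The only point requiring care is the bookkeeping that ties positivity of the factor to the stated parameter range — in particular ensuring $\delta < \tfrac12$ so that $1-2\delta > 0$, the boundary case $\delta = \tfrac12$ (i.e.\ $\alpha = \infty$) being precisely the one that must be excluded.
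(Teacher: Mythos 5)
Your proposal is correct and matches the paper's argument in substance: both rest on the same key identity that the matched altruistic game is a positive scalar multiple of the Model~A game $G(\alpha)$ (your factors $1-\beta$, $1-\gamma$, $1-2\delta$ are exactly the paper's $\frac{1}{1+\alpha xy}$ for the appropriate $x,y$), combined with scaling-invariance of pure Nash equilibria and social optima via Proposition~\ref{prop:transformations}. The only difference is presentational: the paper proves a single generalized claim for payoffs $(1-x\lambda)p_i(s)+\frac{\lambda}{y}SW(s)$ with parameters $x,y>0$ and then specializes to $(x,y)=(1,n),(1,1),(2,1)$, whereas you carry out the same computation separately for each of the three models.
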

\begin{proof}
We prove the following more general claim. Fix $x, y > 0$. For every $\lambda \in [0, \frac{1}{x}]$, define $G(\lambda) :=  (N, \{S_i\}_{i \in N}, \{r^{\lambda}_i\}_{i \in N})$ with  
\begin{equation}\label{eq:ModelL}
r_i^\lambda(s) = (1-x \lambda) p_i(s) + \frac{\lambda}{y} SW(s).
\end{equation}
We show that $G$ is $\alpha$-selfish for $\alpha \ge 0$ iff $G$ is $\lambda$-selfish for $\lambda = \frac{\alpha y}{1+ \alpha xy} \in [0, \frac{1}{x}]$.

By substituting $\lambda = \frac{\alpha y}{1+ \alpha xy}$ in \eqref{eq:ModelL}, we obtain 
$$
r_i^{\lambda}(s) = \frac{1}{1+\alpha xy} p_i(s) + \frac{\alpha}{1+\alpha xy} SW(s) = \frac{1}{1 + \alpha xy} r_i^{\alpha}(s).
$$
As a consequence, since $\frac{1}{1+ \alpha xy} > 0$ for every $\alpha \ge 0$ the pure Nash equilibria and social optima, respectively, of $G(\lambda)$ and $\frac{1}{1+\alpha xy} G(\alpha)$ coincide. Thus, $G$ is $\lambda$-selfish iff $\frac{1}{1+\alpha xy} G$ is $\alpha$-selfish. Also, it follows from Proposition~\ref{prop:transformations} that $\frac{1}{1 + \alpha xy} G$ is $\alpha$-selfish iff $G$ is $\alpha$-selfish. 

Further, note that
$$
\lim_{\alpha \rightarrow \infty} \frac{\alpha y}{1+ \alpha xy} = \frac{1}{x} \left( 1 - \lim_{\alpha \rightarrow \infty} \frac{1}{1 + \alpha xy}  \right)= \frac{1}{x}.
$$
That is, the selfishness level of $G$ with respect to Model A is $\infty$ iff the selfishness level of $G$ with respect to $G(\lambda)$ is $\frac{1}{x}$.

Now, (i) follows from the above with $x = 1$ and $y = n$, (ii) follows with $x = y = 1$ and (iii) follows with $x = 2$ and $y = 1$.
\end{proof}

\section{A Characterization Result}
\label{sec:charac}

We now characterize the games with a finite selfishness level.
To this end we shall need the following notion.
We call a social optimum $s$ \bfe{stable} if for all $i \in N$ and $s'_i \in S_i$
the following holds: 
\[
\text{if $(s'_i, s_{-i})$ is a social optimum, then $p_i(s_i, s_{-i}) \geq p_i(s'_i, s_{-i})$.}
\]
In other words, a social optimum is stable if no player is better off
by unilaterally deviating to another social optimum.  

It will turn out that in order to determine the selfishness level of a
game we need to consider deviations from its stable social optima.
Consider a deviation $s'_i$ of player $i$ from a stable social optimum
$s$. If player $i$ is better off by deviating to $s'_i$, then by
definition the social welfare decreases, i.e., $SW(s_i, s_{-i}) -
SW(s'_i, s_{-i}) > 0$. If in the original game this decrease is small,
while the gain for player $i$ is large, then strategy $s'_i$ is an
attractive and socially acceptable option for player $i$.  We define
player $i$'s \bfe{appeal factor} of strategy $s'_i$ given the social
optimum $s$ as
\[
\af_i(s'_i, s) := \frac{p_i(s'_i, s_{-i}) - p_i(s_i, s_{-i})}{SW(s_i, s_{-i}) - SW(s'_i, s_{-i})}.
\]

In what follows we shall characterize the selfishness level in terms
of bounds on the appeal factors of profitable deviations from a stable social optimum.
First, note the following properties of social optima.

\begin{lemma} \label{lem:characterization}
Consider a strategic game $G := (N, \{S_i\}_{i \in N}, \{p_i\}_{i \in N})$ and $\alpha \geq 0$.
\begin{enumerate}[(i)]
\item If $s$ is both a Nash equilibrium of $G(\alpha)$ and a social optimum of $G$,
then $s$ is a stable social optimum of $G$.
\item If $s$ is a stable social optimum of $G$, then
$s$ is a Nash equilibrium of $G(\alpha)$ iff 
for all $i \in N$ and $s'_i \in U_i(s)$,
$\alpha \geq \af_i(s'_i, s)$,
where 
\begin{equation}\label{contour-set}
U_i(s) := \{s'_i \in S_i\; \mid \mbox{ $p_i(s'_i, s_{-i}) > p_i(s_i, s_{-i})$}\}. 
\end{equation}
\end{enumerate}
\end{lemma}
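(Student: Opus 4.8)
The plan is to treat the two parts separately, in each case expanding the payoffs of $G(\alpha)$ as $r_i(s) = p_i(s) + \alpha SW(s)$ and working directly with the Nash inequality. The single observation driving everything is that, since $s$ is a social optimum of $G$, for every player $i$ and every $s'_i \in S_i$ we have $SW(s_i, s_{-i}) \geq SW(s'_i, s_{-i})$, with equality precisely when $(s'_i, s_{-i})$ is itself a social optimum. This sign information is exactly what guarantees that the denominator of $\af_i(s'_i, s)$ is positive on the relevant deviations, so that I may divide by it.

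For part (i), I would fix $i$ and any $s'_i$ for which $(s'_i, s_{-i})$ is a social optimum, and use that $s$ is a Nash equilibrium of $G(\alpha)$ to write $p_i(s_i, s_{-i}) + \alpha SW(s_i, s_{-i}) \geq p_i(s'_i, s_{-i}) + \alpha SW(s'_i, s_{-i})$. Since both joint strategies are social optima, $SW(s_i, s_{-i}) = SW(s'_i, s_{-i})$, so the two $\alpha SW$ terms cancel and $p_i(s_i, s_{-i}) \geq p_i(s'_i, s_{-i})$ remains, which is precisely the stability condition. This direction is immediate.

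For part (ii), both directions follow from one equivalence obtained by rearranging the Nash inequality into $\alpha(SW(s_i, s_{-i}) - SW(s'_i, s_{-i})) \geq p_i(s'_i, s_{-i}) - p_i(s_i, s_{-i})$ and then splitting on $s'_i$. If $(s'_i, s_{-i})$ is a social optimum, the left-hand side vanishes and the inequality reduces to $p_i(s_i, s_{-i}) \geq p_i(s'_i, s_{-i})$, which holds automatically because $s$ is stable; moreover such an $s'_i$ can never lie in $U_i(s)$, since stability contradicts a strict improvement. If instead $(s'_i, s_{-i})$ is not a social optimum, the denominator $SW(s_i, s_{-i}) - SW(s'_i, s_{-i})$ is strictly positive, so dividing turns the Nash inequality into the equivalent $\alpha \geq \af_i(s'_i, s)$.

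Finally I would record that in this second case either $s'_i \in U_i(s)$, where $\af_i(s'_i, s) > 0$ and the constraint is genuine, or $s'_i \notin U_i(s)$, where $\af_i(s'_i, s) \leq 0 \leq \alpha$ renders it vacuous. Collecting the cases shows that the whole family of Nash inequalities ranging over all $s'_i \in S_i$ is equivalent to the subfamily $\alpha \geq \af_i(s'_i, s)$ ranging only over $s'_i \in U_i(s)$, which is the claimed characterization. The argument is essentially bookkeeping; the only point needing care is tracking which deviations yield a strict drop $SW(s_i, s_{-i}) > SW(s'_i, s_{-i})$ (so that division is legitimate) versus equality (handled by stability), and verifying that the improving deviations $U_i(s)$ are exactly those producing a positive, hence binding, appeal factor.
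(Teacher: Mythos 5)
Your proof is correct and takes essentially the same approach as the paper's: part (i) is the paper's argument verbatim, and part (ii) is the same direct manipulation of the Nash inequality in $G(\alpha)$, using that $s$ is a social optimum for the sign of $SW(s_i,s_{-i}) - SW(s'_i,s_{-i})$ and using stability to rule out profitable deviations to other social optima before dividing. The only cosmetic difference is the case split---you distinguish deviations by whether $(s'_i,s_{-i})$ is a social optimum, while the paper distinguishes them by whether $p_i(s'_i,s_{-i}) > p_i(s_i,s_{-i})$---but the two decompositions carry identical logical content.
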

The set $U_i(s)$, with the ``$>$'' sign replaced by ``$\geq$'',
is called an \emph{upper contour set} (see, e.g., \citeR[p.~193]{Rit02}).
Note that if $s$ is a stable social optimum, then 
$s'_i \in U_i(s)$ implies that $SW(s_i, s_{-i}) > SW(s'_i, s_{-i})$.

\begin{proof}
(i)
Suppose that $s$ is both a Nash equilibrium of $G(\alpha)$ and a social optimum of $G$.
Consider some joint strategy $(s'_i, s_{-i})$ that is a social optimum.  By the definition of
a Nash equilibrium
\[
p_i(s_i, s_{-i}) + \alpha SW(s_i, s_{-i}) \geq p_i(s'_i, s_{-i}) + \alpha SW(s'_i, s_{-i}),
\]
so $p_i(s_i, s_{-i}) \geq p_i(s'_i, s_{-i})$, as desired.

(ii) Suppose that $s$ is a stable social optimum of $G$.
Then $s$ is a Nash equilibrium of $G(\alpha)$ 
iff for all $i \in N$ and $s'_i \in S_i$
\begin{equation}
p_i(s_i, s_{-i}) + \alpha SW(s_i, s_{-i}) \geq p_i(s'_i, s_{-i}) + \alpha SW(s'_i, s_{-i}).
\label{equ:geq}  
\end{equation}

If $p_i(s_i, s_{-i}) \geq p_i(s'_i, s_{-i})$, then
(\ref{equ:geq})  holds for all $\alpha \geq 0$ since $s$ is a social optimum.
If $p_i(s'_i, s_{-i}) > p_i(s_i, s_{-i})$, then,
since $s$ is a stable social optimum of $G$, we have
$SW(s_i, s_{-i}) > SW(s'_i, s_{-i})$.

So (\ref{equ:geq}) holds for all $i \in N$ and $s'_i \in S_i$
iff
\[
\alpha \geq \frac{p_i(s'_i, s_{-i}) - p_i(s_i, s_{-i})}{SW(s_i, s_{-i}) - SW(s'_i, s_{-i})} = \af_i(s'_i, s)
\]
holds for all $i \in N$ and $s'_i \in U_i(s)$.
\end{proof}

This leads us to the following result.

\begin{theorem} \label{thm:characterization}
  Consider a strategic game $G := (N, \{S_i\}_{i \in N}, \{p_i\}_{i \in N})$.
  \begin{enumerate}[(i)]
  \item The selfishness level of $G$ is finite iff a stable social optimum $s$ exists for which
$
\alpha(s) := \sup_{i \in N,\, s'_i \in U_i(s)} \af_i(s'_i, s)
$
is finite.

\item 
If the selfishness level of $G$ is finite, then it equals
$\min_{s \in SSO} \alpha(s)$, where SSO is the set of stable social optima.

\item If $G$ is finite, then its selfishness level is finite iff it
  has a stable social optimum.  In particular, if $G$ has a unique
  social optimum, then its selfishness level is finite.

\item If $\beta > \alpha \geq 0$ and $G$ is  $\alpha$-selfish, then 
$G$ is  $\beta$-selfish.
\end{enumerate}

\end{theorem}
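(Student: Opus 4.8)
The plan is to compress the two parts of Lemma~\ref{lem:characterization} into a single master equivalence and then read off all four claims from it. First I would establish: for every $\alpha \geq 0$, $G$ is $\alpha$-selfish if and only if there is a stable social optimum $s$ with $\alpha \geq \alpha(s)$, where $\alpha(s)$ is the quantity $\max_{i \in N,\, s'_i \in U_i(s)} \af_i(s'_i, s)$ from the statement (read as a supremum when the maximum is not attained). The forward direction uses part~(i) of the lemma: any joint strategy witnessing $\alpha$-selfishness is a Nash equilibrium of $G(\alpha)$ that is a social optimum of $G$, hence automatically a \emph{stable} social optimum; part~(ii) then turns the Nash condition at such an $s$ into the inequalities $\alpha \geq \af_i(s'_i, s)$ for all $i$ and all $s'_i \in U_i(s)$, i.e.\ into $\alpha \geq \alpha(s)$. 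The converse is the same computation run backwards, using that the social optima of $G$ and $G(\alpha)$ coincide. I would also record here that stability forces $SW(s_i, s_{-i}) > SW(s'_i, s_{-i})$ whenever $s'_i \in U_i(s)$, so every appeal factor entering $\alpha(s)$ is a well-defined quotient with positive denominator.

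From the master equivalence, part~(iv) is immediate: if $G$ is $\alpha$-selfish via a stable social optimum $s$, then $\alpha(s) \leq \alpha < \beta$, so $\beta \geq \alpha(s)$ and the same $s$ witnesses $\beta$-selfishness. Part~(i) is equally direct: the selfishness level is finite exactly when $\{\alpha : G \text{ is } \alpha\text{-selfish}\}$ is non-empty, and by the master equivalence this happens precisely when some stable social optimum $s$ has $\alpha(s) < \infty$ (such an $s$ makes $G$ already $\alpha(s)$-selfish, and conversely any witness of finiteness supplies, via the equivalence, such an $s$ with $\alpha(s) \le \alpha < \infty$).

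For part~(ii) I would rewrite the set of selfish parameters as a union of rays, $\{\alpha : G \text{ is } \alpha\text{-selfish}\} = \bigcup_{s \in SSO,\, \alpha(s) < \infty} [\alpha(s), \infty)$, whose infimum is $\inf_{s \in SSO} \alpha(s)$; hence the selfishness level equals this infimum. The one genuinely delicate point---and the main obstacle---is attainment: the infimum is a \emph{minimum} precisely when it is realised by some stable social optimum $s^\ast$, in which case $G$ is itself $\big(\min_{s \in SSO}\alpha(s)\big)$-selfish. When the infimum over $SSO$ is \emph{not} attained, the value is still finite but $G$ fails to be selfish at it, landing us in the borderline $\alpha^+$ situation of Theorem~\ref{not:inf} (there the stable optima $(C(\alpha),C(\alpha))$ give $\alpha(s)$ ranging over $(0,\infty)$, with infimum $0$ not attained). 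I would therefore present the identity with $\inf$ and note that it is a genuine $\min$ exactly outside the $\alpha^+$ case.

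Finally, part~(iii) specialises to finite $G$. Here each $S_i$ and $N$ are finite, so for any stable social optimum $s$ the set $U_i(s)$ is finite and, by the positive-denominator remark above, $\alpha(s)$ is a maximum of finitely many finite quotients, hence finite. Combined with part~(i) this yields that the selfishness level is finite iff a stable social optimum exists. The ``in particular'' clause follows because a unique social optimum satisfies the stability condition vacuously---no player can deviate to a \emph{different} social optimum---so it is a stable social optimum, and finiteness is then inherited from the finite-game argument just given.
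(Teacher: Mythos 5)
Your proof is correct and follows essentially the same route as the paper: the paper's proof of this theorem is a one-line appeal to Lemma~\ref{lem:characterization}, and your ``master equivalence'' ($G$ is $\alpha$-selfish iff some stable social optimum $s$ has $\alpha \geq \alpha(s)$) is precisely the combination of that lemma's two parts which the paper leaves implicit. Your further observation that the $\min$ in part~(ii) must be read as an $\inf$ in the $\alpha^+$ case (witnessed by the game of Theorem~\ref{not:inf}, whose stable social optima have $\alpha(s)$ ranging over $(0,\infty)$ with unattained infimum $0$) is a genuine subtlety that the paper's terse proof glosses over --- a refinement in your favour, not a gap.
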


\begin{proof}
(i) and (iv) follow by Lemma \ref{lem:characterization},
(ii) by (i) and Lemma \ref{lem:characterization}, and (iii) by (i).
\end{proof}

Using the above theorem we now exhibit a class of games for $n$ players for which the
selfishness level is unbounded. In fact, the following more general result holds.

\begin{theorem}
For each function $f: \mathbb{N} \myra \mathbb{R}_+$
there exists a class of games for $n$ players, where $n > 1$, such that
the selfishness level of a game for $n$ players equals $f(n)$.   
\end{theorem}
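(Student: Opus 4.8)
The plan is to produce, for each $n > 1$, one symmetric two-strategy game $G_n$ of selfishness level exactly $f(n)$, and to take the asserted class to be $\{G_n \mid n > 1\}$. The construction generalizes the family $PD(\alpha,\beta)$ from the proof of Theorem~\ref{not:alpha}: I keep a \emph{unique} social optimum (the all-cooperate profile) so that Theorem~\ref{thm:characterization} applies verbatim, and I introduce a single free parameter controlling the appeal factor of the only profitable deviation, which I then solve for in terms of $t := f(n)$.

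Concretely, fix $n$ and set $t := f(n)$ (a nonnegative real, since $\mathbb{R}_+ = [0,\infty)$). Give every player the strategy set $\{C,D\}$, let $k(s)$ be the number of players choosing $C$ in $s$, and put
\[
p_i(s) := \begin{cases} 1 & \text{if } k(s) = n,\\ 1+g & \text{if } s_i = D \text{ and } k(s) = n-1,\\ 0 & \text{otherwise,}\end{cases}
\qquad g := \frac{t(n-1)}{1+t}.
\]
This game is symmetric. The first step is to verify that the all-$C$ profile $s^\ast$ is the \emph{unique} social optimum: the social welfare is $n$ at $s^\ast$, it is $1+g$ at a one-defector profile, and $0$ elsewhere, and $g < n-1$ because $t/(1+t) < 1$, so the value $n$ strictly dominates. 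The second step is to compute, for any player $i$, the appeal factor of the deviation $C \mapsto D$ at $s^\ast$: the deviator's gain is $g$ and the drop in social welfare is $(n-1) - g = (n-1)/(1+t)$, whence $\af_i(D, s^\ast) = t$.

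The final step is to read off the selfishness level from Theorem~\ref{thm:characterization}. Since $s^\ast$ is the unique social optimum it is, trivially, the unique \emph{stable} social optimum, so part~(iii) gives a finite selfishness level and part~(ii) gives selfishness level $= \alpha(s^\ast)$. For $t > 0$ each $U_i(s^\ast) = \{D\}$ and the maximum appeal factor is $t$, so the selfishness level is $t = f(n)$; for $t = 0$ we have $g = 0$, no deviation is profitable, $s^\ast$ is already a Nash equilibrium of $G_n$, and the selfishness level is $0 = f(n)$. The only genuine obstacle is to make $s^\ast$ the \emph{global} unique social optimum while independently fixing the single-deviation appeal factor at $t$; I resolve this tension by flattening every profile other than $s^\ast$ and its one-defector neighbours to payoff $0$, which decouples the global optimality check from the local appeal-factor computation. (Should one also wish to realize the value $\infty$, it suffices to replace $G_n$ by a game with no Nash equilibrium, e.g.\ an $n$-player analogue of Example~\ref{exa:matching}, but this is unnecessary for $f$ valued in $\mathbb{R}_+$.)
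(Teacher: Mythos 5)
Your proposal is correct and follows essentially the same route as the paper: both construct an $n$-player two-strategy game whose unique (hence trivially stable) social optimum is the all-cooperate profile, calibrate the payoffs so that the only profitable unilateral deviation has appeal factor exactly $f(n)$, and conclude via Theorem~\ref{thm:characterization}. The differences are cosmetic: the paper rewards only the smallest-index defector (an asymmetric game, with welfare drop exactly $1$ and deviation gain $f(n)$), whereas you keep the game symmetric with nonnegative payoffs and instead solve for the gain $g$ so that the gain-to-drop ratio equals $t = f(n)$.
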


\begin{proof}
Assume $n > 1$ players and that each player has two strategies, $1$ and $0$. 
Denote by $\textbf{1}$ the
joint strategy in which each strategy equals $1$ and by 
$\textbf{1}_{-i}$ the
joint strategy of the opponents of player $i$ in which each entry equals $1$.
The payoff for each player $i$ is defined as follows:
\[
p_i(s) := \begin{cases}
          0                 & \mathrm{if} \  s = \textbf{1} \\
          f(n)                 & \mathrm{if} \ s_i = 0 \text{ and }  \forall j < i,\  s_j = 1 \\
          - \frac{f(n)+1}{n-1} & \mathrm{otherwise.}
                    \end{cases}
\]
So when $s \neq \textbf{1}$,  $p_i(s) = f(n)$ if $i$ is the smallest index of a player with $s_i = 0$
and otherwise $p_i(s) =  - \frac{f(n)+1}{n-1}$.
Note that $SW(\textbf{1}) = 0$ and $SW(s) = -1$ if $s \neq \textbf{1}$.
So $\textbf{1}$ is a unique social optimum.

We have $p_i(0, \textbf{1}_{-i}) - p_i(\textbf{1}) = f(n)$
and $SW(\textbf{1}) - SW(0, \textbf{1}_{-i}) = 1$.
So by Theorem \ref{thm:characterization}~(ii)
the selfishness level equals $f(n)$.
\end{proof}

\section{Examples}
\label{sec:examples}

We now use the above characterization result to determine or compute an upper bound on
the selfishness level of some selected games.  First, we exhibit a
well-known class of games (see \citeR{MS96}) for which the selfishness
level is finite.  

\subsection{Ordinal Potential Games}

Given a game $G := (N, \{S_i\}_{i \in N}, \{p_i\}_{i \in N})$, a
function $P: S_1 \times \dots\times S_n \myra \mathbb{R}$ is called an
\bfe{ordinal potential function} for $G$ if for all $i \in N$, $s_{-i} \in S_{-i}$ and $s_i, s'_i \in S_i$,
$p_i(s_i, s_{-i}) > p_i(s'_i, s_{-i})$ iff $P(s_i, s_{-i}) > P(s'_i, s_{-i})$. 
A game that possesses an ordinal potential function is called an
\bfe{ordinal potential game}.

\begin{theorem}\label{thm:pot}
  Every finite ordinal potential game has a finite selfishness level.
\end{theorem}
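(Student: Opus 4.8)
The plan is to invoke the characterization result of Theorem~\ref{thm:characterization}~(iii), which reduces the claim to showing that every finite ordinal potential game possesses a stable social optimum. So the entire burden of the proof is to produce a stable social optimum, using the ordinal potential function $P$ as the tool.

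First I would recall what is needed. By Theorem~\ref{thm:characterization}~(iii), a finite game has finite selfishness level if and only if it has a stable social optimum, where a social optimum $s$ is stable if no player can unilaterally deviate to \emph{another social optimum} while strictly improving his own payoff. So I need to locate one social optimum that is immune to such payoff-improving deviations among social optima. The key observation is that the defining property of an ordinal potential function links strict payoff improvements to strict increases in $P$: for any unilateral deviation, $p_i(s_i, s_{-i}) > p_i(s'_i, s_{-i})$ iff $P(s_i, s_{-i}) > P(s'_i, s_{-i})$.

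The main idea is to consider, among all social optima of $G$, one that maximizes the potential $P$. Since $G$ is finite, the set of social optima is finite and non-empty, so such a $P$-maximal social optimum $s$ exists. I claim $s$ is stable. Suppose toward a contradiction that some player $i$ has a deviation $s'_i$ such that $(s'_i, s_{-i})$ is also a social optimum and $p_i(s'_i, s_{-i}) > p_i(s_i, s_{-i})$. By the ordinal potential property applied to this unilateral deviation, $p_i(s'_i, s_{-i}) > p_i(s_i, s_{-i})$ implies $P(s'_i, s_{-i}) > P(s_i, s_{-i})$. But $(s'_i, s_{-i})$ is a social optimum with strictly larger potential than $s$, contradicting the choice of $s$ as a social optimum of maximal potential. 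Hence no such deviation exists and $s$ is a stable social optimum. By Theorem~\ref{thm:characterization}~(iii), the selfishness level of $G$ is finite.

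I do not expect a serious obstacle here, since the argument is a short extremal-principle proof; the only point requiring care is that the maximization defining $s$ must be taken over the set of social optima rather than over all joint strategies, so that the deviation $(s'_i, s_{-i})$ against which we test stability is itself a social optimum — this is exactly what makes the contradiction with $P$-maximality land. Finiteness of $G$ guarantees both that the social optima exist and that a $P$-maximal one among them can be selected.
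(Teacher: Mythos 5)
Your proof is correct and follows essentially the same route as the paper: the paper's one-line argument is precisely that a social optimum maximizing the potential among all social optima is stable, which you spell out via the ordinal potential property, and then the conclusion follows from the characterization theorem (the paper cites part (ii), you cite part (iii), but this is immaterial). No gaps.
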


\begin{proof}
Each social optimum with the largest potential is a stable social optimum.
So the claim follows by Theorem \ref{thm:characterization}~(ii).
\end{proof}

In particular, every finite congestion game (see \citeR{Ros73}) 
has a finite selfishness level. 
We shall derive explicit bounds for two special cases of these games in Sections~\ref{exa:cost-sharing} and \ref{exa:congestion}.

\subsection{Weakly Acyclic Games}

Given a game $G := (N, \{S_i\}_{i \in N},
\{p_i\}_{i \in N})$, a \bfe{path} in $S_1 \times \dots\times S_n$ is a
sequence $(s^1, s^2, \dots)$ of joint strategies such that for every $k
> 1$ there is a player $i$ such that $s^k = (s'_i, s^{k-1}_{-i})$ for
some $s'_i \neq s^{k-1}_{i}$ (see, e.g., \citeR{MS96}). 
A path is called an \bfe{improvement
  path} if it is maximal and for all $k > 1$, $p_i(s^k) >
p_i(s^{k-1})$, where $i$ is the player who deviated from $s^{k-1}$.  A
game $G$ has the \bfe{finite improvement property} (\bfe{FIP}) if
every improvement path is finite. 
A game $G$ is called \bfe{weakly acyclic} if for every joint strategy there exists a finite improvement path that starts at it (see, e.g., \citeR{Mil96,You93}).

Finite games that have the FIP coincide with the ordinal potential games.
So by Theorem~\ref{thm:pot} these games have a finite selfishness level. 
In contrast, the selfishness level of a weakly acyclic game can be infinite.
Indeed, the following game is easily seen to be weakly acyclic:
\begin{center}
\begin{game}{3}{3}
      & $H$                           & $T$                             & $E$ \\
$H$   & $\phantom{-51}1,-1$           & $\phantom{1}-1,\phantom{-}1$    & $\phantom{1}-1,-0.5$ \\
$T$   & $\phantom{-}-1, \phantom{-}1$  & $\phantom{-55}1,-1$              & $\phantom{1}-1,-0.5$ \\
$E$   & $-0.5,-1$                     & $-0.5,-1$                       & $-0.5,-0.5$ \\
\end{game}
\end{center}
Yet, on the account of Theorem \ref{thm:characterization}~(iii), its
selfishness level is infinite.

\subsection{Fair Cost Sharing Games}\label{exa:cost-sharing}

In this and the next subsection we consider cost-minimization instead of payoff-maximization games. Recall that in these games each player $i$ wants to minimize his individual cost function $c_i$ and that the social cost is defined as $SC(s) = \sum_{i} c_i(s)$.

In a fair cost sharing game (see, e.g., \citeR{ADK+04}) players allocate facilities and share the
cost of the used facilities in a fair manner. Formally, a fair cost
sharing game is given by $G = (N, E, \{S_i\}_{i \in N}, \{c_e\}_{e \in
  E})$, where $N = \{1, \dots, n\}$ is the set of players, $E$ is the
set of facilities, $S_i \subseteq 2^E$ is the set of facility subsets
available to player $i$, and $c_e \in \mathbb{R}_+$ is the cost of
facility $e \in E$. It is called a \emph{singleton} cost sharing game if for every $i \in N$ and for every $s_i \in S_i$: $|s_i| = 1$. 
For a joint strategy $s \in S_1 \times \dots \times S_n$ let $x_e(s)$ be the number of players using facility $e \in E$, i.e., $x_e(s) = |\{i \in N \mid e \in s_i\}|$.
The cost of a facility $e \in E$ is evenly shared among the players using it. That is, the cost of player $i$ is defined as $c_i(s) = \sum_{e \in s_i} c_e/x_e(s)$. 

We first consider singleton cost sharing games.
Let $c_{\max} = \max_{e \in E} c_e$ and $c_{\min} = \min_{e \in E} c_e$ refer to the maximum and minimum costs of the facilities, respectively. 

\sloppy
\begin{proposition}\label{prop:cost-sharing}
The selfishness level of a singleton cost sharing game is at most {$\max\{0, \frac12 \frac{c_{\max}}{c_{\min}} - 1\}$}. Moreover, this bound is tight.
\end{proposition}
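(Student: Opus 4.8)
The plan is to apply the characterization in Theorem~\ref{thm:characterization}~(ii), which tells us that the selfishness level equals $\min_{s \in SSO} \alpha(s)$, where $\alpha(s) = \max_{i \in N,\, s'_i \in U_i(s)} \af_i(s'_i, s)$. Since this is a cost minimization game, I would first transcribe the appeal factor into the cost language: for a deviation $s'_i$ that is profitable for player $i$ (i.e., $c_i(s'_i, s_{-i}) < c_i(s_i, s_{-i})$) from a social optimum $s$, the appeal factor is the ratio of the player's cost saving over the resulting increase in social cost, namely
\[
\af_i(s'_i, s) = \frac{c_i(s_i, s_{-i}) - c_i(s'_i, s_{-i})}{SC(s'_i, s_{-i}) - SC(s_i, s_{-i})}.
\]
Because the game is a (singleton) congestion game, it is a finite ordinal potential game, so by Theorem~\ref{thm:pot} the selfishness level is finite and a stable social optimum exists; thus the minimum in the characterization is attained and it suffices to bound $\alpha(s)$ for one conveniently chosen stable social optimum.

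First I would fix a stable social optimum $s$ and analyze a single profitable deviation of some player $i$ from facility $e = s_i$ to facility $e' = s'_i$. In a singleton game the only quantities that change are the shares on $e$ and $e'$. Writing $x_e = x_e(s)$ and $x_{e'} = x_{e'}(s)$ for the loads in the optimum, player $i$'s cost saving is $\frac{c_e}{x_e} - \frac{c_{e'}}{x_{e'}+1}$, while the change in social cost comes only from the players on $e$ and $e'$: those remaining on $e$ now split $c_e$ among $x_e - 1$ players and those on $e'$ now split $c_{e'}$ among $x_{e'}+1$ players. The key algebraic step is to compute
\[
SC(s'_i, s_{-i}) - SC(s_i, s_{-i}) = c_{e'}\Bigl(\tfrac{x_{e'}+1}{x_{e'}+1} - \tfrac{x_{e'}}{x_{e'}}\Bigr) \cdot(\text{correction}) \; - \; \frac{c_e}{x_e},
\]
more carefully: the total cost on $e$ drops from $c_e$ to $c_e$ (still fully paid if $x_e - 1 \ge 1$) so the social cost contribution of $e$ is unchanged unless $x_e = 1$, and the contribution of $e'$ is unchanged as well since its cost is fully shared whenever it is used. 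This is the crucial structural observation in singleton cost sharing: the social cost of a joint strategy equals the sum of $c_e$ over all \emph{used} facilities, independent of how many players use them. Hence the change in social cost is simply $c_{e'}[x_{e'} = 0] - c_e[x_e = 1]$ in the obvious indicator notation, and a profitable deviation that increases social cost must have $x_{e'} \ge 1$ (so $e'$ was already used) and $x_e = 1$ (so $e$ becomes empty), giving a social cost increase of exactly $c_{e'} - c_e$ only in the relevant case; I would handle the bookkeeping so that the denominator of the appeal factor equals $c_{e'}$ when $e$ becomes unused. The cost saving in the numerator is then at most $\frac{c_e}{1} - \frac{c_{e'}}{x_{e'}+1} \le c_{\max} - \frac{c_{\min}}{2}$, using $x_{e'} \ge 1$, while the denominator is at least $c_{\min}$; dividing yields the bound $\frac{c_{\max} - c_{\min}/2}{c_{\min}} = \frac{c_{\max}}{c_{\min}} - \frac12$.

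I expect the main obstacle to be pinning down exactly which deviations can be both profitable \emph{and} socially worsening, since the appeal factor is only defined for deviations in $U_i(s)$ that strictly decrease social welfare, and getting the worst-case configuration of loads $x_e, x_{e'}$ right is what produces the sharp constant $\frac12 \frac{c_{\max}}{c_{\min}} - 1$ rather than the looser $\frac{c_{\max}}{c_{\min}} - \frac12$. I would therefore re-examine the extremal case: the maximizing deviation has player $i$ alone on the most expensive facility $e$ (so $x_e = 1$, cost $c_e = c_{\max}$) moving to a facility $e'$ already shared by one other player (so $x_{e'} = 1$, new share $c_{e'}/2$ with $c_{e'} = c_{\min}$), yielding numerator $c_{\max} - c_{\min}/2$ and denominator $c_{\min}$ — but I must instead double-check whether the social optimum can even contain such a singleton-on-$c_{\max}$ configuration, because optimality constrains the loads, and it is this optimality-plus-stability constraint that sharpens the bound to $\frac12 \frac{c_{\max}}{c_{\min}} - 1$. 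Finally, for tightness I would construct an explicit family of singleton instances (two facilities with costs $c_{\max}$ and $c_{\min}$ and a small number of players) in which a stable social optimum admits a profitable deviation achieving the appeal factor $\frac12 \frac{c_{\max}}{c_{\min}} - 1$ exactly, and verify via Theorem~\ref{thm:characterization}~(ii) that no stable social optimum does better, together with the $\max\{0, \cdot\}$ clause covering the regime $c_{\max} \le 2 c_{\min}$ where the optimum is already a Nash equilibrium and the selfishness level is $0$.
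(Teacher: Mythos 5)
Your structural observation is the right one --- in a singleton fair cost sharing game the social cost is just the sum of $c_e$ over the \emph{used} facilities, so a deviation of player $i$ from $e$ to $e'$ changes the social cost by exactly $c_{e'}[x_{e'}=0] - c_e[x_e=1]$ --- but you then read this formula backwards, and that is where the proof breaks. For a deviation $s'_i \in U_i(s)$ from a stable social optimum $s$, the social cost must strictly \emph{increase}. Since costs are positive, positivity of $c_{e'}[x_{e'}=0] - c_e[x_e=1]$ forces $x_{e'}=0$ (the target facility was previously unused), not $x_{e'}\ge 1$ as you claim; and then $x_e = 1$ is impossible, because profitability would give $c_{e'} < c_e$ while the social cost change would be $c_{e'} - c_e < 0$, a contradiction. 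Hence the only possible configuration is $x_e \ge 2$ and $x_{e'} = 0$: the deviator leaves a facility shared with at least one other player and opens up a fresh facility. Your claimed configuration ($x_e = 1$, $x_{e'}\ge 1$) makes the social cost \emph{decrease} by $c_e$, which cannot happen at a social optimum, so both it and the ``extremal case'' you re-examine at the end ($x_e = 1$, $x_{e'} = 1$) are vacuous. With the correct configuration the numerator is $c_e/x_e - c_{e'} \le c_{\max}/2 - c_{e'}$ and the denominator is exactly $c_{e'} \ge c_{\min}$, giving
\[
\af_i(s'_i,s) = \frac{c_e/x_e}{c_{e'}} - 1 \le \frac{1}{2}\,\frac{c_{\max}}{c_{\min}} - 1,
\]
which is the claimed bound; your computation instead lands on $\frac{c_{\max}}{c_{\min}} - \frac12$, and you acknowledge the mismatch but defer its resolution to a ``double-check'' that is never carried out.

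The same inversion propagates to your tightness sketch: the extremal instance is not a lone player on a $c_{\max}$ facility moving to a shared $c_{\min}$ facility. In the paper's tight example both players sit on the $c_{\max}$ facility (player $1$ is forced there by having the single strategy $\{e_1\}$, so $x_{e_1} = 2$), and player $2$ deviates to the \emph{unused} $c_{\min}$ facility, achieving appeal factor $(\tfrac12 c_{\max} - c_{\min})/c_{\min} = \tfrac12 c_{\max}/c_{\min} - 1$ exactly. So while the overall architecture of your argument (the characterization theorem, existence of a stable social optimum via the potential, the case $U_i(s) = \emptyset$ giving level $0$) matches the paper, the central case analysis that produces the constant $\tfrac12$ is exactly reversed, and without it the proof does not go through.
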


\fussy
This result should be contrasted with the price of stability of $H_n$ and the price of anarchy of $n$ for  cost sharing games \cite{ADK+04}. 
Cost sharing games admit an exact potential function and thus by Theorem~\ref{thm:pot} their selfishness level is finite. However, as the tight example given in the proof of Proposition~\ref{prop:cost-sharing} below shows, the selfishness level can be arbitrarily large (as $c_{\max}/c_{\min} \rightarrow \infty$) even for $n = 2$ players and two facilities.

In order to prove Proposition~\ref{prop:cost-sharing}, we first derive an expression of the appeal factor for arbitrary fair cost sharing games, which we then specialize to singleton cost sharing games to prove the claim. 

Let $s$ be a stable social optimum. Note that $s$ exists by Theorem~\ref{thm:characterization}~(iii) and Theorem \ref{thm:pot}.
Because we consider a cost minimization game here the appeal factor of player $i$ is defined as
\begin{equation}\label{eq:af-cost}
\af_i(s'_i, s) := \frac{c_i(s_i, s_{-i}) - c_i(s'_i, s_{-i})}{SC(s'_i, s_{-i}) - SC(s_i, s_{-i})}
\end{equation}
and the condition in Theorem~\ref{thm:characterization}~(i) reads
$\alpha(s) := \max_{i \in N,\,  s'_i \in U_i(s)} \af_i(s'_i, s)$,
where $U_i(s) := \{s'_i \in S_i \; \mid \; c_i(s'_i, s_{-i}) < c_i(s_i, s_{-i})\}$. 

Fix some player $i$ and let $s' = (s'_i, s_{-i})$ for some $s'_i \in U_i( s)$. We use $x_e$ and $x'_e$ to refer to $x_e(s)$ and $x_e(s')$, respectively. 
Note that
$$
x'_e = 
\begin{cases}
x_e + 1 & \text{if $e \in s'_i \setminus s_i$}, \\
x_e - 1 & \text{if $e \in s_i \setminus s'_i$}, \\
x_e & \text{otherwise.} 
\end{cases}
$$

We have 
\begin{align}\label{eq:diff-cost}
c_i(s) - c_i(s'_i, s_{-i}) 
 & = \sum_{e \in s_i \setminus s'_i} \frac{c_e}{x_e} - \sum_{e \in s'_i \setminus s_i} \frac{c_e}{x_e+1}.
\end{align}
Further, it is not difficult to verify that 
\begin{align}\label{eq:diff-SC}
SC(s'_i, s_{-i}) - SC(s) 
= \sum_{e \in s'_i \setminus s_i:\, x_e = 0} c_e - \sum_{e \in s_i \setminus s'_i:\, x_e = 1} c_e.
\end{align}
Thus,
\begin{align}\label{eq:af-cs}
\af_i(s'_i, s)
& = \frac{\sum_{e \in s_i \setminus s'_i:\, x_e \ge 2} \frac{c_e}{x_e} - \sum_{e \in s'_i \setminus s_i:\, x_e \ge 1} \frac{c_e}{x_e+1}}{\sum_{e \in s'_i \setminus s_i:\, x_e = 0} c_e - \sum_{e \in s_i \setminus s'_i:\, x_e = 1} c_e} - 1.
\end{align}

We use the above to prove Proposition~\ref{prop:cost-sharing}.

\begin{proof}[Proof of Proposition~\ref{prop:cost-sharing}]
{Let $s$ be a stable social optimum (which exists by Theorem~\ref{thm:characterization}~(iii) and Theorem \ref{thm:pot}).
If $U_i(s) = \emptyset$ for every $i \in N$ then the selfishness level is $0$ by Theorem~\ref{thm:characterization}~(ii). 
Otherwise, there is some player $i \in N$ with $U_i(s) \neq \emptyset$. 
Recall that in a singleton cost sharing game, each player's strategy set consists of singleton facility sets. 
Let $s_i = \{e\}$ and $s'_i = \{ e' \}$ be the singleton sets of the facilities chosen by player $i$ in $s$ and in $s' = (s'_i, s_{-i})$ with $s'_i \in U_i(s)$.} Clearly, $e \neq e'$.

Note that $SC(s'_i, s_{-i}) - SC(s)$ must be positive because $s'_i \in U_i(s)$ and thus \eqref{eq:diff-SC} implies that $x_{e'} = 0$. Therefore, \eqref{eq:diff-cost} reduces to 
$c_i(s) - c_i(s'_i, s_{-i})  = c_e/x_e - c_{e'}$.
If $x_e = 1$ then $c_e > c_{e'}$ because $s'_i \in U_i(s)$. But this is a contradiction to the assumption that $SC(s'_i, s_{-i}) - SC(s) =  c_{e'} - c_e > 0$. Thus $x_e \ge 2$. {Note that this also implies that $c_e > 2 c_{e'}$ and thus $c_{\max} > 2 c_{\min}$.}

Using \eqref{eq:af-cs}, we obtain 
$$
\af_i(s'_i, s) = \frac{\frac{c_e}{x_e}}{c_{e'}} - 1 \le \frac{1}{2} \frac{c_{\max}}{c_{\min}} - 1.
$$
The claim now follows by Theorem~\ref{thm:characterization}~(ii).

The following example shows that this bound is tight. Suppose $N = \{1, 2\}$, $E = \{e_1, e_2\}$, $S_1 = \{ \{e_1\} \}$, $S_2 = \{ \{e_1\}, \{e_2\} \}$, $c_{e_1} = c_{\max}$ and $c_{e_2} = c_{\min}$ {with $c_{\max} > 2 c_{\min}$}. The joint strategy $s = (\{e_1\}, \{e_1\})$ is the unique social optimum with $SC(s) = c_{\max}$ and $c_2(s) = c_{\max}/2$. Suppose player $2$ deviates to $s'_2 = \{e_2\}$. Then $SC(s'_2, s_1) = c_{\max} + c_{\min}$ and $c_2(s'_2, s_1) = c_{\min}$. Thus $\af_i(s'_2, s) = (\tfrac12 c_{\max} - c_{\min})/c_{\min} = \frac12 c_{\max}/c_{\min} - 1$.
\end{proof}

The following example shows that a bound similar to the one above, i.e., bounding the selfishness level in terms of the ratio $c_{\max}/c_{\min}$, does not hold for arbitrary fair cost sharing games. In particular, it shows that the minimum difference between any two costs of facilities (here $\varepsilon$) must enter a bound of the selfishness level for arbitrary fair cost sharing games.

\begin{example}\label{ex:cost-diff}
Let $N = \{1, 2\}$, $E = \{e_1, e_2, e_3\}$, $S_1 = \{ \{e_1\} \}$, $S_2 = \{ \{e_1, e_3\}, \{e_2\} \}$, $c_{e_1} = c_{\max}$, $c_{e_2} = c_{\min} + \varepsilon$ for some $\varepsilon > 0$ and $c_{e_3} = c_{\min}$. The joint strategy $s = (\{e_1\}, \{e_1, e_3\})$ is the unique social optimum with $SC(s) = c_{\max} + c_{\min}$ and $c_2(s) = c_{\max}/2 + c_{\min}$. Suppose player $2$ deviates to $s'_2 = \{e_2\}$. Then $SC(s'_2, s_1) = c_{\max} + c_{\min} + \varepsilon$ and $c_2(s'_2, s_1) = c_{\min} + \varepsilon$. Thus $\af_i(s'_2, s) = (\tfrac12 c_{\max} - \varepsilon)/\varepsilon = \frac12 c_{\max}/\varepsilon - 1$, which approaches $\infty$ as $\varepsilon \rightarrow 0$.
\end{example}

We next derive a bound for arbitrary fair cost sharing games with non-negative integer costs.
Let $L$ be the maximum number of facilities that any player can choose, i.e., $L := \max_{i \in N, s_i \in S_i} |s_i|$. 

\begin{proposition}\label{prop:cost-sharing-general}
The selfishness level of a fair cost sharing game with non-negative integer costs is at most {$\max\{0, \frac12 L c_{\max} - 1\}$.} Moreover, this bound is tight.
\end{proposition}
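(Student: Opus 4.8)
The plan is to mirror the structure of the proof of Proposition~\ref{prop:cost-sharing}, but now accounting for the fact that each player may choose a set of up to $L$ facilities rather than a single one. By Theorem~\ref{thm:characterization}~(iii) together with Theorem~\ref{thm:pot}, a stable social optimum $s$ exists (cost sharing games admit an exact potential), so the selfishness level equals $\min_{s \in SSO} \alpha(s)$ and it suffices to bound $\af_i(s'_i, s)$ for an arbitrary player $i$ and profitable deviation $s'_i \in U_i(s)$. If $U_i(s) = \emptyset$ for all $i$ the level is $0$, so assume otherwise and work with the expression~\eqref{eq:af-cs} for the appeal factor derived above.

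The key step is to bound the numerator and denominator of the ratio in~\eqref{eq:af-cs} separately, using the integrality of the costs. First I would observe that since $s'_i \in U_i(s)$ the deviation is strictly cost-improving, which forces the denominator $SC(s'_i, s_{-i}) - SC(s)$ to be strictly positive; because the costs are non-negative integers and this quantity is an integer difference (each term $c_e$ being an integer), the denominator is at least $1$. For the numerator, I would bound $c_i(s) - c_i(s'_i, s_{-i})$ from above: using~\eqref{eq:diff-cost}, the gain is at most $\sum_{e \in s_i \setminus s'_i} c_e/x_e$, and dropping the (non-negative) subtracted term only increases the bound. Since at most $L$ facilities lie in $s_i$, and each contributes at most $c_{\max}/x_e \le c_{\max}/2$ whenever $x_e \ge 2$, I would argue that only facilities with $x_e \ge 2$ can contribute to the gain (a facility with $x_e = 1$ in $s_i \setminus s'_i$ produces, upon removal, a social cost decrease that contradicts the denominator being positive, exactly as in the singleton case). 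This yields a numerator bounded by $\tfrac12 L c_{\max}$, and combining with the denominator lower bound of $1$ gives $\af_i(s'_i, s) \le \tfrac12 L c_{\max} - 1$, whence the claim follows by Theorem~\ref{thm:characterization}~(ii).

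The main obstacle I anticipate is the careful bookkeeping needed to justify that facilities with $x_e = 1$ in $s_i \setminus s'_i$ cannot contribute positively to the appeal factor, and more generally that the $-1$ correction term in~\eqref{eq:af-cs} is handled correctly. In the singleton case this was clean because there was exactly one facility on each side; with sets the interaction between the removed facilities (those in $s_i \setminus s'_i$ with $x_e = 1$, which reduce social cost) and the added facilities (those in $s'_i \setminus s_i$ with $x_e = 0$, which raise it) requires showing the net effect still leaves the denominator at least $1$ while the numerator stays at most $\tfrac12 L c_{\max}$. I would structure this by writing $\af_i(s'_i,s) + 1$ as a single ratio and bounding that quantity by $\tfrac12 L c_{\max}$, which avoids juggling the $-1$ during the estimation.

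Finally, for tightness I would construct an explicit two-player instance, generalizing the tight example in Proposition~\ref{prop:cost-sharing}. The natural candidate is to give player $1$ a fixed strategy occupying a block of $L$ expensive facilities each of cost $c_{\max}$, let player $2$'s social-optimum strategy share all $L$ of these (so that player $2$'s cost is $\tfrac12 L c_{\max}$ and each shared facility has $x_e = 2$), and offer player $2$ an alternative singleton facility of integer cost $c_{\min}$ making the deviation strictly social-cost-increasing by exactly $1$ (choosing $c_{\min}$ so the denominator equals $1$). Computing $\af$ on this instance should reproduce $\tfrac12 L c_{\max} - 1$ exactly, establishing that the bound cannot be improved.
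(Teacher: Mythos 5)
Your overall plan coincides with the paper's proof: a stable social optimum exists by Theorem~\ref{thm:pot} and Theorem~\ref{thm:characterization}~(iii); integrality of the costs together with stability forces the denominator $SC(s'_i,s_{-i})-SC(s)\ge 1$; the numerator of the single-ratio form of \eqref{eq:af-cs} is at most $\frac12 L c_{\max}$; and the conclusion follows from Theorem~\ref{thm:characterization}~(ii). Your tightness construction is also essentially the paper's (the paper uses $L+1$ players, each privately owning one facility of cost $c_{\max}$; your two-player variant, where player $1$ occupies all $L$ expensive facilities at once, computes to the same appeal factor $\frac12 L c_{\max}-1$ and is equally valid).

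However, the justification you give in the middle step is genuinely wrong. You claim that a facility $e\in s_i\setminus s'_i$ with $x_e=1$ ``produces, upon removal, a social cost decrease that contradicts the denominator being positive, exactly as in the singleton case.'' This transfer from the singleton case fails: with multi-facility deviations, such a facility can perfectly well coexist with a positive denominator, because the facilities newly opened in $s'_i\setminus s_i$ can raise the social cost by more than the removal of $e$ lowers it. Concretely, take $N=\{1,2\}$, $E=\{e_1,e_2,e_3\}$ with $c_{e_1}=1$, $c_{e_2}=10$, $c_{e_3}=3$, $S_1=\{\{e_2\}\}$, $S_2=\{\{e_1,e_2\},\{e_3\}\}$. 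Then $s=(\{e_2\},\{e_1,e_2\})$ is the unique (hence stable) social optimum with $SC(s)=11$; player $2$'s deviation to $\{e_3\}$ is profitable ($6$ drops to $3$) and the denominator is $13-11=2>0$, yet $e_1\in s_2\setminus s'_2$ has $x_{e_1}=1$ and contributes $1$ to the gain $c_2(s)-c_2(s')=3$. So ``only facilities with $x_e\ge 2$ contribute to the gain'' is false; what is true is that in \eqref{eq:af-cs} the $x_e=1$ terms have been \emph{algebraically absorbed into the denominator}, because the numerator there equals the gain \emph{plus} the social-cost increase, not the gain itself. This also fixes a second problem in your route: bounding the raw gain by $\frac12 L c_{\max}$ and dividing by a denominator $\ge 1$ only yields $\af_i(s'_i,s)\le \frac12 L c_{\max}$, missing the $-1$. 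Your own closing suggestion --- write $\af_i(s'_i,s)+1$ as a single ratio, i.e., use \eqref{eq:af-cs} verbatim, bound its numerator by $\sum_{e\in s_i\setminus s'_i:\,x_e\ge 2} c_e/x_e \le \frac12 L c_{\max}$ and its denominator by $1$ --- is exactly the paper's argument, needs no claim about $x_e=1$ facilities at all, and delivers $\frac12 L c_{\max}-1$. Commit to that version and delete the contradiction argument.
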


\begin{proof}
Let $s$ be a stable social optimum. As in the proof of Proposition~\ref{prop:cost-sharing}, if $U_i(s) = \emptyset$ for every $i \in N$ then the selfishness level is $0$ by Theorem~\ref{thm:characterization}~(ii). Otherwise, there is some player $i \in N$ with $U_i(s) \neq \emptyset$. Let $s' = (s'_i, s_{-i})$ for some $s'_i \in U_i(s)$.
Note that the denominator of the appeal factor in \eqref{eq:af-cs} is at least $1$ because {$s$ is stable}, $s'_i \in U_i(s)$ and $c_e \in \mathbb{N}$ for each $e \in E$. Thus
\begin{align*}
\af_i(s'_i, s) & = \frac{\sum_{e \in s_i \setminus s'_i:\, x_e \ge 2} \frac{c_e}{x_e} - \sum_{e \in s'_i \setminus s_i:\, x_e \ge 1} \frac{c_e}{x_e+1}}{\sum_{e \in s'_i \setminus s_i:\, x_e = 0} c_e - \sum_{e \in s_i \setminus s'_i:\, x_e = 1} c_e} - 1 \\
& \le  \sum_{e \in s_i \setminus s'_i:\, x_e \ge 2} \frac{c_e}{x_e} - 1 \le \frac12 L  c_{\max} - 1.
\end{align*}
The claim follows by Theorem \ref{thm:characterization} (ii). 

The following example shows that the bound is tight. Suppose we are given $L$ and $c_{\max}$. Let $N = \{1, \dots, n\}$ and $E = \{e_1, \dots, e_n\}$ where $n = L+1$. Define $S_i = \{ \{ e_i \} \}$ for every $i \in N \setminus \{n\}$ and $S_n = \{ \{e_1, \dots, e_{n-1}\}, \{e_n\}\}$. Let $c_{e_i} = c_{\max}$ for every $i \in N \setminus \{n\}$ and $c_{e_n} = 1$. The joint strategy $s = (\{e_1\}, \dots, \{e_{n-1}\}, \{e_1, \dots, e_{n-1}\})$ is the unique social optimum with $SC(s) = (n-1)c_{\max}$ and $c_n(s) = (n-1) c_{\max}/2$. Suppose player $n$ deviates to $s'_n = \{ e_n\}$. Then $SC(s'_n, s_{-n}) = (n-1) c_{\max} + 1$ and $c_n(s'_n, s_{-n}) = 1$. Thus $\af_i(s'_n, s) = \frac12 (n-1) c_{\max} - 1 = \frac12 L c_{\max} - 1$.
\end{proof}

\begin{remark}\label{rem:scaling}
We can bound the selfishness level of a fair cost sharing game with non-negative rational costs $c_e \in \mathbb{Q}_+$ for every facility $e \in E$ by using Proposition~\ref{prop:cost-sharing-general} and the following  scaling argument: Simply scale all costs to integers, e.g., by multiplying them with the least common multiplier $q \in \mathbb{N}$ of the denominators.
Note that this scaling does not change the selfishness level of the game by Proposition~\ref{prop:transformations}. However, it does change the maximum facility cost and thus $q$ enters the bound. 
Also note that this scaling implicitly takes care of the effect observed in Example~\ref{ex:cost-diff}:
Suppose that $c_{\max}$ and $c_{\min}$ are integers and $\epsilon = 1/q$ for some $q \in \mathbb{N}$. Then all costs are multiplied by $q$ and Proposition~\ref{prop:cost-sharing-general} yields a (non-tight) bound of $q c_{\max} - 1 = c_{\max}/\epsilon-1$ on the selfishness level, which approaches $\infty$ as $q \rightarrow \infty$.
\end{remark}

\subsection{Linear Congestion Games}\label{exa:congestion}

In a congestion game $G := (N, E, \{S_i\}_{i \in N}, \{d_e\}_{e \in E})$ we are given a set of players $N = \{1, \ldots, n\}$, a set of facilities $E$ with a delay function $d_e : \mathbb{N} \rightarrow \mathbb{R}_+$ for every facility $e \in E$, and a strategy set $S_i \subseteq 2^E$ for every player $i \in N$. 
For a joint strategy $s \in S_1 \times \dots \times S_n$, define $x_e(s)$ as the number of players using facility $e \in E$, i.e., $x_e(s) = |\{i \in N \mid e \in s_i\}|$. The goal of a player is to minimize his individual cost $c_i(s) = \sum_{e \in s_i} d_e(x_e(s))$. 

Here we call a congestion game \emph{symmetric} if there is some common strategy set $S \subseteq 2^E$ such that $S_i = S$ for all $i$. It is \emph{singleton} if every strategy $s_i \in S_i$ is a singleton set, i.e., for every $i \in N$ and for every $s_i \in S_i$, $|s_i| = 1$.
In a \emph{linear} congestion game, the delay function of every facility $e \in E$ is of the form $d_e(x) = a_e x + b_e$, where $a_e, b_e \in \mathbb{R}_+$ are non-negative real numbers. 

We first derive a bound on the selfishness level for symmetric singleton linear congestion games. 
As it turns out, a bound similar to the one for singleton cost sharing games does not extend to symmetric singleton linear congestion games. Instead, the crucial insight here is that the selfishness level depends on the \emph{discrepancy} between facilities in a stable social optimum. 
We make this notion more precise.

Let $s$ be a stable social optimum and let $x_e$ refer to $x_e(s)$. Define the \emph{discrepancy} between two facilities $e$ and $e'$ {with $a_e + a_{e'} > 0$} under $s$ as 
\begin{equation}\label{eq:disc}
\discr(x_e, x_{e'}) = \frac{2a_e x_e + b_e}{a_e + a_{e'}} - \frac{2 a_{e'} x_{e'} + b_{e'}}{a_e + a_{e'}}.
\end{equation}

We show below that $\discr(x_e, x_{e'}) \in [-1, 1]$.
Define $\discr_{\max}(s)$ as the maximum discrepancy between any two facilities $e$ and $e'$ under $s$ with $a_e + a_{e'} > 0$ and $\discr(x_e, x_{e'}) < 1$; more formally, let 
$$
\discr_{\max}(s) = \max_{e, e' \in E} \{ \discr(x_e, x_{e'}) \; | \; a_e + a_{e'} > 0 \text{ and } \discr(x_e, x_{e'}) < 1 \}.
$$
Let $\discr_{\max}$ be the maximum discrepancy over all stable social optima, i.e., $\discr_{\max} = \max_{s \in SSO} \discr_{\max}(s)$.
Further, let $\Delta_{\max} :=  \max_{e \in E} (a_e + b_e)$ and $\Delta_{\min} := \min_{e \in E} (a_e + b_e)$.
Moreover, let $a_{\min}$ be the minimum non-zero coefficient of a latency function, i.e., $a_{\min} = \min_{e \in E: a_e > 0} a_e$. 

\begin{proposition}\label{prop:lin-cg}
The selfishness level of a symmetric singleton linear congestion game is at most 
$$
\max\left\{0, \frac12 \frac{\Delta_{\max} - \Delta_{\min}}{(1-\discr_{\max}) a_{\min}} - \frac12\right\}.
$$
Moreover, this bound is tight.
\end{proposition}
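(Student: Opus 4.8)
The plan is to apply the characterization in Theorem~\ref{thm:characterization}~(ii): since a symmetric singleton linear congestion game is a congestion game, it admits an exact potential and hence by Theorem~\ref{thm:pot} has a finite selfishness level and a stable social optimum $s$. I would fix such an $s$ and compute the appeal factor $\af_i(s'_i, s)$ for a profitable deviation $s'_i \in U_i(s)$, then bound it. In the singleton case each player picks a single facility, so if player $i$ moves from facility $e$ (with $x_e = x_e(s)$ players) to facility $e'$ (with $x_{e'}$ players), the congestion counts change only on $e$ and $e'$: after the move $e$ carries $x_e - 1$ and $e'$ carries $x_{e'} + 1$. I would write the individual-cost gain $c_i(s) - c_i(s'_i, s_{-i}) = (a_e x_e + b_e) - (a_{e'}(x_{e'}+1) + b_{e'})$ and the social-cost increase $SC(s'_i, s_{-i}) - SC(s)$, where the latter is computed from the exact potential / direct summation and, for linear delays, equals $\bigl(2a_{e'}(x_{e'}+1) + b_{e'} - a_{e'}\bigr) - \bigl(2a_e x_e + b_e - a_e\bigr)$, i.e. the marginal social costs on the two affected facilities.

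The key algebraic step is to show that the discrepancy $\discr(x_e, x_{e'})$ defined in \eqref{eq:disc} is exactly the quantity that controls this ratio. First I would verify the stated range $\discr(x_e, x_{e'}) \in [-1,1]$: since $s$ is a social optimum (minimizing total cost) no unilateral singleton swap can decrease $SC$, so $SC(s'_i, s_{-i}) - SC(s) \ge 0$ for every $e, e'$; writing this out and symmetrizing over the two directions of the swap yields precisely $|\discr(x_e,x_{e'})| \le 1$. Next I would manipulate $\af_i(s'_i,s) = \frac{c_i(s) - c_i(s'_i,s_{-i})}{SC(s'_i,s_{-i}) - SC(s)}$ by factoring $(a_e + a_{e'})$ out of both numerator and denominator. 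The denominator becomes $(a_e + a_{e'})\bigl(1 - \discr(x_e,x_{e'})\bigr)$ (this is where the discrepancy enters, and where stability guarantees $\discr < 1$ so the denominator is strictly positive), while the numerator is bounded above by the difference of the baseline delays $\Delta_{\max} - \Delta_{\min}$ together with the congestion terms. Collecting everything I expect to land on
\[
\af_i(s'_i, s) \le \frac{1}{2}\,\frac{\Delta_{\max} - \Delta_{\min}}{(1 - \discr_{\max}) a_{\min}} - \frac{1}{2},
\]
using $a_e + a_{e'} \ge a_{\min}$ (with at least one coefficient nonzero, which the profitability of the deviation forces) and $1 - \discr(x_e,x_{e'}) \ge 1 - \discr_{\max}$ in the denominator, and bounding the baseline gap by $\Delta_{\max} - \Delta_{\min}$ in the numerator. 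Taking the maximum over players and deviations and then the minimum over stable social optima (as in Theorem~\ref{thm:characterization}~(ii)) gives the claimed upper bound, with the outer $\max\{0, \cdot\}$ accounting for the case $U_i(s) = \emptyset$ for all $i$, where the selfishness level is $0$.

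The main obstacle I anticipate is getting the constants and the $\tfrac12$'s exactly right: the marginal social cost of adding the $(x_{e'}+1)$-st user to a facility with linear delay $a_{e'}x + b_{e'}$ is $a_{e'}(2x_{e'}+1) + b_{e'}$, not simply the delay value, so the factors of $2$ in the numerator of the discrepancy \eqref{eq:disc} must be tracked carefully and reconciled with the $-\tfrac12$ appearing in the final bound. I would therefore carry the computation symbolically in terms of $a_e x_e + b_e$ and $a_{e'} x_{e'} + b_{e'}$ and only substitute the discrepancy at the end. For tightness, I would construct an explicit symmetric singleton instance — two facilities whose parameters realize the extremal values $\Delta_{\max}$, $\Delta_{\min}$, $a_{\min}$, and a congestion split achieving $\discr_{\max}$ — with a unique social optimum from which a single profitable deviation attains the bound with equality, mirroring the tight examples given for the cost sharing propositions.
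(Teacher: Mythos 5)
Your proposal is correct and, in its main line, follows the same route as the paper's proof: existence of a stable social optimum via the potential function, the two difference formulas (your expression for the social-cost increase equals $a_{e'}(2x_{e'}+1)+b_{e'}-a_e(2x_e-1)-b_e$, which factors as $(1-\discr)(a_e+a_{e'})$), the resulting identity $\af_i(s'_i,s)=\frac12\,\frac{(a_e+b_e)-(a_{e'}+b_{e'})}{(1-\discr)(a_e+a_{e'})}-\frac12$, and the final bounds $a_e+a_{e'}\ge a_{\min}$ and $\discr\le\discr_{\max}<1$, with $\max\{0,\cdot\}$ covering the case $U_i(s)=\emptyset$ for all $i$. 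Where you genuinely diverge is the discrepancy-range claim. The paper (Claim~\ref{claim:range}) argues that a social optimum splits the $t=x_e+x_{e'}$ players so as to minimize the two-facility quadratic, and then rounds the continuous minimizer $\bar{x}_0$ to the nearest integer to get $\discr\in[-1,1]$; you instead write down the two single-player swap inequalities (moving one player from $e$ to $e'$, or from $e'$ to $e$, cannot decrease $SC$), which give $\discr\le 1$ and $\discr\ge -1$ directly. Your argument is more elementary and better aligned with what the bound actually needs: only $\discr<1$ for the deviation pair matters, the swap in that direction \emph{is} the deviation itself, and strictness comes from stability plus profitability. It also sidesteps a boundary subtlety in the rounding argument (when $\bar{x}_0\notin[0,t]$, e.g.\ $x_{e'}=0$ with $b_{e'}$ very large, the two-sided claim can fail; this is harmless for the paper since only the upper side enters the bound). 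Conversely, the paper's rounding view is what motivates the closed form \eqref{eq:disc}, and your swap argument yields the lower side only when both facilities are occupied. Two details to repair when writing it up: $a_e+a_{e'}>0$ follows from profitability \emph{together with} stability (if $a_e=a_{e'}=0$, the cost gain and the social-cost increase have opposite signs), not from profitability alone; and the tightness instance, which you specify only by its desiderata, still has to be exhibited --- the paper's construction (two players, $d_e(x)=(2+\discr)a_{e'}$ constant and $d_{e'}(x)=a_{e'}x$) uses the slightly non-obvious trick of making one delay load-independent so that the social optimum realizes the prescribed discrepancy.
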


We first prove that the discrepancy between two facilities is bounded:

\begin{claim}\label{claim:range}
Let $s$ be a social optimum and $e, e' \in E$ be two facilities with $a_e + a_{e'} > 0$. 
Then the discrepancy between $e$ and $e'$ under $s$ satisfies $\discr(x_e, x_{e'}) \in [-1, 1]$. 
\end{claim}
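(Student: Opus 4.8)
The plan is to use the single fact at our disposal — that $s$ minimizes the social cost — by comparing $s$ with the joint strategy obtained from it through one elementary reallocation: take a player currently on $e$ and move him to $e'$, and, symmetrically, move a player on $e'$ to $e$. Since $s$ is a social optimum, neither reallocation can strictly decrease the social cost, and each of these two inequalities will translate directly into one of the two bounds on $\discr(x_e, x_{e'})$.

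First I would record how the social cost changes under such a move. Writing $SC(s) = \sum_{f \in E} x_f\, d_f(x_f) = \sum_{f \in E} (a_f x_f^2 + b_f x_f)$, moving one player from $e$ to $e'$ sends $x_e \mapsto x_e - 1$ and $x_{e'} \mapsto x_{e'} + 1$ and alters only the two corresponding summands. A short computation then gives
\[
\Delta SC_{e \to e'} = \big(2a_{e'} x_{e'} + a_{e'} + b_{e'}\big) - \big(2a_e x_e - a_e + b_e\big) = (a_e + a_{e'})\big(1 - \discr(x_e, x_{e'})\big),
\]
where the last equality is just a rearrangement using the definition of $\discr$ in \eqref{eq:disc}. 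This identity is the crux of the argument: it packages the entire effect of the reallocation into the discrepancy, so that optimality of $s$ becomes a statement about $\discr$ alone.

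The two bounds follow immediately. Because $s$ is a social optimum we have $\Delta SC_{e \to e'} \ge 0$; since $a_e + a_{e'} > 0$ this forces $1 - \discr(x_e, x_{e'}) \ge 0$, i.e. $\discr(x_e, x_{e'}) \le 1$. Performing the symmetric move (a player from $e'$ to $e$) and using $\discr(x_{e'}, x_e) = -\discr(x_e, x_{e'})$ yields $\Delta SC_{e' \to e} = (a_e + a_{e'})\big(1 + \discr(x_e, x_{e'})\big) \ge 0$, hence $\discr(x_e, x_{e'}) \ge -1$. Together these give $\discr(x_e, x_{e'}) \in [-1,1]$, and the symmetry between the two moves is exactly what lets a single computation deliver both ends of the interval.

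The one point that needs care — and the only real obstacle — is that each reallocation presupposes a player available to move, i.e. $x_e \ge 1$ for the upper bound and $x_{e'} \ge 1$ for the lower bound. When both $e$ and $e'$ are used by $s$ both moves are legal and the argument is complete; this is precisely the situation in which the discrepancy is invoked later, since in the appeal-factor computation $e$ is always the (necessarily occupied) facility of the deviating player. I would therefore carry out the argument under the assumption that the source facility of the relevant move is nonempty, noting that this is all that is required in the sequel.
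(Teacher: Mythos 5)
Your core computation is correct, and it takes a genuinely different route from the paper's. The paper argues via a continuous relaxation: since the game is symmetric, the total load $t = x_e + x_{e'}$ must be split between $e$ and $e'$ so as to minimize the quadratic $f(x,t) = a_e x^2 + b_e x + a_{e'}(t-x)^2 + b_{e'}(t-x)$; it computes the continuous minimizer $\bar{x}_0 = \frac{2a_{e'}t - b_e + b_{e'}}{2(a_e+a_{e'})}$, asserts that the integral optimum is obtained by rounding $\bar{x}_0$ to the nearest integer, writes $x_e = \bar{x}_0 + \tfrac12\discr$ with $\discr \in [-1,1]$, and checks that this $\discr$ agrees with \eqref{eq:disc}. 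You replace this by two single-player exchanges whose cost changes are exactly $(a_e+a_{e'})(1-\discr)$ and $(a_e+a_{e'})(1+\discr)$, each nonnegative at a social optimum. Your version is more elementary (no vertex computation, no rounding step) and makes the use of optimality transparent: it verifies directly the two first-difference inequalities of $f$ that implicitly underlie the paper's rounding argument.

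Your feasibility caveat is real, and the paper's own proof silently shares it: the rounding step is only valid when $\bar{x}_0$ lies in the feasible interval $[0,t]$; otherwise the constrained integral optimum sits at a boundary point, not at the nearest integer to $\bar{x}_0$. Indeed the claim as stated needs such a proviso: with two players, $d_e(x)=x$ and $d_{e'}(x)=10$, the unique social optimum puts both players on $e$, giving $\discr(x_e,x_{e'}) = 4 - 10 = -6 \notin [-1,1]$. However, your justification for why the caveat is harmless downstream is slightly off. In the appeal-factor computation, the occupied facility $e$ is the source only of the move $e \to e'$, which yields the \emph{upper} bound $\discr \le 1$; the lower bound requires a player on $e'$, the facility deviated \emph{to}, which may well be empty under $s$ (as in the example above). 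What saves you is that only the upper bound matters in the sequel — in fact Proposition~\ref{prop:lin-cg} re-derives $\discr < 1$ directly from stability of $s$, and the lower bound $\discr \ge -1$ is never used in the estimate — so your restricted claim does suffice, but for this reason rather than the one you state.
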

\begin{proof}
Let $t = x_e + x_{e'}$ be the total number of players on facilities $e$ and $e'$ under $s$. Note that since $s$ is a social optimum and strategy sets are symmetric, $t$ is distributed among $x_e$ and $x_{e'}$ such that the social cost of these two facilities is minimized. Said differently, $x_e = x$ minimizes the function
$$
f(x, t) := a_e x^2 + b_e x + a_{e'} (t-x)^2 + b_{e'} (t-x).
$$
It is not hard to verify that the minimum of $f(x,t)$ (for fixed $t$) is attained at the (not necessarily integral) point
$$
\bar{x}_0 := \frac{2a_{e'} t - b_e + b_{e'}}{2(a_e + a_{e'})}.
$$
Because $f(x,t)$ is a parabola with its minimum at $\bar{x}_0$, the integral point $x_e$ that minimizes $f(x,t)$ is given by the point obtained by rounding $\bar{x}_0$ to the nearest integer. Let $x_e := \bar{x}_0 + \tfrac12 \discr$
be this point, where $\discr = \discr(x_e, x_{e'}) \in [-1, 1]$, and $x_{e'} = t - x_e$. Note that the choice of $\discr$ is unique, unless $\bar{x}_0$ is half-integral in which case $\discr \in \{-1, 1\}$.
Solving these equations for $\discr$ yields the definition in \eqref{eq:disc}. 
\end{proof}

\begin{proof}[Proof of Proposition~\ref{prop:lin-cg}]
Let $s$ be a stable social optimum. Note that $s$ exists by Theorem~\ref{thm:characterization}~(iii) and Theorem \ref{thm:pot}.
{If $U_i(s) = \emptyset$ for every $i \in N$ then the selfishness level is $0$ by Theorem~\ref{thm:characterization}~(ii). 
Otherwise, there is some player $i \in N$ with $U_i(s) \neq \emptyset$. 
Let $s' = (s'_i, s_{-i})$ for some $s'_i \in U_i( s)$.}
{We use $x_e$ and $x'_e$ to refer to $x_e(s)$ and $x_e(s')$ for every facility $e \in E$, respectively. Note that for every $e \in E$ we have}
\begin{equation}\label{eq:relx}
x'_e = 
\begin{cases}
x_e + 1 & \text{if $e \in s'_i \setminus s_i$}, \\
x_e - 1 & \text{if $e \in s_i \setminus s'_i$}, \\
x_e & \text{otherwise.} 
\end{cases}
\end{equation}
{Let $s_i = \{e\}$ and $s'_i = \{ e' \}$ be the sets of facilities chosen by player $i$ in $s$ and $s'$, respectively.
Exploiting \eqref{eq:relx}, we obtain }
\begin{equation}
c_i(s_i, s_{-i}) - c_i(s'_i, s_{-i}) 
= a_e x_e + b_e - a_{e'} (x_{e'} + 1) - b_{e'}. \label{eq:cg-diff} 
\end{equation}
Moreover, 
\begin{equation}
SC(s'_i, s_{-i}) - SC(s_i, s_{-i}) 
= a_{e'} (2x_{e'} + 1) + b_{e'} - a_e (2x_e-1) - b_e. \label{eq:cg-SC-diff}
\end{equation}
{Note that we have $c_i(s_i, s_{-i}) - c_i(s'_i, s_{-i}) > 0$ because $s'_i \in U_i(s)$ and by the definition of $U_i(s)$ in \eqref{contour-set}. Further, $SC(s'_i, s_{-i}) - SC(s_i, s_{-i}) > 0$ because $s$ is a stable social optimum and $s'_i \in U_i(s)$. Thus, it must hold that $a_e + a_{e'} > 0$; otherwise $a_e = a_{e'} = 0$ and \eqref{eq:cg-diff} and \eqref{eq:cg-SC-diff} yield a contradiction.}

Let $\discr = \discr(x_e, x_{e'})$ be the discrepancy between $e$ and $e'$ under $s$. Note that $\delta \in [-1, 1]$ by Claim~\ref{claim:range}. Using the definition of $\delta$ in \eqref{eq:disc}, we can rewrite \eqref{eq:cg-diff} and \eqref{eq:cg-SC-diff} as 
$$
c_i(s_i, s_{-i}) - c_i(s'_i, s_{-i}) 
= \tfrac12  (a_e + a_{e'}) \discr + \tfrac12 b_e - \tfrac12 b_{e'}  - a_{e'} 
$$
and
$$
SC(s'_i, s_{-i}) - SC(s_i, s_{-i}) 
=  (1 - \discr) (a_e + a_{e'}). 
$$
We conclude that $\delta \neq 1$. 

Thus,
\begin{align*}
\af_i(s'_i, s) 
& = \frac12 \cdot \frac{  (a_e + a_{e'})\discr + b_e - b_{e'} - 2a_{e'}}{(1 - \discr) (a_e + a_{e'})}
= \frac12 \cdot \frac{(a_e + b_e) - (a_{e'} + b_{e'})}{(1 - \discr) (a_e + a_{e'})} - \frac12 \\
& 
\le \frac12 \cdot \frac{\Delta_{\max} - \Delta_{\min}}{(1 - \discr_{\max}) a_{\min}} - \frac12.
\end{align*}
The claim now follows by Theorem \ref{thm:characterization}~(ii).

The following example shows that this bound is tight even for $n = 2$ players and two facilities. 
Let $N = \{1, 2\}$, $E = \{e, e'\}$ and $S_1 = S_2 = \{ \{e\}, \{e'\}\}$. 
Suppose we are given $\discr \in [0, 1)$ and $a_{e'} \in \mathbb{R}_+$. 
Define $d_{e}(x) = (2+\discr) a_{e'}$ and $d_{e'}(x) = a_{e'} x$. The joint strategy $s = (\{e\}, \{e'\})$ is the unique social optimum with $SC(s) = (3 +\discr) a_{e'}$. 
Further $c_1(s) = (2+\discr) a_{e'}$ and $c_2(s) = a_{e'}$. 
Suppose player $1$ deviates to $s'_1 = \{e'\}$. Then $SC(s'_1, s_2) = 4a_{e'}$ and $c_1(s'_1, s_2) = 2a_{e'}$. Thus $\af_i(s'_1, s) = \discr/(1-\discr)$, which matches precisely the upper bound given above. The case $\discr \in [-1, 0]$ is proven analogously.
\end{proof}

Observe that the selfishness level depends on the ratio $(\Delta_{\max}-\Delta_{\min})/a_{\min}$ and $1/(1-\discr_{\max})$. In particular, the selfishness level becomes arbitrarily large as $\discr_{\max}$ approaches $1$.

We next derive a bound for the selfishness level of arbitrary congestion games with linear delay functions and non-negative integer coefficients, i.e., $d_e(x) = a_e x + b_e$ with $a_e, b_e \in \mathbb{N}$ for every $e \in E$. Let $L$ be the maximum number of facilities that any player can choose, i.e., $L := \max_{i \in N, s_i \in S_i} |s_i|$.

\begin{proposition}\label{prop:lin-cg-general}
The selfishness level of a linear congestion game with non-negative integer coefficients is at most {$\max\{0, \frac12(L \Delta_{\max} - \Delta_{\min} - 1)\}$}. Moreover, this bound is tight.
\end{proposition}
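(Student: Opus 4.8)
The plan is to apply Theorem~\ref{thm:characterization}~(ii). Since a linear congestion game is a finite (exact) potential game, a stable social optimum $s$ exists by Theorem~\ref{thm:pot} and Theorem~\ref{thm:characterization}~(iii), and the selfishness level equals $\min_{s\in SSO}\max_{i,\,s'_i\in U_i(s)}\af_i(s'_i,s)$. If $U_i(s)=\emptyset$ for all $i$ the level is $0$, so fix a profitable deviation $s'=(s'_i,s_{-i})$ with $s'_i\in U_i(s)$ and set $A=s_i\setminus s'_i$ (facilities dropped) and $B=s'_i\setminus s_i$ (facilities added). Using the congestion update rule \eqref{eq:relx}, I would first record, as in the singleton computation \eqref{eq:cg-diff}--\eqref{eq:cg-SC-diff} but now summed over $A$ and $B$, the numerator $N:=c_i(s)-c_i(s')=\sum_{e\in A}(a_e x_e+b_e)-\sum_{e\in B}(a_e(x_e+1)+b_e)$ and the denominator $D:=SC(s')-SC(s)=\sum_{e\in B}(a_e(2x_e+1)+b_e)-\sum_{e\in A}(a_e(2x_e-1)+b_e)$.

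The crucial step, the one that makes the bound independent of the congestion levels $x_e$ and hence of $n$, is the algebraic identity $2N+D=\sum_{e\in A}(a_e+b_e)-\sum_{e\in B}(a_e+b_e)$, which I would verify by collecting the $A$- and $B$-terms: every $x_e$ cancels, leaving $+\Delta_e$ for $e\in A$ and $-\Delta_e$ for $e\in B$, where $\Delta_e:=a_e+b_e$. Writing $Y:=\tfrac12(2N+D)=\tfrac12\big(\sum_{e\in A}\Delta_e-\sum_{e\in B}\Delta_e\big)$, this yields the clean form $\af_i(s'_i,s)=N/D=Y/D-\tfrac12$. Note that the cruder estimate $\af_i\le N$ (from $D\ge1$) is useless here, since $N$ grows with the $x_e$; it is exactly this identity that removes that dependence.

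To finish the upper bound I would argue three points: (a) $D$ is a positive integer, hence $D\ge1$, because all $a_e,b_e$ are integers so $SC$ is integer-valued, and $D>0$ since $s$ is a stable social optimum with $s'_i\in U_i(s)$; (b) $B\ne\emptyset$, for otherwise $D=-\sum_{e\in A}(a_e(2x_e-1)+b_e)\le0$ (each term is nonnegative as $x_e\ge1$ on $A$), contradicting $D>0$; (c) consequently $Y>0$ (as $Y=N+\tfrac12D$ with $N,D>0$), while $\sum_{e\in A}\Delta_e\le|A|\Delta_{\max}\le L\Delta_{\max}$ and $\sum_{e\in B}\Delta_e\ge|B|\Delta_{\min}\ge\Delta_{\min}$. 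Since $Y>0$ and $D\ge1$ give $Y/D\le Y$, I obtain $\af_i(s'_i,s)\le Y-\tfrac12\le\tfrac12(L\Delta_{\max}-\Delta_{\min})-\tfrac12=\tfrac12(L\Delta_{\max}-\Delta_{\min}-1)$. Maximizing over players and deviations, minimizing over stable social optima by Theorem~\ref{thm:characterization}~(ii), and taking the maximum with $0$ for the degenerate case gives the stated bound.

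For tightness I would exhibit a game whose unique social optimum admits a single deviation attaining equality. The construction I have in mind uses a distinguished player who in the social optimum occupies $L$ private facilities of constant delay $\Delta_{\max}$ (congestion $1$, so dropping each saves exactly $\Delta_{\max}$), and whose only alternative strategy is a single facility of total weight $\Delta_{\min}$ pre-loaded by a suitable number of forced singleton players; one tunes this load (and the split of $\Delta_{\min}$ into $a_g,b_g$) so that the marginal social cost of the distinguished player joining exceeds the saving from leaving by exactly $1$, i.e.\ $D=1$. Then there are only two joint strategies, $SC$ increases by $1$ under the deviation (so the original pattern is the unique, hence stable, social optimum), and $\af=Y/1-\tfrac12=\tfrac12(L\Delta_{\max}-\Delta_{\min}-1)$. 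The main obstacle is twofold: getting the identity $2N+D=\sum_{e\in A}\Delta_e-\sum_{e\in B}\Delta_e$ right, and realizing $D=1$ exactly over the integers, which imposes a parity condition on $L\Delta_{\max}-\Delta_{\min}$; I would satisfy it by choosing the parameters accordingly (for instance $\Delta_{\max}=2$, $\Delta_{\min}=1$, where $D=1$ is attainable for every $L$ and the appeal factor equals $L-1$).
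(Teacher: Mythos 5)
Your proof is correct and is essentially the paper's own argument: the identity $2N+D=\sum_{e\in A}\Delta_e-\sum_{e\in B}\Delta_e$ that you isolate is exactly what the paper encodes (writing $P(\vec 1)=\sum_{e\in A}(a_e+b_e)$ and $Q(\vec 0)=\sum_{e\in B}(a_e+b_e)$) in the chain $Q(\vec x)+1\le P(\vec x)\le Q(\vec x)+\frac12(P(\vec 1)-Q(\vec 0)-1)$, combined with the same three ingredients you use, namely $D\ge 1$ by integrality plus stability, $|A|\le L$, and $|B|\ge 1$. Your tightness construction (a distinguished player holding $L$ private constant-delay facilities of value $\Delta_{\max}$, whose only alternative is a single pre-loaded linear facility of total weight $\Delta_{\min}$, tuned so that $D=1$) is also the paper's example, including the same divisibility caveat, which the paper states as $(2n-1)\Delta_{\min}=L\Delta_{\max}+1$ for some integer $n$.
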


For linear congestion games, the price of anarchy is known to be $\frac{5}{2}$ (see~\citeR{CK05,AAE05}). 
Our bound shows that the selfishness level depends on the maximum number of facilities in a strategy set and the magnitude of the coefficients of the delay functions.

\begin{proof}[Proof of Proposition~\ref{prop:lin-cg-general}]
Let $s$ be a stable social optimum. Note that $s$ exists by Theorem ~\ref{thm:characterization}~(iii) and Theorem \ref{thm:pot}. 
{If $U_i(s) = \emptyset$ for every $i \in N$ then the selfishness level is $0$ by Theorem~\ref{thm:characterization}~(ii). 
Otherwise, there is some player $i \in N$ with $U_i(s) \neq \emptyset$.}
Let $s' = (s'_i, s_{-i})$ for some $s'_i \in U_i( s)$. We use $x_e$ and $x'_e$ to refer to $x_e(s)$ and $x_e(s')$, respectively. 

Exploiting \eqref{eq:relx}, we obtain 
\begin{align*}
c_i(s_i, s_{-i}) - c_i(s'_i, s_{-i}) & 
 = \sum_{e \in s_i \setminus s'_i} (a_e x_e + b_e) - \sum_{e \in s'_i \setminus s_i} (a_e (x_e+1) + b_e).
\end{align*}
Similarly, 
\begin{align*}
SC(s'_i, s_{-i}) - SC(s_i, s_{-i}) 
& = \sum_{e \in s'_i \setminus s_i} (x_e + 1) (a_e (x_e + 1) + b_e) - x_e (a_e x_e + b_e) \\
& + \sum_{e \in s_i \setminus s'_i} (x_e - 1) (a_e (x_e - 1) + b_e) - x_e (a_e x_e + b_e) \\
& = \sum_{e \in s'_i \setminus s_i} (a_e (2x_e + 1) + b_e) - \sum_{e \in s_i \setminus s'_i} (a_e (2x_e-1) + b_e).
\end{align*}
Given a congestion vector $\vec x = (x_e)_{e \in E}$, define $P(\vec x) := \sum_{e \in s_i \setminus s'_i} (a_e x_e + b_e)$ and $Q(\vec x) := \sum_{e \in s'_i \setminus s_i} (a_e (x_e+1) + b_e)$. Note that $P(\vec x)$ and $Q(\vec x)$ are integers because $a_e, b_e \in \mathbb{N}$ for every facility $e \in E$. Note that with these definitions, $P(\vec 1) = \sum_{e \in s_i \setminus s'_i} (a_e + b_e)$ and $Q(\vec 0) = \sum_{e \in s'_i \setminus s_i} (a_e + b_e)$.
We have 
\[
\af_i(s'_i, s) 
= \frac{P(\vec x)-Q(\vec x)}{2Q(\vec x) -Q(\vec 0) - 2P(\vec x) + P(\vec 1)} .
\]
Because $s'_i \in U_i( s)$, we know that $P(\vec x) > Q(\vec x)$ and $2Q(\vec x) - Q(\vec 0) > 2P(\vec x) - P(\vec 1)$.
So we obtain 
$$
Q(\vec x) + 1 \le P(\vec x) \le Q(\vec x) + \tfrac12 (P(\vec 1) - Q(\vec 0) - 1).
$$
Exploiting these inequalities, we obtain
\begin{align*}
\af_i(s'_i, s)
& \le \frac12 (P(\vec 1) - Q(\vec 0) - 1) = \frac12 \bigg(\sum_{e \in s_i \setminus s'_i} (a_e + b_e) - \sum_{e \in s'_i \setminus s_i} (a_e + b_e) - 1 \bigg) \\
& \le \frac12 (|s_i \setminus s'_i| \cdot \Delta_{\max} - |s'_i \setminus s_i| \cdot \Delta_{\min} - 1).
\end{align*}
Note that $|s'_i \setminus s_i| \ge 1$; otherwise, $s'_i \subseteq
s_i$ and thus $SC(s'_i, s_{-i}) \le SC(s)$ which contradicts $s'_i \in
U_i(s)$.  The above expression is thus at most
$
\tfrac12(L \Delta_{\max} - \Delta_{\min} - 1).
$
The claim now follows by Theorem~\ref{thm:characterization}~(ii).

{The following example shows that this bound is tight. Fix $L$, $\Delta_{\max}$ and $\Delta_{\min}$  such that $(2n-1)\Delta_{\min} = L\Delta_{\max} + 1$ for some integer $n$.
Consider a congestion game with $N = \{1, \dots, n\}$ and $E = \{e_1, \dots, e_{L+1} \}$. Define $S_i = \{ \{e_{L+1}\} \}$ for every $i \in N \setminus \{n\}$ and $S_n = \{ \{e_1, \dots, e_{L} \}, \{e_{L+1}\}\}$. Let $d_{e_{L+1}}(x) = \Delta_{\min} x$ and $d_{e_i}(x) = \Delta_{\max}$ for every $i \in \{1, \dots, L\}$. 
For the joint strategy $s = (\{e_{L+1}\}, \dots, \{e_{L+1}\}, \{e_1, \dots, e_{L} \})$ we have $SC(s) = \Delta_{\min} (n-1)^2 + L \Delta_{\max}$ and $c_n(s) = L \Delta_{\max}$. If player $n$ deviates to $s'_n = \{ e_{L+1}\}$ we have $SC(s'_n, s_{-n}) = \Delta_{\min} n^2 = \Delta_{\min} (n-1)^2 + \Delta_{\min} (2n - 1)$ and $c_n(s'_n, s_{-n}) = \Delta_{\min} n$. Exploiting that $(2n-1)\Delta_{\min} = L\Delta_{\max} + 1$, we conclude that $SC(s) < SC(s'_n, s_{-n})$ and $c_n(s) > c_n(s'_n, s_{-n})$ (for $n \ge 3$). Thus, $s$ is a social optimum and $s'_n \in U_i(s)$. We obtain 
\begin{align*}
\af_n(s'_n, s) & = \frac{L\Delta_{\max} - \Delta_{\min} n}{\Delta_{\min}(2n-1) - L \Delta_{\max}}
= L\Delta_{\max} - \frac12 (L\Delta_{\max} +\Delta_{\min} + 1)\\
& = \frac12 (L\Delta_{\max} - \Delta_{\min} - 1).
\end{align*}
}
\end{proof}

\begin{remark}
We can use Proposition~\ref{prop:lin-cg-general} and the scaling argument outlined in Remark~\ref{rem:scaling} to derive bounds on the selfishness level of congestion games with linear delay functions and non-negative rational coefficients. 
\end{remark}

\subsection{Prisoner's Dilemma for $n$ Players}\label{exa:pri-n}

We assume that each player $i \in N = \{1, \dots, n\}$ has two strategies, $1$ (cooperate) and $0$ (defect).  
We put $p_i(s) := - cs_i + b \sum_{j \neq i} s_j$, where $b > c$. Intuitively, $b$ stands for the benefit of cooperation and $c$ for the cost of cooperation.

\begin{proposition}
The selfishness level of the $n$-players Prisoner's Dilemma game is $\frac{c}{b(n-1) - c}$.
\end{proposition}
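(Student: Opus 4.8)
The plan is to apply the characterization result in Theorem~\ref{thm:characterization}~(ii), which reduces the computation of the selfishness level to finding a stable social optimum $s$ and evaluating $\alpha(s) = \max_{i \in N,\, s'_i \in U_i(s)} \af_i(s'_i, s)$. So the first step is to identify the social optimum. Writing $p_i(s) = 1 - s_i + 2\sum_{j \neq i} s_j$, the social welfare is $SW(s) = \sum_i (1 - s_i) + 2\sum_i \sum_{j \neq i} s_j = n - \sum_i s_i + 2(n-1)\sum_i s_i = n + (2n-3)\sum_i s_i$. Since $2n - 3 > 0$ for $n > 1$, $SW(s)$ is maximized exactly when every $s_i = 1$, so the unique social optimum is the all-cooperate profile $s = \vec{1}$. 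Being unique, it is automatically stable (and finiteness of the selfishness level follows from Theorem~\ref{thm:characterization}~(iii)).

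Next I would compute the single relevant appeal factor. Each player has only the binary choice $1$ or $0$, so the only candidate deviation from $s = \vec{1}$ is player $i$ switching to $s'_i = 0$. I would first check that this deviation lies in $U_i(s)$, i.e., that it strictly increases $p_i$. With all opponents playing $1$, we have $p_i(\vec 1) = 1 - 1 + 2(n-1) = 2n - 2$, whereas after deviating $p_i(0, \vec 1_{-i}) = 1 - 0 + 2(n-1) = 2n - 1$. The gain in the deviator's payoff is therefore $p_i(s'_i, s_{-i}) - p_i(s_i, s_{-i}) = 1 > 0$, confirming $s'_i \in U_i(s)$.

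For the denominator I compute the loss in social welfare. Using $SW(s) = n + (2n-3)\sum_j s_j$, going from $\sum_j s_j = n$ to $\sum_j s_j = n-1$ drops the social welfare by exactly $2n - 3$, so $SW(s_i, s_{-i}) - SW(s'_i, s_{-i}) = 2n - 3$. Hence $\af_i(0, s) = \frac{1}{2n-3}$. By symmetry this is the same for every player, so $\alpha(s) = \frac{1}{2n-3}$, and since $s$ is the unique (hence stable) social optimum, Theorem~\ref{thm:characterization}~(ii) gives that the selfishness level equals $\frac{1}{2n-3}$.

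I expect no serious obstacle here: the game is a dominant-strategy (ordinal potential) game with a unique social optimum, so the main characterization theorem applies cleanly and the only work is the two short payoff and social-welfare computations. The one point requiring a little care is verifying that the all-cooperate profile is genuinely the social optimum (which hinges on $2n - 3 > 0$, valid for all $n > 1$) and that the single deviation is indeed profitable for the deviator, so that it contributes to the maximum defining $\alpha(s)$; both are immediate from the explicit payoff formula.
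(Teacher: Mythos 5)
Your proof is correct and takes essentially the same route as the paper's: identify $\vec{1}$ as the unique (hence stable) social optimum, compute the appeal factor $\frac{1}{2n-3}$ of the single profitable deviation to $0$, and conclude via Theorem~\ref{thm:characterization}~(ii). The only cosmetic difference is that you derive the closed form $SW(s) = n + (2n-3)\sum_i s_i$ (which makes uniqueness of the optimum and the welfare drop immediate), whereas the paper evaluates the payoffs and social welfare pointwise at the two relevant joint strategies.
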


Intuitively, this means that when the number of players in the  Prisoner's Dilemma game
increases, a smaller share of the social welfare is needed to
resolve the underlying conflict. The same observation holds for the value of the benefit.
That is, the `acuteness' of the dilemma diminishes
with the number of players and also diminishes when the value of the benefit grows.
The formal reason is that the appeal factor of each unilateral deviation from the social optimum
is inversely proportional to the number of players and inversely proportional to the benefit.

\begin{proof}
In this game $s = \vec{1}$ is the unique social optimum, with
for each $i \in N$, $p_i(s) = bn - (b + c)$ 
and $SW(s) = b n^2 - (b+c) n$.
Consider now the joint strategy $(s'_i, s_{-i})$ in which player $i$
deviates to the strategy $s'_i = 0$.
We have then $p_i(s'_i, s_{-i}) = bn - b$ and $SW(s'_i, s_{-i}) = b n^2 - (b+c) n + c - b(n-1)$.
Hence $\af_i(s'_i, s) = \frac{c}{b(n-1) - c}$.
The claim now follows by  Theorem~\ref{thm:characterization}~(ii).
\end{proof}

In particular, for $n = b = 2$ and $c = 1$ we get the original Prisoner's Dilemma game considered in
Example~\ref{exa:prisoners-dilemma} and as already argued there the selfishness level is then 1.

\subsection{Public Goods}\label{exa:public-goods}

We consider the public goods game with $n$ players. Every player $i \in N = \{1, \dots, n\}$ chooses an amount $s_i \in [0, b]$ that he contributes to a public good, where $b \in \mathbb{R}_+$ is the budget. The game designer collects the individual contributions of all players, multiplies their sum by $c > 1$ and distributes the resulting amount evenly among all players. The payoff of player $i$ is thus $p_i(s) := b - s_i + \frac{c}{n} \sum_{j \in N} s_j$. 

\begin{proposition}\label{prop:public-good}
The selfishness level of the $n$-players public goods game is $\max\big\{0, \frac{1 - \frac{c}{n}}{c-1}\big\}$.
\end{proposition}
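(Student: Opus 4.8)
The plan is to apply the characterization result Theorem~\ref{thm:characterization}~(ii), which says that for a game with finite selfishness level the level equals $\min_{s \in SSO} \af$-value over profitable deviations from stable social optima. Since this is a cost-free payoff-maximization game, I first identify the social optima and the (unique) Nash equilibrium, then compute the appeal factor of the relevant unilateral deviations.

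First I would determine the social optimum. The social welfare is $SW(s) = \sum_{i} p_i(s) = \sum_i \big(b - s_i + \frac{c}{n}\sum_j s_j\big) = nb - \sum_i s_i + c \sum_j s_j = nb + (c-1)\sum_j s_j$. Since $c > 1$, this is strictly increasing in each $s_j$, so the unique social optimum is $s^* = (b, \dots, b)$, where every player contributes the full budget $b$. Being unique, it is automatically stable, and by Theorem~\ref{thm:characterization}~(iii) the selfishness level is finite (when it is positive).

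Next I would compute the appeal factor for a deviation by player $i$ from $s^*$. A deviation means player $i$ lowers his contribution to some $s_i' \in [0,b)$; the only profitable deviations (members of $U_i(s^*)$) are those that strictly increase $p_i$. Since $p_i(s) = b - s_i + \frac{c}{n}\sum_j s_j$, decreasing $s_i$ by an amount $b - s_i'$ changes $p_i$ by $(b - s_i')\big(1 - \frac{c}{n}\big)$, which is positive precisely when $c < n$; the corresponding drop in social welfare is $(c-1)(b - s_i') > 0$. Hence $\af_i(s_i', s^*) = \frac{(b - s_i')(1 - \frac{c}{n})}{(b - s_i')(c-1)} = \frac{1 - \frac{c}{n}}{c - 1}$, which is independent of $s_i'$ and of $i$. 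Taking the maximum over all $i$ and all $s_i' \in U_i(s^*)$ gives exactly this common value when $c < n$; when $c \ge n$ no deviation is profitable, so $U_i(s^*) = \emptyset$ for all $i$ and the selfishness level is $0$. Combining the two cases yields $\max\{0, \frac{1 - c/n}{c-1}\}$, and the claim follows by Theorem~\ref{thm:characterization}~(ii).

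The only subtle point I anticipate is handling the continuum of strategies cleanly: unlike the finite examples, each player has infinitely many deviations $s_i' \in [0,b)$, so I must check that the appeal factor is constant across all of them (it is, because both numerator and denominator scale linearly in $b - s_i'$, so the factor cancels) and confirm that the supremum is genuinely attained rather than merely approached. Since the appeal factor is a constant, this is immediate and no infimum-versus-minimum $\alpha^+$ pathology arises here. The case analysis on whether $c < n$ or $c \ge n$ is what produces the $\max\{0, \cdot\}$ form in the statement.
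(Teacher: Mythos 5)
Your proof is correct and takes essentially the same route as the paper's: compute $SW(s) = nb + (c-1)\sum_j s_j$, conclude that $\vec{b}$ is the unique (hence stable) social optimum, show that every deviation $s_i' \in [0,b)$ has the constant appeal factor $(1-\tfrac{c}{n})/(c-1)$, and split on the sign of $1 - \tfrac{c}{n}$ before invoking Theorem~\ref{thm:characterization}~(ii). One small nit: Theorem~\ref{thm:characterization}~(iii) is stated only for \emph{finite} games and the public goods game has a continuum of strategies, so you should not cite it for finiteness; this costs nothing, since the constant (finite) appeal factor you compute gives finiteness directly via Theorem~\ref{thm:characterization}~(i), which is all the paper itself relies on.
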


In this game, every player has an incentive to ``free ride'' by contributing $0$ to the public good (which is a dominant strategy if $c \le n$). This is exactly as in the $n$-players Prisoner's Dilemma game \gs{(where defect is a dominant strategy if $c > 0$)}. However, the above proposition reveals that for fixed $c$, in contrast to the Prisoner's Dilemma game, this temptation becomes stronger as the number of players increases. Also, for a fixed number of players this temptation becomes weaker as $c$ increases.

\begin{proof}[Proof of Proposition~\ref{prop:public-good}]
Note that $SW(s) = bn + (c - 1)\sum_{i \in N} s_i$. The unique social optimum of this game is therefore $s = \vec{b}$ with $p_i(s) = cb$ for every $i \in N$ and $SW(s) = cbn$. Suppose player $i$ deviates from $s$ by choosing $s'_i \in [0, b)$. Then $p_i(s'_i, s_{-i}) = cb + (1 - \tfrac{c}{n})(b - s'_i)$.
Thus,
$$
p_i(s'_i, s_{-i}) - p_i(s) = (1 - \tfrac{c}{n})(b - s'_i)
\quad\text{and}\quad
SW(s) - SW(s'_i, s_{-i}) = (c-1) (b - s'_i).
$$
If $1 - \tfrac{c}{n} \le 0$ then $U_i(s) = \emptyset$ and the selfishness level is zero. Otherwise, $1 - \tfrac{c}{n} > 0$ and $U_i(s) = [0,b)$.
We conclude that in this case $\af_i(s'_i, s) = (1-\tfrac{c}{n})/(c-1)$ for every $s'_i \in U_i(s)$. 
The claim now follows by  Theorem~\ref{thm:characterization}~(ii).
\end{proof}

\subsection{Traveler's Dilemma}\label{exa:traveller}

This is a strategic game discussed by \citeA{Bas94} with two players $N = \{1, 2\}$, strategy set $S_i = \{2, \dots, 100\}$ for every player $i$, and payoff function $p_i$ for every $i$ defined as
\[
p_i(s) :=
        \begin{cases}
        s_{i}       & \mathrm{if}\  s_{i} = s_{-i} \\
        s_{i} + b & \mathrm{if}\  s_{i} < s_{-i} \\
        s_{-i} - b    & \mathrm{otherwise,}
        \end{cases}
\]
where $b > 1$ is the bonus.

\begin{proposition}
The selfishness level of the Traveler's Dilemma game is $\frac{b-1}{2}$.
\end{proposition}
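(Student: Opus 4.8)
The plan is to invoke the characterization of Theorem~\ref{thm:characterization}~(ii), so that the selfishness level is obtained by maximizing the appeal factor over profitable unilateral deviations from a stable social optimum. The first step is to find a closed form for the social welfare. Writing the two claims as $s_1$ and $s_2$, a case analysis over the three branches of $p_i$ gives $SW(s) = p_1(s) + p_2(s) = 2\min(s_1, s_2)$ in every case: when $s_1 = s_2$ both players receive the common value, and when the claims differ the low bidder's bonus of $+2$ and the high bidder's penalty of $-2$ cancel, leaving twice the smaller claim. Consequently the social welfare is maximized precisely when $\min(s_1, s_2)$ is largest, which forces $s_1 = s_2 = 100$. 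Thus $s = (100, 100)$ is the unique social optimum; being unique it is vacuously stable, so $SSO = \{(100,100)\}$ and the selfishness level equals $\alpha(s)$ for this single joint strategy.

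Next I would evaluate the appeal factors of deviations from $s = (100,100)$, where $p_i(s) = 100$ and $SW(s) = 200$. Since $100$ is the maximal strategy, player $i$ can only deviate to some $a \in \{2, \dots, 99\}$, which makes him the strict low bidder; hence $p_i(a, s_{-i}) = a + 2$ and $SW(a, s_{-i}) = 2a$. The payoff gain is $(a+2) - 100 = a - 98$ and the social-welfare loss is $200 - 2a = 2(100 - a)$, so
\[
\af_i(a, s) = \frac{a - 98}{2(100 - a)}.
\]
The deviation to $a$ lies in the upper contour set $U_i(s)$ only when the numerator is positive, i.e., when $a - 98 > 0$; over the integers this holds for $a = 99$ alone. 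For $a = 99$ the gain is $1$ and the loss is $2$, giving $\af_i(99, s) = \tfrac12$, and the analysis for the other player is symmetric.

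Putting these together, $\alpha(s) = \max_{i,\, s'_i \in U_i(s)} \af_i(s'_i, s) = \tfrac12$, and Theorem~\ref{thm:characterization}~(ii) then gives that the selfishness level is $\tfrac12$. The step I expect to require the most care is exactly the restriction to $U_i(s)$: the discreteness of the strategy set is essential, since $a = 99$ is the only lower claim that strictly raises the deviator's own payoff, so $U_i(s) = \{99\}$. Were the strategies real-valued, claims $a \uparrow 100$ would drive $\af_i(a, s)$ to $\infty$, so one must be careful to maximize only over genuinely profitable deviations rather than over all claims with positive denominator.
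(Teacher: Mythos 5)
Your proof is correct and follows essentially the same route as the paper: identify $(100,100)$ as the unique (hence stable) social optimum, compute the appeal factor $(s'_i-98)/(200-2s'_i)$ of a unilateral deviation, and conclude via Theorem~\ref{thm:characterization}~(ii) that the selfishness level is $\tfrac12$, attained at $s'_i=99$. Your explicit restriction to $U_i(s)=\{99\}$ is a slightly more careful reading of the characterization than the paper's maximization over all $s'_i \le 99$ (which happens to give the same value since unprofitable deviations have nonpositive appeal factor), and your closing remark about the role of discreteness is a correct and worthwhile observation.
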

\begin{proof}
The unique social optimum of this game is $s = (100,100)$, while
$(2,2)$ is its unique Nash equilibrium. If player $i$ deviates from $s$ to a strategy $s'_i \leq 99$, while the other player remains at $100$,
the respective payoffs become $s'_i +b$ and $s'_i -b$, so the social welfare becomes $2 s'_i$.
So $\af_i(s'_i, s) = (s'_i + b - 100)/(200 - 2 s'_i)$.
The maximum, $\frac{b-1}{2}$,
is reached when $s'_i = 99$. So the claim follows by Theorem~\ref{thm:characterization}~(ii).
\end{proof}

Intuitively, this means that as the bonus $b$ increases a larger share of the social welfare needs to be used to ensure cooperation.

\subsection{Tragedy of the Commons}\label{exa:tragedy}

Assume that each player $i \in N = \{1, \dots, n\}$ has the real interval $[0,1]$ as its set of strategies.
Each player's strategy is his chosen fraction of a common resource. Let (see \citeR[Exercise 63.1]{Osb05} and \citeR[pp.~6--7]{TV07}):
$$
p_i(s) := \max\Big(0,\ s_i \Big(1 - \sum_{j = 1}^{n} s_j\Big)\Big).
$$
 This payoff function reflects the fact that player's enjoyment of
 the common resource depends positively from his chosen fraction of the
 resource and negatively from the total fraction of the common
 resource used by all players. Additionally, if the total
 fraction of the common resource by all players exceeds a feasible
 level, here 1, then player's enjoyment of the resource becomes zero.

\begin{proposition}
\label{prop:tragedy}
The selfishness level of the $n$-players Tragedy of the Commons game is $\infty$.
\end{proposition}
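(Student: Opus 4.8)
The plan is to invoke the characterization in Theorem~\ref{thm:characterization}~(i): the selfishness level is finite iff there exists a stable social optimum $s$ for which $\alpha(s) := \max_{i \in N,\, s'_i \in U_i(s)} \af_i(s'_i, s)$ is finite. I will prove the stronger statement that \emph{every} social optimum $s$ admits profitable unilateral deviations with unbounded appeal factors, so that $\alpha(s) = \infty$. This covers every stable social optimum (and also the case where none exists), and hence by the theorem forces the selfishness level to be $\infty$. The first step is to identify the social optima. Writing $T := \sum_{j \in N} s_j$, a short case analysis on the clamp shows that $SW(s) = T(1-T)$ when $T \le 1$ and $SW(s) = 0$ when $T \ge 1$ (in the latter case every term $\max(0, s_j(1-T))$ vanishes). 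Since $T(1-T)$ is maximized at $T = \tfrac12$ with value $\tfrac14$, the social optima are exactly the joint strategies with $\sum_{j} s_j = \tfrac12$, all of social welfare $\tfrac14$; note there is a whole continuum of them.

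Next I fix an arbitrary social optimum $s$, so $\sum_j s_j = \tfrac12$. Because there are $n > 1$ players whose nonnegative strategies sum to $\tfrac12$, their average is at most $\tfrac14$, so some player $i$ has $s_i < \tfrac12$. Set $\sigma := \sum_{j \neq i} s_j = \tfrac12 - s_i > 0$. I then consider the family of deviations $s'_i = s_i + \epsilon$ for small $\epsilon \in (0, \sigma)$, which keeps the new total $T' = \tfrac12 + \epsilon \le 1$ inside the smooth regime. A direct computation (all payoffs given by the quadratic branch) yields the payoff gain $p_i(s'_i, s_{-i}) - p_i(s) = \epsilon(\sigma - \epsilon) > 0$, so that $s'_i \in U_i(s)$, together with the welfare loss $SW(s) - SW(s'_i, s_{-i}) = \tfrac14 - T'(1 - T') = \epsilon^2 > 0$. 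Hence $\af_i(s'_i, s) = (\sigma - \epsilon)/\epsilon$, which tends to $\infty$ as $\epsilon \to 0^+$.

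It follows that $\alpha(s) = \infty$ for every social optimum $s$, and in particular for every stable one. By Theorem~\ref{thm:characterization}~(i) the selfishness level cannot be finite, and therefore it is $\infty$, as claimed.

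The main obstacle, though ultimately mild, is twofold. First, one must correctly handle the $\max(0, \cdot)$ clamp in the payoffs by confining the deviation to the region $T' \le 1$, where the payoff and social welfare are the smooth quadratics used above. Second, because there is a continuum of social optima, the argument must apply uniformly to all of them rather than to a single canonical optimum; this is handled by the counting observation that at least one player necessarily has $s_i < \tfrac12$. The quantitative heart of the proof is that, near a social optimum, the social welfare loss is second order ($\epsilon^2$) in the deviation while the deviating player's gain is first order ($\epsilon\sigma$), and this mismatch is precisely what drives the appeal factor to infinity.
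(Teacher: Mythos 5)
Your proof is correct and takes essentially the same approach as the paper: both identify the social optima as exactly the joint strategies summing to $\tfrac12$, consider upward deviations $s'_i = s_i + \epsilon$ with $\epsilon \in (0, \tfrac12 - s_i)$, and exploit that the deviating player's gain $\epsilon(\sigma - \epsilon)$ is first order while the welfare loss $\epsilon^2$ is second order, so the appeal factor $(\sigma-\epsilon)/\epsilon$ is unbounded and Theorem~\ref{thm:characterization}~(i) forces an infinite selfishness level. Your explicit counting argument guaranteeing a player with $s_i < \tfrac12$ is a minor refinement of a point the paper leaves implicit, but the substance of the argument is identical.
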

Intuitively, this result means that in this game no matter how much we `involve' the players in 
sharing the social welfare we cannot achieve that they will select
a social optimum.

\begin{proof}
We first determine the stable social optima of this game. Fix a
joint strategy $s$ and let $t := \sum_{j \in N} s_j$.  
If $t > 1$, then the social welfare is $0$. So assume that $t
\leq 1$.  Then $SW(s) = t (1-t)$.  
This expression becomes maximal precisely when $t = \frac{1}{2}$ and
then it equals $\frac{1}{4}$.  So this game has infinitely many social
optima and each of them is stable.

Take now a stable social optimum $s$. So $\sum_{j \in N} s_j = \frac{1}{2}$.
Fix $i \in \{1, \dots, n\}$.
Denote $s_i$ by $a$ and consider a strategy $x$ of player $i$ such that $p_i(x, s_{-i}) > p_i(a, s_{-i})$.
Then $\sum_{j \neq i} s_j + x \neq \frac{1}{2}$, so $SW(a, s_{-i}) > SW(x, s_{-i})$.

We have $p_i(a, s_{-i}) = \frac{a}{2}$ and $SW(a, s_{-i}) = \frac{1}{4}$.
Further, $p_i(x, s_{-i}) > p_i(a, s_{-i})$ implies $\sum_{j \neq i} s_j + x < 1$ and hence
\[
p_i(x, s_{-i}) = x (a + \tfrac{1}{2} - x)
\quad\text{and}\quad
SW(x, s_{-i}) = (\tfrac{1}{2} - a + x) (1 - \tfrac{1}{2} + a - x) = \tfrac{1}{4} - (a-x)^2.
\]

Also $x \neq a$. Hence
\begin{align*}
\af_i(x,s) = & \frac{p_i(x, s_{-i}) - p_i(a, s_{-i})}{SW(a, s_{-i}) - SW(x, s_{-i})} = \frac{(a-x) (x - \frac{1}{2})}{(a-x)^2} = \frac{x - \frac{1}{2}}{a-x} = -1 + \frac{a - \frac{1}{2}}{a-x} 
\end{align*}

Since $p_i(x, s_{-i}) - p_i(a, s_{-i}) = (a-x)(x - \tfrac{1}{2})$ we have
$p_i(x, s_{-i}) > p_i(a, s_{-i})$ iff $a < x < \frac{1}{2}$ or $a > x > \frac{1}{2}$.
But $a \leq \frac{1}{2}$, since $\sum_{j \neq i} s_j + a =
\frac{1}{2}$.  So the conjunction of
$p_i(x, s_{-i}) > p_i(a, s_{-i})$ and $SW(x, s_{-i}) < SW(a, s_{-i})$ holds
iff $a < x < \frac{1}{2}$.
Now
$
\max_{a < x < \frac{1}{2}} \af_i(x,s) = \infty.
$
But $s$ was an arbitrary stable social optimum, so the claim follows by Theorem \ref{thm:characterization}~(i).
\end{proof}

\subsection{Cournot Competition}\label{exa:competition}

We consider Cournot competition for $n$ firms with a linear inverse demand function
and constant returns to scale (see, e.g., \citeR[pp.~174--175]{JR11}).
So we assume that each player $i \in N = \{1, \dots, n\}$ has a strategy set $S_i = \mathbb{R}_+$ and 
payoff function
$
p_i(s) := s_i (a - b \sum_{j \in N} s_j ) - c s_i
$
for some given $a, b, c$, where $a > c \geq 0$ and $b > 0$.

The price of the product is represented by the expression $a - b
\sum_{j \in N} s_j$ and the production cost corresponding to the
production level $s_i$ by $c s_i$.  In what follows we rewrite the
payoff function as $p_i(s) := s_i (d - b \sum_{j \in N} s_j)$, where
$d: = a-c$.
{Note that the payoffs can be negative, which was not the case in the
tragedy of the commons game. Still the proofs are very similar for
both games.}

\begin{proposition}
The selfishness level of the $n$-players Cournot competition game is $\infty$.
\end{proposition}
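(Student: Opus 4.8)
The plan is to mirror the argument used for the Tragedy of the Commons (Proposition~\ref{prop:tragedy}) and to invoke Theorem~\ref{thm:characterization}~(i): it suffices to show that \emph{every} stable social optimum $s$ admits a player $i$ with a family of profitable deviations whose appeal factors are unbounded, so that $\alpha(s) = \infty$ for all stable social optima.

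First I would identify the social optima. Writing $T := \sum_{j \in N} s_j$, the social welfare collapses to $SW(s) = T(d - bT)$, a downward parabola in $T$ maximized at $T = \frac{d}{2b}$ with value $\frac{d^2}{4b}$. Hence the social optima are exactly the joint strategies with $\sum_{j \in N} s_j = \frac{d}{2b}$; at such a point $d - bT = \frac{d}{2} > 0$, so all individual payoffs are nonnegative. Each such $s$ is stable, since any unilateral deviation $(s'_i, s_{-i})$ that is again a social optimum must satisfy $\sum_{j \neq i} s_j + s'_i = \frac{d}{2b} = \sum_{j \neq i} s_j + s_i$, forcing $s'_i = s_i$.

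Next I would fix an arbitrary stable social optimum $s$ and, writing $a := s_i$ and $D := \frac{d}{2}$, compute the effect of player $i$ deviating to $x$. A short computation gives $p_i(x, s_{-i}) - p_i(s) = (x - a)(D - bx)$ and, using $\sum_{j \neq i} s_j = \frac{d}{2b} - a$, that $SW(s) - SW(x, s_{-i}) = b(x - a)^2$. Thus the deviation is profitable precisely when $a < x < \frac{d}{2b}$ (the opposite sign combination is excluded because $a = s_i \le T = \frac{d}{2b}$), and for such $x$ the appeal factor simplifies to
$$\af_i(x, s) = \frac{(x-a)(D - bx)}{b(x-a)^2} = \frac{D - bx}{b(x - a)}.$$
Because $n > 1$ and the coordinates of $s$ sum to $\frac{d}{2b} > 0$, at least one player $i$ satisfies $a = s_i < \frac{d}{2b}$, so $D - ba > 0$ and profitable deviations exist. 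Letting $x \to a^+$, the numerator tends to $D - ba > 0$ while the denominator tends to $0^+$, whence $\sup_{a < x < d/(2b)} \af_i(x, s) = \infty$.

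Finally I would conclude: since $s$ was an arbitrary stable social optimum and $\alpha(s) = \infty$ in every case, Theorem~\ref{thm:characterization}~(i) yields that the selfishness level is infinite. The only point that requires care is the case analysis pinning down which deviations are profitable, together with the observation that a player with $s_i < \frac{d}{2b}$ always exists (guaranteed by $n > 1$ and $\frac{d}{2b} > 0$); the fact that payoffs may be negative elsewhere in the strategy space plays no role, exactly as the remark preceding the proposition anticipates.
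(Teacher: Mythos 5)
Your proposal is correct and follows essentially the same route as the paper's own proof: identify the social optima as those with $\sum_j s_j = \frac{d}{2b}$, note they are all stable, compute the payoff gain $(x-a)(\frac{d}{2}-bx)$ and welfare loss $b(x-a)^2$, and let $x \to a^+$ to blow up the appeal factor before invoking Theorem~\ref{thm:characterization}~(i). If anything, you are slightly more careful than the paper in one spot: you explicitly justify that every stable social optimum has \emph{some} player with $s_i < \frac{d}{2b}$ (so that profitable deviations exist), whereas the paper fixes an arbitrary player and leaves that point implicit.
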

\begin{proof} 
We first determine the stable social optima of this game. Fix a joint
strategy $s$ and let $t := \sum_{j \in N} s_j$.  Then $SW(s) = t(d -
b t)$.  This expression becomes maximal precisely when $t =
\frac{d}{2b}$.  So this game has infinitely many social optima and
each of them is stable.

Take now a stable social optimum $s$. So $\sum_{j \in N} s_j = \frac{d}{2b}$.
Fix $i \in N$.
Let $u := \sum_{j \neq i} s_j$.  
For every strategy $z$ of player $i$
\[
p_i(z,s_{-i}) = - b z^2 + (d - bu) z \quad\text{and}\quad
SW(z,s_{-i}) = - b z^2 + (d - 2bu) z + u(d - bu).
\]

Denote now $s_i$ by $y$ and consider a strategy $x$ of player $i$ such that $p_i(x, s_{-i}) > p_i(y, s_{-i})$.
Then $u + x \neq \frac{d}{2b}$, so $SW(y, s_{-i}) > SW(x, s_{-i})$.

We have
\begin{align*}
p_i(x, s_{-i}) - p_i(y, s_{-i}) &  = - b (x^2 - y^2) + (d - bu) (x-y)  \\
& 
= - b (x-y) (x + y + u - \tfrac{d}{b}) 
= - b (x-y) (x - \tfrac{d}{2b}),
\end{align*}
where the last equality holds since $u - \frac{d}{b} = - (y + \frac{d}{2b})$ 
on the account of the equality $u + y = \frac{d}{2b}$.

Further, 
\begin{align*}
SW(y, s_{-i}) - SW(x, s_{-i}) & = b (x^2 - y^2) - (d -2bu) (x-y) \\
& 
= b (x -y) (x + y + 2u - \tfrac{d}{b}) =  b (x-y)^2,
\end{align*}
where the last equality holds since $2u - \frac{d}{b} =  - 2y$ on the account of the equality $u + y = \frac{d}{2b}$.

We have $x \neq y$. Hence
\[
\af_i(x, s) = \frac{p_i(x, s_{-i}) - p_i(y, s_{-i})}{SW(y, s_{-i}) - SW(x, s_{-i})} = - \frac{x- \frac{d}{2b}}{x-y} = 
- 1 + \frac{y - \frac{d}{2b}}{y - x}.
\]

Since $p_i(x, s_{-i}) - p_i(y, s_{-i}) =  b (y - x)(x - \tfrac{d}{2b})$ we have $p_i(x, s_{-i}) - p_i(y, s_{-i}) > 0$ iff $y < x < \frac{d}{2b}$ or $y >  x > \frac{d}{2b}$.
But $y \leq \frac{d}{2b}$, since $u + y = \frac{d}{2b}$.
So the conjunction of
$p_i(x, s_{-i}) > p_i(y, s_{-i})$ and $SW(x, s_{-i}) > SW(y, s_{-i})$ holds
iff $y < x < \frac{d}{2b}$.
Now
$
\sup_{y < x < \frac{d}{2b}} \af_i(x,s) = \infty.
$
But $s$ was an arbitrary stable social optimum, so the claim follows by Theorem \ref{thm:characterization}~(i).
\end{proof}

This proof shows that for every stable social optimum $s$, for every player there exist deviating strategies with an arbitrary high appeal factor. In fact, $\lim_{x \rightarrow y^+} \af_{i}(x,s) = \infty$, i.e., the appeal factor of the deviating strategy $x$ converges to $\infty$ when it converges from the right to the original strategy $y$ in $s$.

\subsection{Bertrand Competition}\label{exa:bcompetition}

Next, we consider Bertrand competition, a game concerned with a simultaneous selection of prices for the same product by two firms (see, e.g.,~\citeR[pp.~175--177]{JR11}).
The product is then sold by the firm that
  chose the lower price.  In the case of a tie the product is sold by
  both firms and the profits are split. We assume that each firm has 
identical marginal costs $c > 0$ and no fixed cost, and that
each strategy set $S_i$ equals $[c, \frac{a}{b})$,
where $c < \frac{a}{b}$.
The payoff function for player $i  \in \{1, 2\}$ is given by
\[
p_i(s_i, s_{3-i}) := 
\begin{cases}
(s_i - c)(a - b s_i) &  \mbox{if $c < s_i < s_{3-i}$} \\
\frac12 (s_i - c)(a - b s_i) &  \mbox{if $c < s_i = s_{3-i}$} \\
0 &  \mbox{otherwise.} 
\end{cases}
\]

\begin{proposition}
\label{prop:bertrand}
The selfishness level of the Bertrand competition game is $\infty$.
\end{proposition}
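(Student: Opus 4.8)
The plan is to invoke Theorem~\ref{thm:characterization}~(i): I will determine the (unique) stable social optimum of the game and then exhibit profitable deviations from it whose appeal factor is unbounded, so that $\alpha(s) = \infty$ for every stable social optimum and hence the selfishness level is $\infty$.

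First I would compute the social welfare. Since the firm with the lower price captures the whole market (and the two firms split it on a tie), for every joint strategy we have $SW(s_1, s_2) = \pi(\min(s_1, s_2))$, where $\pi(p) := (p - c)(a - b p)$ is the monopoly profit, provided $\min(s_1, s_2) > c$ (and $SW = 0$ otherwise). The function $\pi$ is a downward parabola maximized at the monopoly price $p^* := \frac{a + bc}{2b}$, and from $c < \frac{a}{b}$ one checks that $p^* \in (c, \frac{a}{b})$ and $\pi(p^*) > 0$. Consequently the social optima are exactly the joint strategies with $\min(s_1, s_2) = p^*$.

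Next I would isolate the stable social optima. The symmetric profile $s = (p^*, p^*)$, where each firm earns $\tfrac12 \pi(p^*)$, is stable: any other social optimum reachable by a unilateral move keeps the opponent at $p^*$ and raises the deviator's price above $p^*$, dropping his payoff to $0$. In contrast, an asymmetric optimum $(p^*, s_2)$ with $s_2 > p^*$ is \emph{not} stable, since player~$2$ can switch to the social optimum $(p^*, p^*)$ and improve his payoff from $0$ to $\tfrac12 \pi(p^*)$; the case $(s_1, p^*)$ with $s_1 > p^*$ is symmetric. Hence $(p^*, p^*)$ is the unique stable social optimum, and by Theorem~\ref{thm:characterization}~(i) it suffices to analyze deviations from it.

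The core of the argument is the undercutting deviation. From $s = (p^*, p^*)$, if player~$1$ lowers his price to $s'_1$ with $c < s'_1 < p^*$ he captures the whole market, so $p_1(s'_1, s_{-1}) = \pi(s'_1)$ and $SW(s'_1, s_{-1}) = \pi(s'_1)$, while $SW(s) = \pi(p^*)$. For $s'_1$ just below $p^*$ we have $\pi(s'_1) > \tfrac12 \pi(p^*)$, so $U_1(s) \neq \emptyset$ and
\[
\af_1(s'_1, s) = \frac{\pi(s'_1) - \tfrac12 \pi(p^*)}{\pi(p^*) - \pi(s'_1)}.
\]
As $s'_1 \to p^*$ from below, the numerator tends to $\tfrac12 \pi(p^*) > 0$ while the denominator tends to $0^+$ (using that $p^*$ is the strict maximizer of $\pi$), so the appeal factor is unbounded; thus $\alpha((p^*, p^*)) = \infty$. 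Since this is the only stable social optimum, Theorem~\ref{thm:characterization}~(i) yields that the selfishness level is $\infty$. The main obstacle I anticipate is the stability analysis rather than the limit computation: one must handle the tie-breaking factor $\tfrac12$ correctly and verify that no asymmetric social optimum is stable, because Theorem~\ref{thm:characterization} only permits us to restrict attention to stable social optima. Once that is settled, the limiting appeal-factor computation mirrors the Cournot and Tragedy of the Commons proofs.
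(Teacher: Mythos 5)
Your proof is correct and follows essentially the same route as the paper's: identify the social optima as profiles with $\min(s_1,s_2) = \frac{a+bc}{2b}$, show the symmetric profile is the unique stable social optimum, and let an undercutting deviation approach the monopoly price so that the appeal factor's numerator tends to $\tfrac12(d-c)(a-bd) > 0$ while its denominator tends to $0^+$, then invoke Theorem~\ref{thm:characterization}~(i). Your explicit verification that $(p^*,p^*)$ is itself stable is a small addition the paper leaves implicit, but the argument is otherwise identical.
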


\begin{proof}
Let $d := \frac{a + bc}{2b}$.
If $SW(s) > 0$, then $SW(s) =  (s_0 - c)(a - b s_0)$, where $s_0 := \min(s_1, s_2)$. 
Note that $d \in (c, \frac{a}{b})$, since by the assumption $bc < a$.
Hence $s$ is a social optimum iff $\min(s_1, s_2) = d$.

If $s$ is a social optimum with $s_1 \neq s_2$, then player $i$ with the larger
$s_i$ can profitably deviate to $s_{3-i}$ (that equals $d$), 
while $(s_{3-i}, s_{3-i})$ remains a social optimum.
So the only stable social optimum is $(d,d)$. 

Fix  $i \in \{1, 2\}$. Note that if $s_i$ is slightly lower than $d$, then $p_i(s_i, d) > p_i(d, d)$.
Further,
\[
\lim_{s_i \to d^-} 
(p_i(s_i, d) - p_i(d,d)) = \tfrac12 (d-c)(a - bd),
\quad\text{while}\quad
\lim_{s_i \to d^-} (SW(d,d) - SW(s_i, d)) = 0
\]
and $SW(d,d) - SW(s_i, d) \neq 0$ for $s_i \neq d$.
Hence
\[
\sup_{c  < s_i < d} \frac{p_i(s_i, d) - p_i(d,d)}{SW(d,d) - SW(s_i, d)} = \infty.
\]
The claim now follows by Theorem \ref{thm:characterization}~(i).
\end{proof}

\section{Extensions and Future Research Directions}
\label{sec:conclusions}

We introduced the selfishness level of a game as a new measure of
discrepancy between the social welfare in a Nash equilibrium and in a
social optimum.
Our studies reveal that the selfishness level often provides deeper insights into the characteristics that influence the players' willingness to cooperate.
We conclude by mentioning some natural extensions and future research directions.

\subsection{Extensions}

The definition of the selfishness level naturally extends to other solution concepts and other forms of games.  

\subsubsection{Mixed Nash Equilibria}

For mixed Nash equilibria we can simply adapt our definitions by stipulating that a
strategic game $G$ is $\alpha$-selfish if a \emph{mixed} Nash
equilibrium of $G(\alpha)$ is a social optimum, where now we also
allow social optima in mixed strategies. The selfishness level of $G$
is then defined as before in \eqref{eq:sel-lev}.

For example, with this notion the selfishness level of the Matching
Pennies game (Example~\ref{exa:matching}) is $0$ since its unique
mixed Nash equilibrium, $(\frac12 H + \frac12 T,\frac12 H + \frac12
T)$, is also a social optimum.  The Matching Pennies game has no pure
Nash equilibrium.  In contrast, the game from
{Example~\ref{exa:bad}} does have a pure Nash equilibrium. When we
use mixed Nash equilibria its selfishness level also becomes $0$.  So
in both games the selfishness level changed from $\infty$, when pure
Nash equilibria are used, to $0$, when mixed Nash equilibria are used.

Further, a finite selfishness level of a finite game can decrease when
we use mixed Nash equilibria. As an example consider the following
`amalgamation' of the Matching Pennies (with payoffs increased by 2)
and Prisoner's Dilemma \gs{(with payoffs increased by 1)} games:

\begin{center}
\begin{game}{4}{4}
       & $H$    & $T$    & $C$     & $D$\\
$H$   &$3,1$   &$1,3$   &$0,0$   &$0,0$ \\
$T$   &$1,3$   &$3,1$   &$0,0$   &$0,0$ \\
$C$   &$0,0$   &$0,0$   &$2,2$   &$0,3$\\
$D$   &$0,0$   &$0,0$   &$3,0$   &$1,1$
\end{game}
\end{center}


This game has a unique stable social optimum, $(C,C)$, and a unique
pure Nash equilibrium, $(D,D)$.  It is easy to check using Theorem
\ref{thm:characterization}~(ii) that its selfishness level is 1.  On
the other hand, when we use mixed Nash equilibria then the selfishness
level becomes 0.  Indeed, $(\frac12 H + \frac12 T,\frac12 H + \frac12
T)$ is both a mixed Nash equilibrium and a social optimum in mixed
strategies.

\subsubsection{Extensive Games}

We can also consider  extensive games and subgame perfect equilibria.
As an example consider the six-period version of the centipede game (see, e.g., \citeR{Osb05}):

  \begin{center}
    \leavevmode
    \psset{unit=.8}
    \begin{pspicture}(.8,2.4)(12,5.6)
    \ownnodeL{.1}{white}{0,5}{u1}{$1$}{90}
    \ownnodeL{.1}{white}{2,5}{u2}{$2$}{90}
    \ownnodeL{.1}{white}{4,5}{u3}{$1$}{90} 
    \ownnodeL{.1}{white}{6,5}{u4}{$2$}{90} 
    \ownnodeL{.1}{white}{8,5}{u5}{$1$}{90}         
    \ownnodeL{.1}{white}{10,5}{u6}{$2$}{90}               
    \ownnodeL{0}{white}{12,5}{u7}{$(6,5)$}{0}                   
    
    \ownnodeL{0}{white}{0,3}{v1}{$(1,0)$}{-90}
    \ownnodeL{0}{white}{2,3}{v2}{$(0,2)$}{-90}
    \ownnodeL{0}{white}{4,3}{v3}{$(3,1)$}{-90} 
    \ownnodeL{0}{white}{6,3}{v4}{$(2,4)$}{-90} 
    \ownnodeL{0}{white}{8,3}{v5}{$(5,3)$}{-90}         
    \ownnodeL{0}{white}{10,3}{v6}{$(4,6)$}{-90}             
   
    \ncline[nodesep=.1]{->}{u1}{u2}\taput{$C$}
    \ncline[nodesep=.1]{->}{u2}{u3}\taput{$C$}
    \ncline[nodesep=.1]{->}{u3}{u4}\taput{$C$}        
    \ncline[nodesep=.1]{->}{u4}{u5}\taput{$C$}            
    \ncline[nodesep=.1]{->}{u5}{u6}\taput{$C$}         
    \ncline[nodesep=.1]{->}{u6}{u7}\taput{$C$}             
    
    \ncline[nodesep=.1]{->}{u1}{v1}\trput{$S$}
    \ncline[nodesep=.1]{->}{u2}{v2}\trput{$S$}
    \ncline[nodesep=.1]{->}{u3}{v3}\trput{$S$}        
    \ncline[nodesep=.1]{->}{u4}{v4}\trput{$S$}            
    \ncline[nodesep=.1]{->}{u5}{v5}\trput{$S$}                    
    \ncline[nodesep=.1]{->}{u6}{v6}\trput{$S$}                        
    \end{pspicture}
    \end{center}

In its unique subgame perfect equilibrium each player chooses $S$ in every period and the 
resulting payoffs are $(1,0)$. In contrast, the social optimum is obtained when each player
chooses $C$ in every period and the resulting payoffs are $(6,5)$. We seek 
$\alpha$ such that in the resulting game $G(\alpha)$ the latter 
pair of strategies forms a subgame perfect equilibrium. In particular, player 2 should choose in the
last round of $G(\alpha)$ the action $C$. This happens when 
$5 + (6+5)\alpha \geq 6 + (4+6)\alpha$ which holds iff $\alpha \geq 1$.
Now, for $\alpha = 1$ the game $G(\alpha)$ has the following payoffs:

  \begin{center}
    \leavevmode
    \psset{unit=.8}
    \begin{pspicture}(.8,2.4)(12,5.6)
    \ownnodeL{.1}{white}{0,5}{u1}{$1$}{90}
    \ownnodeL{.1}{white}{2,5}{u2}{$2$}{90}
    \ownnodeL{.1}{white}{4,5}{u3}{$1$}{90} 
    \ownnodeL{.1}{white}{6,5}{u4}{$2$}{90} 
    \ownnodeL{.1}{white}{8,5}{u5}{$1$}{90}         
    \ownnodeL{.1}{white}{10,5}{u6}{$2$}{90}               
    \ownnodeL{0}{white}{12,5}{u7}{$(17,16)$}{0}                   
    
    \ownnodeL{0}{white}{0,3}{v1}{$(2,1)$}{-90}
    \ownnodeL{0}{white}{2,3}{v2}{$(2,4)$}{-90}
    \ownnodeL{0}{white}{4,3}{v3}{$(7,5)$}{-90} 
    \ownnodeL{0}{white}{6,3}{v4}{$(8,10)$}{-90} 
    \ownnodeL{0}{white}{8,3}{v5}{$(13,11)$}{-90}         
    \ownnodeL{0}{white}{10,3}{v6}{$(14,16)$}{-90}             
   
    \ncline[nodesep=.1]{->}{u1}{u2}\taput{$C$}
    \ncline[nodesep=.1]{->}{u2}{u3}\taput{$C$}
    \ncline[nodesep=.1]{->}{u3}{u4}\taput{$C$}        
    \ncline[nodesep=.1]{->}{u4}{u5}\taput{$C$}            
    \ncline[nodesep=.1]{->}{u5}{u6}\taput{$C$}         
    \ncline[nodesep=.1]{->}{u6}{u7}\taput{$C$}             
    
    \ncline[nodesep=.1]{->}{u1}{v1}\trput{$S$}
    \ncline[nodesep=.1]{->}{u2}{v2}\trput{$S$}
    \ncline[nodesep=.1]{->}{u3}{v3}\trput{$S$}        
    \ncline[nodesep=.1]{->}{u4}{v4}\trput{$S$}            
    \ncline[nodesep=.1]{->}{u5}{v5}\trput{$S$}                    
    \ncline[nodesep=.1]{->}{u6}{v6}\trput{$S$}                        
    \end{pspicture}
    \end{center}

So in this game the pair of strategies in which each player
chooses $C$ in every period is both a subgame perfect equilibrium and a 
social optimum and yields the payoffs $(17,16)$. 
We conclude that the (appropriately
adapted) selfishness level for this game is 1.

We leave for future work the study of such alternatives. 

\subsection{Future Research Directions}
\label{subsec:future}

\gs{There are several intriguing questions that we left open. We discuss a few future research directions below.}

\subsubsection{Abstract Games}

It would be interesting to define the notion of a selfishness level for
\bfe{abstract games}.  These are games in which the payoffs
are replaced by preference relations (see \citeR{OR94}).  By a \bfe{preference
  relation} on a set $A$ we mean here a linear ordering on $A$.  More
precisely, an abstract game is defined as 
$
(N, \{S_i\}_{i \in N}, \{ \succeq_i \}_{i \in N})
$ 
where each $\succeq_i$ is player's $i$
preference relation defined on the set $S_1
\times \dots \times S_n$ of joint strategies.
By a \bfe{realization} of an abstract game $(N, \{S_i\}_{i \in N}, \{ \succeq_i \}_{i \in N})$
we mean any strategic game $(N, \{S_i\}_{i \in N}, \{ p_i \}_{i \in N})$ such that for all $i \in N$
and $s, s' \in S_1 \times \dots \times S_n$ we have
$s \succeq_i s'$ iff $p_i(s) \succeq_i p_i(s')$.

Unfortunately, it is not clear how to do this. First, note that
the notion of a Nash equilibrium is well defined for abstract
games. However, there is no counterpart of the notion of a social
optimum, since there is no `global' preference relation on the set of
joint strategies.  

It is tempting to circumvent this difficulty by defining the notion of a
selfishness level of an abstract game $G$ using its realizations $G'$
and the corresponding games $G'(\alpha(G'))$, where $\alpha(G')$ is
the selfishness level of $G'$. Unfortunately the resulting 
strategic games $G'(\alpha(G'))$, where $G'$ is a realization of $G$ are not
realizations of a single abstract game, so this `detour' does not allow
us to associate with the initial abstract game another one.

As an example take two realizations of the abstract Prisoner's Dilemma 
game and the corresponding games $G'(\alpha(G'))$:
\begin{center}
\begin{game}{2}{2}
       & $C$    & $D$\\
$C$   &$2,2$   &$0,3$\\
$D$   &$3,0$   &$1,1$
\end{game}
\hspace*{1cm}
\begin{game}{2}{2}
       & $C$    & $D$\\
$C$   &$6,6$   &$3,6$\\
$D$   &$6,3$   &$3,3$
\end{game}
\end{center}

\begin{center}
\begin{game}{2}{2}
       & $C$    & $D$\\
$C$   &$2,2$   &$0,3$\\
$D$   &$2.5,0$   &$1,1$
\end{game}
\hspace*{1cm}
\begin{game}{2}{2}
       & $C$    & $D$\\
$C$   &$6,6$   &$3,6$\\
$D$   &$5,2.5$   &$3,3$
\end{game}
\end{center}

So both realizations have the selfishness level 1 but the transformed
games do not correspond to the same abstract game, since in the first
transformed game we have $p_2(D,C) \geq p_2(D,D)$, while in the second
one $p_2(D,D) > p_2(D,C)$.

\subsubsection{Selfishness Function}
In our approach we assigned to each game a positive real number, its selfishness level. A natural generalization of this idea would be to assign to each game $G$ the function $f_{G}: \mathbb{R}_+ \to \mathbb{R}_+$, where $f(\alpha)$ equals the price of stability of the game $G(\alpha)$. 
Then the selfishness level of $G$ is $\inf \{ \alpha \in \mathbb{R}_+ \; \mid \; f_{G}(\alpha) = 1\}$.

{The function $f_G$ has been studied for altruistic extensions of linear congestion games and fair cost sharing games \cite{CKKS11,EMAN10}. However, in these papers only upper bounds on $f_G$ are derived, which in light of the results obtained here cannot be tight. It would be interesting to determine $f_G$ exactly for these games. This would probably require a generalization of the characterization result presented in this paper.}

\subsubsection{Alternative Approach Based on the Price of Anarchy}

We defined the selfishness level of a game as the smallest $\alpha$
such that the price of stability of $G(\alpha)$ is $1$. Alternatively,
one might define the selfishness level as the smallest $\alpha$ such
that the price of anarchy of $G(\alpha)$ is $1$.  This alternative
approach often yields the value $\infty$.  Take for instance the
following coordination game $G$:

\begin{center}
\begin{game}{2}{2}
       & $A$    & $B$\\
$A$   &$1,1$   &$0,0$\\
$B$   &$0,0$   &$0,0$
\end{game}
\end{center}
Then for every $\alpha \geq 0$ \ $(A,A)$ is a social optimum in
$G(\alpha)$ with the social welfare $2 + 4\alpha$, while $(B,B)$ is a
Nash equilibrium in $G(\alpha)$ with the social welfare 0.  So this
alternative selfishness level of the game $G$ is $\infty$, while the original
selfishness level is of course 0.

As another example consider the game $G$ below
left and the corresponding game $G(\alpha)$ below right:

\begin{center}
\begin{game}{2}{2}
       & $A$    & $B$\\
$A$   &$1,1$   &$3,0$\\
$B$   &$0,3$   &$0,0$
\end{game}
\hspace*{1cm}
\begin{game}{2}{2}
       & $A$    & $B$\\
$A$   &$1 + 2\alpha, 1 +2\alpha$   &$3 + 3\alpha, 3 + 3\alpha$\\
$B$   &$3\alpha, 3 + 3\alpha$   &$0,0$
\end{game}
\end{center}
Its selfishness level is 1, since this is the smallest value $\alpha$
for which $(A,B)$ is a Nash equilibrium in $G(\alpha)$. On the other
hand, if we focus on the price of anarchy, then we need to choose the smallest
$\alpha$ such that $(A,A)$ is not a Nash equilibrium in $G(\alpha)$
while $(A,B)$ is. This is the case iff 
$3\alpha > 1 + 2\alpha$, i.e., when $\alpha > 1$. So this alternative
selfishness level of the game $G$ is $1^+$.

In view of these examples we find this alternative approach not very promising.
Still, it might be interesting to clarify for which games it yields
finite values.

\subsubsection{Alternative Approach Based on Approximate Nash Equilibria}

As mentioned in the related work section, an alternative approach to measure the stability of equilibria of a game is the following. Given a payoff-maximization game $G = (N, \{S_i\}_{i \in N}, \{p_i\}_{i \in N})$, we call $G$ \bfe{$\varepsilon$-stable} for some $\varepsilon \ge 0$ if there exists a social optimum $s$ that is also a \emph{$(1+\varepsilon)$-approximate} Nash equilibrium, i.e., for every player $i \in N$ and every $s'_i \in S_i$, $(1+\varepsilon) p_i(s) \ge p_i(s'_i, s_{-i})$.\footnote{\gs{F}or cost-minimization games, we require that $c_i(s) \le
(1+\varepsilon) c_i(s'_i, s_{-i})$.}
We define the \bfe{stability level} of $G$ as the infimum over all $\varepsilon \ge 0$ such that $G$ is $\varepsilon$-stable. 
Intuitively, a stability level of $\varepsilon$ means that if we alter the players' incentives by scaling their payoffs by a factor of $(1+\varepsilon)$ then a social optimum is realized as a Nash equilibrium. 

It would be interesting to study how the stability level of a game relates to its selfishness level. Using the above definitions, it is easy to see that when a game $G$ admits a social optimum $s$ such that for every player $i \in N$ and every $s'_i \in S_i$, $p_i(s) \ge SW(s) - SW(s'_i, s_{-i})$, then 
$G$ is $\alpha$-stable if $G$ is $\alpha$-selfish.\footnote{\gs{F}or cost-minimization games, this inequality reads $c_i(s'_i, s_{-i}) \ge SC(s'_i, s_{-i}) - SC(s)$.} Said differently, the stability level of $G$ is at most its selfishness level. Similarly, when the reverse inequality holds then $G$ is $\varepsilon$-selfish if $G$ is $\varepsilon$-stable.

In particular, the above observation can be applied to fair cost sharing games, where for every joint strategy $s$ it holds that for every $i \in N$ and every $s'_i \in S_i$, $c_i(s'_i, s_{-i}) \ge SC(s'_i, s_{-i}) - SC(s)$ (see also \eqref{eq:diff-SC} in Section~\ref{exa:cost-sharing}). We conclude that the stability level of a fair cost sharing game $G$ is at most its selfishness level. As a consequence, our bounds on the selfishness level derived in Section~\ref{exa:cost-sharing} extend to the stability level in this case. Further, it is not hard to verify that the stability level for singleton cost sharing games is at least $\max\{0, \frac{1}{2} c_{\max}/c_{\min} -1\}$ and for cost sharing games with integer costs is at least $\max \{0, \frac{1}{2} L c_{\max} - 1\}$ by considering the examples given in the proofs of Proposition~\ref{prop:cost-sharing} and Proposition~\ref{prop:cost-sharing-general}, respectively. Thus, for these games the stability level coincides with the selfishness level. 

However, it can be seen that these two notions do not always coincide. The public goods game is another example where it holds that there exists a social optimum $s$ such that for every player $i \in N$ and every $s'_i \in S_i$, $p_i(s) \ge SW(s) - SW(s'_i, s_{-i})$ (see proof of Proposition~\ref{prop:public-good}). Thus, the stability level of this game is at most the selfishness level. In fact, simple calculations show that the stability level is $\max\{0, (1-\frac{c}{n})/c\} \le \max\{0, (1-\frac{c}{n})/(c-1)\}$, where the latter is the selfishness level of the game.

We leave it for future work to further investigate the stability level and its relation to the selfishness level.

\subsubsection{Other Social Welfare Functions}

In this paper we exclusively concentrated on social welfare functions which are defined as the sum of the individual payoffs of the players. We leave it for future research to study the selfishness level of games for other social welfare functions, e.g., maximizing the minimum payoff of all players.

\section*{Acknowledgements}
We acknowledge initial discussions with Po-An Chen and final discussions with Valerio Capraro. We also thank anonymous reviewers of the preliminary version for their valuable comments. We are particularly grateful to all three reviewers of JAIR for the most helpful remarks.

\nocite{FKKV13,AMT08,JF09,FT09}
\bibliography{e}
\bibliographystyle{theapa}

\end{document}